\colorlet{shadecolor}{blue!20}
\newtheorem{fact}[theorem]{Fact}
\numberwithin{theorem}{section}
\numberwithin{lemma}{section}
\numberwithin{proposition}{section}
\numberwithin{corollary}{section}
\numberwithin{fact}{section}
\numberwithin{remark}{section}
\keywords{Boolean functions, Decision trees, rank, certificate complexity, sparsity, iterated composition} 
\newcommand{\bool}{\{0,1\}}
\newcommand{\boolset}[1]{\bool^{#1}}
\newcommand{\boolfn}[1]{\bool^{#1}\longrightarrow \bool}
\newcommand{\Rank}{\textrm{Rank}}
\newcommand{\Value}{\textrm{Value}}
\newcommand{\AValue}{\textrm{ASym-Value}}
\newcommand{\SearchR}{\textrm{SearchR}}
\newcommand{\Depth}{\textrm{Depth}}
\newcommand{\DepthAO}{\textrm{Depth}_{\wedge,\vee}}
\newcommand{\DepthGA}{\textrm{Depth}_{\bar{\wedge}}}
\newcommand{\wDepth}{\textrm{Depth}_w}
\newcommand{\Size}{\textrm{DTSize}}
\newcommand{\dnf}{\textrm{DNF}}
\newcommand{\spar}{\textrm{spar}}
\newcommand{\sparn}{\tilde{\textrm{spar}}}
\newcommand{\skill}{\textrm{K}}
\newcommand{\codim}{\textrm{co-dim}}
\newcommand{\CertZ}{\textrm{C}_0}
\newcommand{\CertO}{\textrm{C}_1}
\newcommand{\minCert}{\textrm{C}_{\min}}
\newcommand{\aCert}{\textrm{C}_{avg}}
\newcommand{\Cert}{\textrm{C}}
\newcommand{\Gap}{\textrm{Gap}}
\newcommand{\minGap}{\textrm{Gap}_{\min}}
\newcommand{\OR}{\mbox{{\sc Or}}}
\newcommand{\AND}{\mbox{{\sc And}}}
\newcommand{\CONJ}{\mbox{{\sc Conj}}}
\newcommand{\Parity}{\mbox{{\sc Parity}}}
\newcommand{\Tribes}{\mbox{{\sc Tribes}}}
\newcommand{\dTribes}{\mbox{{\sc Tribes}}^d}
\newcommand{\Maj}{\mbox{{\sc Maj}}}
\newcommand{\Thr}{\mbox{{\sc Thr}}}
\title{On (Simple) Decision Tree Rank}
\author{Yogesh Dahiya}{The Institute of Mathematical Sciences, Chennai, (A CI of Homi Bhabha National Institute HBNI), India}{yogeshdahiya@imsc.res.in}{}{}
\author{Meena Mahajan}{The Institute of Mathematical Sciences (HBNI), Chennai, India}{meena@imsc.res.in}{https://orcid.org/0000-0002-9116-4398}{}
\authorrunning{Y. Dahiya and M. Mahajan}
\begin{document}

\maketitle

\begin{abstract}
In the decision tree computation model for Boolean functions, the
depth corresponds to query complexity, and size corresponds to storage
space. The depth measure is the most well-studied one, and is known to
be polynomially related to several non-computational complexity
measures of functions such as certificate complexity. The size measure
is also studied, but to a lesser extent. Another decision tree measure
that has received very little attention is the minimal rank of the
decision tree, first introduced by Ehrenfeucht and  Haussler in 1989.
This measure is closely related to the logarithm of the size, but is not polynomially related to depth, and hence it can reveal additional
information about the complexity of a function. 
It is characterised by the value of a Prover-Delayer game first proposed
by Pudl{\'a}k and Impagliazzo
in the context of tree-like resolution proofs.

In this paper we
study this measure further. We obtain an upper bound on depth in terms of rank and Fourier sparsity. We obtain upper and lower bounds on rank in
terms of (variants of) certificate complexity. We also obtain upper
and lower bounds on the rank for composed functions in terms of the
depth of the outer function and the rank of the inner function. This allow us to easily recover known asympotical  lower bounds on logarithm of the size for Iterated AND-OR and Iterated 3-bit Majority. We compute the rank exactly for several natural functions and use them to
show that all the bounds we have obtained are tight. We also show that rank in the  simple decision tree model can be used to bound query complexity, or depth, in the more general conjunctive decision tree model.  
Finally, we improve upon the known size lower bound for the Tribes function and conclude  that in
the size-rank relationship for decision trees, obtained
by Ehrenfeucht and  Haussler, the upper bound for Tribes is asymptotically tight. 
\end{abstract}

\section{Introduction}
\label{sec:intro}
The central problem in Boolean function complexity is to understand
exactly how hard it is to compute explicit functions. The hardness
naturally depends on the computation model to be used, and depending
on the model, several complexity measures for functions have been
studied extensively in the literature. To name a few --  size and depth
for circuits and formulas, size and width for
branching programs, query complexity, communication complexity, length
for span programs, and so on. All of these are measures of the
computational hardness of a function.  There are also several ways to
understand hardness of a function intrinsically, independent of a
computational model. For instance, the sensitivity of a function, its
certificate complexity, the sparsity of its Fourier spectrum, its
degree and approximate degree, stability, and so on. Many bounds on
computational measures are obtained by directly relating them to
appropriate intrinsic complexity measures. See \cite{Jukna-BFCbook}
for a wonderful overview of this area. Formal definitions of relevant
measures appear in \cref{sec:prelim}.

Every Boolean function $f$ can be computed by a simple decision tree (simple in the sense that each node queries a single variable),
which is one of the simplest computation models for Boolean
functions. The most interesting and well-studied complexity measure in
the decision tree model is the minimal depth $\Depth(f)$, measuring
the query complexity of the function. This measure is known to be
polynomially related to several intrinsic measures: sensitivity, block
sensitivity, certificate complexity.  But there are also other
measures which reveal information about the function. The minimal size
of a decision tree, $\Size(f)$, is one such measure, which measures the storage
space required to store the function as a tree, and has received some
attention in the past.

A measure which has received relatively less attention is the minimal rank
of a decision tree computing the function, first defined and studied
in \cite{EH-IC1989}; see also \cite{ABDORU10}. In general, the rank of
a rooted tree (also known as its Strahler number, or Horton-Strahler
number, or tree dimension) measures its branching complexity, and is a
tree measure that arises naturally in a wide array of applications;
see for instance \cite{EsparzaLS-LATA14}. The rank of a
Boolean function $f$, denoted $\Rank(f)$, is the minimal rank of a decision tree computing
it. The original motivation for considering rank of decision trees was
from learning theory -- an algorithm, proposed in \cite{EH-IC1989},
and later simplified in \cite{Blum92}, shows that constant-rank decision
trees are efficiently learnable in Valiant's PAC learning framework \cite{Valiant-CACM84}. Subsequently, the rank measure has played an important role in
understanding the decision tree complexity of search problems over
relations \cite{pudlak2000lower,esteban2003combinatorial,Kullmann-ECCC-99} -- see more in the Related Work part below.  The special
case when the relation corresponds to a Boolean function is exactly
the rank of the function.  However, there is very little work focussing
on the context of, and exploiting the additional information from,
this special case. This is precisely the topic of this paper.

In this paper, we study how the rank of boolean functions relates
to other measures.  In contrast with $\Depth(f)$, $\Rank(f)$ is not
polynomially related with sensitivity or to certificate complexity
$\Cert(f)$, although it is bounded above by $\Depth(f)$.
Hence it can reveal additional information about the complexity of
a function over and above that provided by $\Depth$.  For instance, from several viewpoints, the $\Parity_n$ function is significantly harder than the $\AND_n$ function. But both of them have the same $\Depth$, $n$. However, $\Rank$ does reflect this difference in hardness, with $\Rank(\AND_n)=1$ and $\Rank(\Parity_n)=n$. 
On the other hand, rank is also already known
to characterise the logarithm of decision tree size (\Size), upto a $\log n$
multiplicative factor. 
Thus lower bounds on rank  give lower bounds on the space required to store a decision tree explicitly. (However, the $\log n$ factor is crucial; there is no dimension-free characterisation. Consider e.g.\ $\log \Size(\AND_n) = \Theta(\log n)$.) 

Our main findings can be summarised as follows:
\begin{enumerate}
  \item $\Rank(f)$ is equal to the value of the Prover-Delayer game of
    Pudl{\'a}k and Impagliazzo \cite{pudlak2000lower} played on the
    corresponding relation $R_f$ (\cref{thm:game-rank}). (This is implicit in earlier literature \cite{Kullmann-ECCC-99,esteban2003combinatorial}.)
  \item While $\Rank$ alone cannot give upper bounds on $\Depth(f)$,
    $\Depth(f)$ is bounded above by the product of $\Rank(f)$ and
    $1+\log\spar(f)$ (\cref{thm:rank-sparsity-depth}).
  \item $\Rank(f)$ is bounded between the minimum certificate
    complexity of $f$ at any point, and $(\Cert(f)-1)^2+1$;
    \cref{thm:rank-cert-bounds}. The upper bound
    (\cref{lem:rank-cert}) is an improvement on the bound inherited
    from $\Depth(f)$, and is obtained by adapting that construction.
  \item For a composed function $f\circ g$, $\Rank(f\circ g)$ is 
    bounded above and below by functions of $\Depth(f)$ and
    $\Rank(g)$; \cref{thm:compose-rank-bounds}. The main technique in
    both bounds (\cref{thm:compose-rank-ub,thm:compose-rank-lb}) is to
    use weighted  decision trees, as was used in the
    context of depth \cite{Montanaro-cj14}. For iterated composed functions,
    these bounds can be used recursively (\cref{corr:iterated-rank}), and
    can be used to easily recover known bounds on $\Rank$ for some functions
    (\cref{corr:examples}).
  \item The measures $\Rank$ and $\log\Size$ for simple decision trees
    sandwich the query complexity in the more general decision tree
    model where each node queries a conjunction of literals
    (\cref{thm:simple-conj-relation}).
  \item In the relation between $\Rank(f)$ and $\Size(f)$ from
    \cite{EH-IC1989}, the upper bound on $\log\Size$ is asymptotically
    tight for the $\Tribes$ function (\cref{sec:rank-size}).
\end{enumerate}
By calculating the exact rank for specific functions, we show that all
the bounds we obtain on rank are tight. We also describe optimal strategies for the Prover and Delayer, for those more familiar with that setting.

\paragraph*{Related work.}
A preliminary version of this paper, with some proofs omitted or
only briefly sketched, appears in the proceedings of the FSTTCS 2021
conference \cite{DM21}.

In \cite{KS-JCSS04} (Corollary 12), non-trivial learnability of
$s$-term DNFs is demonstrated.  The crucial result that allows this
learning is the transformation of the DNF expression into a polynomial
threshold function of not too large degree.  An important tool in the
transformation is the rank of a hybrid kind of decision tree; in these
trees, each node queries a single variable, while the subfunctions at
the leaves, though not necessarily constant, have somewhat small
degree.  The original DNF is fist converted to such a hyrid tree with
a bound on its rank, and this is exploited to achieve the full
conversion to low-degree polynomial threshold functions. This
generalises an approach credited in \cite{KS-JCSS04} to Satya Lokam. 

In \cite{ABDORU10}, a model called $k^+$-decision trees is considered, and the complexity is related to both simple decision tree rank and to communication complexity. In particular, Theorems 7 and 8 from \cite{ABDORU10} imply that communication complexity lower bounds with respect to any variable partition (see \cite{KN-CC-book}) translate to decision tree rank lower bounds, and hence by \cite{EH-IC1989} to decision tree size lower bounds. 

In \cite{TV97}, the model of linear decision trees is considered (here each node queries not a single variable but a linear threshold function of the variables), and for such trees of bounded rank computing the inner product function, a lower bound on depth is obtained. Thus for this function, in this model, there is a trade-off between rank and depth. In \cite{UT2015}, rank of linear decision trees is used in obtaining non-trivial upper bounds on depth-2 threshold circuit size. 

In \cite{pudlak2000lower}, a 2-player game is described, on an
unsatisfiable formula $F$ in conjunctive normal form, that constructs
a partial assignment falsifying some clause. The players are referred
to in subsequent literature as the Prover and the Delayer.  The value
of the game, $\Value(F)$, is the maximum $r$ such that the Delayer can
score at least $r$ points no matter how the Prover plays. It was shown
in \cite{pudlak2000lower} that the size of any tree-like resolution
refutation of $F$ is at least $2^{\Value(F)}$.  Subsequently, the
results of \cite{Kullmann-ECCC-99,esteban2003combinatorial} yield the
equivalence $\Value(F) = \Rank(F)$, where $\Rank(F)$ is defined to be
the minimal rank of the tree underlying a tree-like resolution
refutation of $F$. (Establishing this equivalence uses
refutation-space and tree pebbling as intermediaries.)  The relevance
here is because there is an immediate, and well-known, connection to
decision trees for search problems over relations: tree-like
resolution refutations are decision trees for the corresponding search
CNF problem. (See Lemma 7 in \cite{BIW04}).  Note that the size lower
bound from \cite{pudlak2000lower}, and the rank-value equivalence from
\cite{Kullmann-ECCC-99,esteban2003combinatorial}, hold for the 
search problem over arbitrary relations, not just searchCNF. (See e.g.\ 
Exercise 14.16 in Jukna for the size bound.) In particular, for
Boolean function $f$, it holds for the corresponding canonical
relation $R_f$ defined in \cref{sec:prelim}. Similarly, the value of an asymmetric variant of this game is known to characterise the size of a decision tree for the search CNF problem \cite{BGL13}, and this too holds for general relations and Boolean functions. 


\paragraph*{Organisation of the paper.}
After presenting basic definitions and known results in
\cref{sec:prelim}, we describe the Prover-Delayer game from
\cite{pudlak2000lower} in \cref{sec:game}, and observe that its value
equals the rank of the function. We also describe the asymmetric game from \cite{BGL13}. 
We compute the rank of some simple
functions in \cref{sec:simple-calc}. In \cref{sec:rank-rels}, we
describe the relation between rank, depth, Fourier sparsity,  and certificate complexity. In
\cref{sec:rank-composed}, we present results concerning composed
functions. In \cref{sec:application} we give two applications. Firstly, using our  rank lower bound result, we prove the tight $\log$ size lower bound. Secondly, we   prove a query lower bound in the $\CONJ$ decision tree model. In \cref{sec:rank-size} we examine the size-rank relationship for the $\Tribes$ function.
The bounds in \cref{sec:simple-calc,sec:rank-rels,sec:rank-composed,sec:rank-size} are all obtained by direct inductive
arguments/decision tree constructions. They can also be stated using
the equivalence of the game value and rank -- while this does not
particularly simplify the proofs, it changes the language of the
proofs and may be more accessible to the reader already familiar with
that setting. Hence we include such game-based arguments for our results in \cref{sec:game-proofs}.

\section{Preliminaries}
\label{sec:prelim}
\paragraph*{Decision trees}
For a Boolean function $f: \boolfn{n}$, a decision tree computing $f$ is a binary tree with internal nodes labeled by the variables and the leaves labelled by $\bool$. To evaluate a function on an unknown input, the process starts at the root of the decision tree and works down the tree, querying the variables at the internal nodes. If the value of the query is $0$, the process continues  in the the left subtree, otherwise it proceeds in the right subtree. The label of the leaf so reached is the value of the function on that particular input. A decision tree is said to be reduced if no variable is queried more than once on any root-to-leaf path. Without loss of generality, any decision tree can be reduced, so in our discussion, we will only consider reduced decision trees. The depth $\Depth(T)$ of a decision tree $T$ is the length of the longest root-to-leaf path,  and its size $\Size(T)$ is the number of leaves. The decision tree complexity or the depth of $f$, denoted by $\Depth(f)$, is defined to be the minimum depth of a decision tree computing $f$. Equivalently, $\Depth(f)$ can also be seen as the minimum number of worst-case queries required to evaluate $f$. The size of a function $f$, denoted by $\Size(f)$, is defined similarly i.e.\ the minimum size of a decision tree computing $f$. Since decision trees can be reduced, $\Depth(f) \le n$ and $\Size(f) \le 2^n$ for every $n$-variate function $f$. A function is said to be evasive  if its depth is maximal, $\Depth(f)=n$. 

\paragraph*{Weighted decision trees}
Weighted decision trees
describe query complexity in settings where querying different
input bits can have differing cost, and arises naturally in the
recursive construction. 
Formally, these are defined as follows: Let $w_i$ be the cost of querying variable $x_i$. For a decision tree $T$, its weighted depth with respect to the weight vector $[w_1,\ldots,w_n]$, denoted by $\wDepth(T,[w_1,w_2,...,w_n] )$,  is the maximal sum of weights of the variables specified by the labels of nodes of $T$ on any root-to-leaf  path. The weighted decision tree complexity of $f$, denoted by $\wDepth(f,[w_1,w_2,...,w_n] )$, is the minimum weighted depth of a decision tree computing $f$. Note that $\Depth(f)$ is exactly $\wDepth(f,[1,1,\ldots ,1])$. The following fact is immediate from the definitions.
\begin{fact}\label{fact:wtd-dec-tree}
  For any reduced decision tree $T$ computing an $n$-variate function,  weights $w_1, \ldots, w_n$, and  $i\in [n]$,
  \[
  \wDepth(T,[w_1,\ldots , w_{i-1}, w_i+1, w_{i+1}, \ldots   ,w_n] ) \le
  \wDepth(T,[w_1,w_2,...,w_n] ) +1.
  \]
\end{fact}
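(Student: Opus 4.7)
The plan is to prove the inequality directly from the definition of weighted depth, exploiting the reducedness of $T$. Fix any root-to-leaf path $P$ in $T$, and let $S(P)\subseteq[n]$ denote the set of indices of variables queried along $P$. Then by definition the contribution of $P$ to $\wDepth(T,[w_1,\ldots,w_n])$ is $\sum_{j\in S(P)} w_j$, and its contribution under the increased weight vector is $\sum_{j\in S(P)} w_j + [i\in S(P)]$, where the bracket is $1$ if $i\in S(P)$ and $0$ otherwise. Since $T$ is reduced, $x_i$ appears at most once on any root-to-leaf path, so $[i\in S(P)]\le 1$, and the contribution of $P$ grows by at most $1$.

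Having done this for every path individually, I would then take the maximum over all root-to-leaf paths on both sides. Since each per-path sum under the new weight vector is at most the old per-path sum plus $1$, and adding a constant commutes with taking the maximum (indeed $\max_P (a_P+c_P) \le \max_P a_P + \max_P c_P \le \max_P a_P + 1$), the maximum on the left is at most $\wDepth(T,[w_1,\ldots,w_n])+1$. That is exactly the claimed bound.

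There is essentially no obstacle here: the only subtlety is to invoke that $T$ is reduced so that variable $x_i$ occurs at most once per path; without that hypothesis, the weighted depth could increase by the number of occurrences of $x_i$ on a heaviest path, which can be arbitrarily large. Since the preliminaries already note that without loss of generality all decision trees can be assumed reduced, the hypothesis is harmless, and the fact is indeed immediate as the authors claim.
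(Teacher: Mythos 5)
Your proof is correct and matches the paper's intent: the paper states this fact as ``immediate from the definitions'' without a written proof, and your per-path argument (each root-to-leaf sum increases by at most $1$ because reducedness guarantees $x_i$ is queried at most once on any path, then take the maximum) is exactly the obvious formalization of that claim. Nothing further is needed.
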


\paragraph*{Certificate Complexity}
The certificate complexity of a function $f$, denoted $\Cert(f)$, measures the number of variables that need to be assigned in the worst case to fix the value of $f$. More precisely, for a Boolean function $f:\boolfn{n}$ and an input $a\in \boolset{n}$, an $f$-certificate of $a$ is a subset $S \subseteq \{1,...,n\}$ such that the value of $f(a)$ can be determined by just looking at the bits of $a$ in set $S$. Such a certificate need not be unique. Let $\Cert(f,a)$ denote the minimum size of an $f$-certificate for the input $a$. That is,
\[\Cert(f,a) = \min\left\{ |S| \mid S\subseteq [n]; \forall a'\in \bool^n,
\left[\left(a'_j=a_j \forall j\in S\right) \implies f(a')=f(a)\right]\right\}. \]
Using this definition, we can define several measures.
\begin{align*}
 \textrm{For ~}b\in\bool,~~ \Cert_b(f) & = \max\{ \Cert(f,a) \mid  a\in f^{-1}(b)\} \\
  \Cert(f) & = \max\{ \Cert(f,a) \mid  a\in \bool^n\} = \max\{\Cert_0(f),\Cert_1(f) \}\\
  \aCert(f) & = 2^{-n}\sum_{a\in \bool^n} \Cert(f,a) \\
  \minCert(f) & = \min\{ \Cert(f,a) \mid  a\in \bool^n\} 
\end{align*}

\paragraph*{Composed functions}
For boolean functions $f, g_1,g_2,\ldots ,g_n$ of arity $n, m_1, m_2, \ldots , m_n$ respectively, the composed function $f\circ(g_1,g_2,...,g_n)$ is a function of arity $\sum_i m_i$, and is defined as follows: for $a^i\in \bool^{m_i}$ for each $i\in n$, $f\circ(g_1,g_2,...,g_n)(a^1,a^2,...,a^n)=f(g_1(a^1),g_2(a^2),\ldots ,g_n(a^n))$. We call $f$ the outer function and $g_1,\ldots ,g_n$ the inner functions. For functions $f:\boolfn{n}$ and $g:\boolfn{m}$, the composed function $f\circ g$ is the function $f\circ (g,g,\ldots ,g):\boolfn{mn}$. The composed function $\OR_n\circ\AND_m$ has a special name, $\Tribes_{n,m}$, and when $n=m$, we simply write $\Tribes_n$.  Its dual is the function $\AND_n\circ\OR_m$ that we denote $\dTribes_{n,m}$. (The dual of $f(x_1, \ldots, x_n)$ is the function $\neg f(\neg x_1, \ldots, \neg x_n)$.)

For any function $f:\boolfn{n}$, that we will call the base function, the iterated composed function  $f^{\otimes k}:\boolfn{n^k}$ is recursively defined as   $f^{\otimes 1}=f$, $f^{\otimes k}=f\circ f^{\otimes (k-1)}$. The iterated composed functions for the base functions $\AND_2\circ \OR_2$ and $\Maj_3$ will interest us later.

\paragraph*{Symmetric functions}
A Boolean function is symmetric if its value depends only on the number of ones in the input, and not on the positions of the ones. 
\begin{proposition}\label{prop:symm_evasive}
  For every non-constant symmetric boolean function $f: \boolfn{n}$,
  \begin{enumerate}
  \item $f$ is evasive (has $\Depth(f)=n$). (See eg.\ Lemma 14.19
    \cite{Jukna-BFCbook}.)
  \item Hence, for any weights $w_i$, $\wDepth(f,[w_1,w_2,...,w_n] ))=\sum_i w_i$.
  \end{enumerate}
\end{proposition}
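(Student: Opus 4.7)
The plan is to treat the two parts in sequence: part (1) via a standard adversary argument that exploits symmetry, and then derive part (2) as an easy corollary using the reduced-tree convention.

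For part (1), since $f$ is non-constant and symmetric, its value depends only on Hamming weight, and there must exist some $k \in \{0,1,\ldots,n-1\}$ such that $f$ on weight-$k$ inputs differs from $f$ on weight-$(k+1)$ inputs. I would fix such a $k$ and design an adversary that plays against any decision tree $T$ computing $f$, maintaining the invariant that the answered queries are consistent both with some weight-$k$ completion and with some weight-$(k+1)$ completion. Concretely, after some queries, let $a$ be the number of answered $1$'s and $b$ the number of answered $0$'s; the invariant is $a \le k$ and $b \le n-k-1$, which exactly says that both target weights remain achievable among the unqueried variables. When a new variable is queried, the adversary answers $1$ if $a < k$, else $0$ (in which case $b < n-k-1$ because the invariant held). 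The invariant is preserved, and it can only fail to permit a further free choice once $a = k$ and $b = n-k-1$, i.e.\ after exactly $n-1$ queries. At that point the last unqueried variable alone decides whether the total weight is $k$ or $k+1$, and hence the value of $f$, forcing the algorithm to query it. Thus $\Depth(f) \ge n$, and the matching upper bound $\Depth(f) \le n$ is immediate.

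For part (2), the upper bound $\wDepth(f,[w_1,\ldots,w_n]) \le \sum_i w_i$ is witnessed by any tree that queries the variables in a fixed order; each root-to-leaf path has total weight $\sum_i w_i$. For the lower bound, let $T$ be any reduced decision tree computing $f$. By part (1) applied to $T$, there is a root-to-leaf path of length at least $n$. Since $T$ is reduced, no variable is queried twice on a path, so a path of length $n$ must query each of the $n$ variables exactly once; summing the corresponding weights gives $\sum_i w_i$, hence $\wDepth(T,[w_1,\ldots,w_n]) \ge \sum_i w_i$, and taking the minimum over $T$ closes the gap.

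The main obstacle is really only in part (1): getting the adversary invariant correct and verifying that it can be maintained for $n-1$ rounds. Once evasiveness is in hand, part (2) is essentially bookkeeping using the reducedness of decision trees, so no further ideas are needed beyond what is already in the preliminaries.
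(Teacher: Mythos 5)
Your proof is correct. The paper itself gives no argument here — it cites Lemma 14.19 of Jukna for part (1) and treats part (2) as an immediate consequence — and your adversary argument (keeping both a weight-$k$ and a weight-$(k+1)$ completion alive for $n-1$ answers, where $f_k\neq f_{k+1}$) is precisely the standard proof of that cited lemma, while your part (2), using reducedness so that any depth-$n$ path queries every variable once and hence has weight $\sum_i w_i$, matches the intended bookkeeping.
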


For a symmetric Boolean function $f: \boolfn{n}$,
let $f_0,f_1,...,f_n\in \bool$ denote the values of the function $f$ on inputs of Hamming weight $0,1,...,n$ respectively. The $\Gap$ of $f$ is defined as the length of the longest interval (minus one) where $f_i$  is constant. That is, 
\[\Gap(f) = \max_{0\leq a \leq b \leq n} \{b-a: f_a=f_{a+1}=...=f_b \}.\]
Analogously, $\minGap(f)$ is the length of the shortest constant interval (minus one); that is, setting $f_{-1}\neq f_0$ and $f_{n+1}\neq f_{n}$ for boundary conditions, 
\[\minGap(f)=\min_{0\leq a \leq b \leq n} \{b-a: f_{a-1}\neq f_a=f_{a+1}=...=f_b\neq f_{b+1} \}.\] 

\paragraph*{Fourier Representation of Boolean functions}
We include here some basic facts about Fourier representation relevant to our work. For a wonderful comprehensive overview of this area, see \cite{o2014analysis}. 
Consider the inner product space of functions $\mathcal{V}=\{f: \boolset{n}\longrightarrow \mathbb{R}\}$ with the inner product defined as 
\[
\langle f,g \rangle = \frac{1}{2^n} \sum_{x\in \boolset{n}}f(x)g(x).
\]

For $S\subseteq [n]$, 
the function $\chi_{S}: \boolset{n}\longrightarrow \{-1,1\}$ defined by $\chi_{S}(x)=(-1)^{\sum_{i\in S}x_i}$ is the $\pm 1$ parity of the bits in $S$ and therefore is referred to as a parity function. The set of all parity functions $\{\chi_S: S\subseteq[n]\}$ forms an orthonormal basis for $\mathcal{V}$. Thus, every function $f\in \mathcal{V}$, in particular boolean functions, has a unique representation $f=\sum_{S\subseteq[n]}\hat{f}(S)\chi_{S}$. The coefficients $\{\hat{f}(S): S\subseteq [n]\}$ are called the Fourier coefficients(spectrum) of $f$.
The Fourier sparsity of $f$, denoted by $\spar(f)$, is the number of non-zero Fourier coefficients in the expansion of $f$, i.e. $\lvert \{S \subseteq [n]: \hat{f}(S)\neq 0 \}\rvert$. It will be convenient for us to disregard the Fourier coefficient of the empty set. We therefore define $\sparn(f) = \lvert \{S \subseteq [n]: S\neq \emptyset; \hat{f}(S)\neq 0 \}\rvert$. For every $f$, $0\le \sparn(f) \le \spar(f) \le \sparn(f)+1$, and  only the constant functions have $\sparn=0$. 



Sparsity is related to decision tree complexity; large sparsity implies large depth.

\begin{proposition}[see Proposition 3.16 in \cite{o2014analysis}]\label{prop:depth-sparsity}
For a Boolean function $f:\boolfn{n}$, $\log \spar(f)\le \log \Size(f)+ \Depth(f)\le 2\Depth(f)$. 
\end{proposition}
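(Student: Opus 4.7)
The plan is to handle the two inequalities separately. The right-hand inequality $\log \Size(f) + \Depth(f) \le 2\Depth(f)$ is immediate: any decision tree of depth $d$ has at most $2^d$ root-to-leaf paths, so $\Size(f)\le 2^{\Depth(f)}$ and hence $\log \Size(f) \le \Depth(f)$.

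For the substantive inequality $\log \spar(f) \le \log \Size(f)+\Depth(f)$, I would fix a decision tree $T$ computing $f$ with $\Size(T)=\Size(f)$ leaves and $\Depth(T)=\Depth(f)$, and write $f$ as a sum of leaf indicators. Each leaf $\ell$ of $T$ corresponds to a root-to-leaf path that pins a set of variables $S_\ell\subseteq[n]$ with $|S_\ell|\le \Depth(f)$ to particular bits $a^\ell_i\in\bool$; let $b_\ell\in\bool$ be its label, and let $C_\ell$ be the subcube of inputs consistent with that path. Then
\[
f(x) \;=\; \sum_{\ell} b_\ell \cdot \mathbf{1}_{C_\ell}(x).
\]
Now I would compute the Fourier support of each subcube indicator. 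Using the identity $\mathbf{1}[x_i=a^\ell_i] = \tfrac{1}{2}\bigl(1 + (-1)^{a^\ell_i}\chi_{\{i\}}(x)\bigr)$ and multiplying over $i\in S_\ell$, expansion gives
\[
\mathbf{1}_{C_\ell}(x) \;=\; \frac{1}{2^{|S_\ell|}} \sum_{S\subseteq S_\ell} \varepsilon^\ell_S \,\chi_S(x),
\]
with signs $\varepsilon^\ell_S\in\{\pm 1\}$. Hence each $\mathbf{1}_{C_\ell}$ has Fourier support contained in $2^{S_\ell}$, of size at most $2^{\Depth(f)}$.

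Finally, since Fourier expansion is linear, the support of $f$ is contained in the union of the supports of the nonzero summands, so
\[
\spar(f) \;\le\; \sum_{\ell:\, b_\ell=1} 2^{|S_\ell|} \;\le\; \Size(f)\cdot 2^{\Depth(f)}.
\]
Taking logarithms yields $\log \spar(f) \le \log \Size(f)+\Depth(f)$, as required. There is no real obstacle here; the only content lies in the subcube-sparsity computation, and possible cancellations between leaves can only help the bound.
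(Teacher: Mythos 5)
Your subcube-indicator expansion and per-leaf support count is exactly the standard argument behind the cited Proposition 3.16 of \cite{o2014analysis} (the paper itself gives no proof, it only cites it), and it correctly establishes the single-tree bound $\spar(f)\le \Size(T)\cdot 2^{\Depth(T)}$ for any tree $T$, hence also the outer inequality $\log\spar(f)\le 2\Depth(f)$. The genuine gap is your very first step: you ``fix a decision tree $T$ with $\Size(T)=\Size(f)$ and $\Depth(T)=\Depth(f)$''. These two quantities are minima over possibly different trees, and nothing guarantees that a single tree attains both; a size-optimal tree may have depth exceeding $\Depth(f)$, and a depth-optimal tree may have more than $\Size(f)$ leaves. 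Applied to either optimal tree alone, your count gives $\Size(f)\cdot 2^{\Depth(T_s)}$ or $\Size(T_d)\cdot 2^{\Depth(f)}$, neither of which is the claimed middle quantity. Nor can you patch it by intersecting with the degree bound: from the size-optimal tree the support lies in $\bigcup_\ell 2^{S_\ell}$ and from the depth-optimal tree it contains only sets of size at most $\Depth(f)$, but a leaf fixing $|S_\ell|\gg \Depth(f)$ variables can still contribute far more than $2^{\Depth(f)}$ sets of size at most $\Depth(f)$.

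The statement is nevertheless true, and the repair stays close to your computation but decouples the two trees. From a size-optimal tree, your expansion gives the spectral-norm bound $\|\hat{f}\|_1 \le \sum_{\ell:\, b_\ell=1} \|\widehat{\mathbf{1}_{C_\ell}}\|_1 \le \Size(f)$, since each subcube indicator has $2^{|S_\ell|}$ coefficients of magnitude $2^{-|S_\ell|}$, i.e.\ spectral norm exactly $1$. From a depth-optimal tree, the same expansion shows every Fourier coefficient of $f$ is an integer multiple of $2^{-\Depth(f)}$ (each leaf contributes coefficients that are $0$ or $\pm 2^{-|S_\ell|}$ with $|S_\ell|\le \Depth(f)$), so every nonzero coefficient has magnitude at least $2^{-\Depth(f)}$. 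Combining the two facts, $\spar(f)\cdot 2^{-\Depth(f)}\le \|\hat{f}\|_1 \le \Size(f)$, which is $\log\spar(f)\le \log\Size(f)+\Depth(f)$. Your argument for the right-hand inequality $\log\Size(f)\le\Depth(f)$ is fine as written.
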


In our discussion, we will be interested in the effect of restrictions on the Fourier representation of a function. Of particular interest to us will be restrictions to subcubes. A subcube is a set of all inputs consistent with a partial assignment of $n$ bits. Formally, a subcube $J$ is a partial assignment (to some of the $n$ variables) defined by $(S,\rho)$ where $S\subseteq [n]$ is the set of input bits fixed by $J$ and $\rho: S \longrightarrow \bool$ is the map according to which the bits in $S$ are fixed. A subcube is a special type of affine subspace; hence, inheriting notation from subspaces, for $J=(S,\rho)$, the cardinality of $S$ is called the co-dimension of $J$, and is denoted by $\codim(J)$. A function $f:\boolfn{n}$ restricted to $J=(S,\rho)$ is the function $f|J:\bool^{n- |S|} \longrightarrow \bool$ obtained by fixing variables in $S$ according to $\rho$. 
The following result quantifies the effect on Fourier spectrum of subcube restriction. 
\begin{restatable}{theorem}{thmrestsparsity}[\cite{ShpilkaTalVolk17,MandeSanyal-FSTTCS20}]
  \label{thm:rest-sparsity}
  Let $f$ be any Boolean function  $f:\boolfn{n}$. 
   Fix any $S \subseteq [n]$, $S \neq \emptyset$.  If $f|(S,\rho)$ is a constant, then for every $\rho': S \longrightarrow \bool$,
 $\sparn(f|(S,\rho')) \le \sparn(f)/2$.  
\end{restatable}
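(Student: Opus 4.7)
The plan is to expand the Fourier coefficients of the restricted function in terms of those of $f$, group the non-zero support of $\hat f$ according to intersection with $[n]\setminus S$, and use the constancy hypothesis to force at least a two-to-one collapse.

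First I would recall the standard ``subcube restriction'' formula: for any $T \subseteq [n]\setminus S$ and any $\sigma : S \longrightarrow \bool$,
\[
\widehat{f|(S,\sigma)}(T) \;=\; \sum_{U \subseteq S} \hat{f}(T \cup U)\,\chi_U(\sigma),
\]
which follows by substituting $\chi_{T \cup U}(x) = \chi_T(y)\,\chi_U(\sigma)$ into $f = \sum_V \hat f(V)\chi_V$ and grouping terms, where $y$ denotes the free variables and $\chi_U(\sigma) = (-1)^{\sum_{i\in U}\sigma_i} \in \{-1,+1\}$. Then I would define, for each $T \subseteq [n]\setminus S$, the set $\mathcal{F}_T = \{U \subseteq S : \hat{f}(T \cup U) \neq 0\}$, so that the partition of the non-empty support of $\hat f$ by $T = V \cap ([n]\setminus S)$ yields $\sparn(f) \geq \sum_{T \neq \emptyset} |\mathcal{F}_T|$.

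The key step uses the hypothesis. Since $f|(S,\rho)$ is constant, its only possibly non-zero Fourier coefficient is at $T = \emptyset$, so for every non-empty $T \subseteq [n]\setminus S$,
\[
\sum_{U \subseteq S} \hat{f}(T \cup U)\,\chi_U(\rho) \;=\; 0.
\]
As the coefficients $\chi_U(\rho)$ lie in $\{-1,+1\}$, such a signed sum can vanish only if either all summands vanish or at least two are non-zero. Hence for every non-empty $T$, either $\mathcal{F}_T = \emptyset$ or $|\mathcal{F}_T| \geq 2$.

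To conclude, let $N$ denote the number of non-empty $T$ with $\mathcal{F}_T \neq \emptyset$. Combining the previous observations gives $\sparn(f) \geq 2N$. On the other hand, applying the restriction formula with $\sigma = \rho'$ shows that $\widehat{f|(S,\rho')}(T)$ can be non-zero only when $\mathcal{F}_T \neq \emptyset$, so $\sparn(f|(S,\rho')) \leq N \leq \sparn(f)/2$. The only piece of care is handling $T = \emptyset$: the expression $\widehat{f|(S,\rho')}(\emptyset)$ is a constant coefficient that is deliberately excluded by the definition of $\sparn$, so it is dropped from the counting; this is precisely why the result is stated with $\sparn$ rather than $\spar$, and is the main (mild) subtlety of the argument.
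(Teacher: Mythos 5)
Your proposal is correct and follows essentially the same argument as the paper: group the Fourier support of $f$ by intersection with $[n]\setminus S$ (your $\mathcal{F}_T$ are exactly the paper's buckets $B_T$), use constancy of $f|(S,\rho)$ plus uniqueness of the Fourier expansion to force each non-empty bucket with non-empty $T$ to have size at least $2$, and then bound $\sparn(f|(S,\rho'))$ by the number of such buckets. The treatment of the $T=\emptyset$ term via $\sparn$ matches the paper as well, so there is nothing to add.
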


This lemma follows from \cite{ShpilkaTalVolk17} (in the proof of Theorem 1.7 there) and \cite{MandeSanyal-FSTTCS20} (see the discussion in Sections 2.1 and 3.1 there). Both papers consider affine subspaces, of which subcubes are a special case. Since  the result is not explicitly stated in this form in either paper, for completeness we give a proof  for the subcubes case in the appendix.

The subcube kill number of $f$, denoted by $\skill(f)$, measures a largest subcube over which $f$ is constant, and is defined as 
\[
\skill(f)=\min \{\codim(J)| f|J \text{ is constant} \}.
\]

\paragraph*{Decision Tree Rank}
For a rooted binary tree $T$, the rank of the tree is the rank of the root
node, where the rank of each node of the tree is defined recursively
as follows:
For a leaf node $u$, $\Rank(u)=0$.
For an internal node $u$  with children $v,w$,
\[
\Rank(u) = \left\{
\begin{array}{ll} \Rank(v) + 1 & \textrm{~~if~} \Rank(v)=\Rank(w) \\
\max\{\Rank(v),\Rank(w)\} & \textrm{~~if~} \Rank(v)\neq \Rank(w) \\
\end{array}
\right.
\]
The following proposition lists some known properties of the rank function
for binary trees. 
\begin{proposition}\label{prop:prop_rank_tree}
  For any binary tree $T$, 
\begin{enumerate}
\item \label{item-rank-size} (Rank and Size relationship): $\Rank(T) \le \log(\Size(T)) \le \Depth(T)$. 
\item \label{item-monotonicity} (Monotonicity of the Rank): Let $T'$ be any subtree of $T$, and let $T''$ be an arbitrary binary tree of higher rank than $T'$. If  $T'$ is replaced by $T''$ in $T$, then the rank of the resulting tree is not less than the rank of $T$.
\item \label{item-leaf-depth-rank} (Leaf Depth and Rank): If all leaves in $T$ have depth at least $r$, then $\Rank(T)\ge r$.
\end{enumerate}   
\end{proposition}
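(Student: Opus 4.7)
The plan is to prove each part by structural induction, using the recursive definition of rank. Part 1 splits into two inequalities. The bound $\log \Size(T) \le \Depth(T)$ is immediate because a full binary tree of depth $d$ has at most $2^d$ leaves (induct on $d$). For $\Rank(T) \le \log \Size(T)$, I will show by induction on $T$ that if $\Rank(T) = r$ then $\Size(T) \ge 2^r$. A leaf gives the base case $r = 0$. For an internal node $u$ with children $v, w$: if $\Rank(v) = \Rank(w) = r - 1$, the inductive hypothesis gives each subtree $\ge 2^{r-1}$ leaves, hence $\ge 2^r$ leaves in total; otherwise the recursion sets $\Rank(u) = \max(\Rank(v), \Rank(w)) = r$, and the larger subtree already has $\ge 2^r$ leaves by induction.

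For Part 2, I will induct on the depth $k$ of the root of $T'$ in $T$. The base case $k = 0$ is trivial since then $T' = T$ and $T$ becomes $T''$, whose rank exceeds $\Rank(T')$. For the inductive step, write $u$ for the root of $T'$, $p$ for its parent, and $T_v$ for the sibling subtree, and set $r = \Rank(T')$, $r'' = \Rank(T'') > r$, and $s = \Rank(T_v)$. A short case analysis on the three orderings of $r$ and $s$ shows that the rank of $p$ after the replacement is at least the rank of $p$ before. For instance, when $r = s$ the old rank of $p$ is $r + 1$; if the new value $r''$ equals $s$, the new rank of $p$ is again $s + 1 = r + 1$, while if $r'' > s$ the new rank of $p$ equals $r''$, which is an integer strictly greater than $r = s$ and therefore at least $r + 1$. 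The cases $r < s$ and $r > s$ are handled in the same style. Once this local fact is established, the overall replacement at $u$ induces a replacement of the subtree rooted at $p$ by one of rank $\ge$ its old rank; since $p$ has depth $k - 1$, the inductive hypothesis finishes.

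For Part 3, I will induct on $r$. The case $r = 0$ is vacuous as ranks are non-negative. For $r \ge 1$, the hypothesis that every leaf of $T$ has depth $\ge r \ge 1$ implies $T$ is not a single leaf, so it has two subtrees $T_L$ and $T_R$; each leaf of these subtrees has depth $\ge r - 1$ within that subtree, so the inductive hypothesis yields $\Rank(T_L), \Rank(T_R) \ge r - 1$. If both equal $r - 1$, the recursion gives $\Rank(T) = r$; otherwise one of them is already $\ge r$, so the $\max$-rule yields $\Rank(T) \ge r$.

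The only place where something nontrivial happens is the case analysis in Part 2, because the rank recursion behaves qualitatively differently depending on whether the two children have equal rank; the main task is to verify that no subcase lets the rank of the parent drop. Parts 1 and 3 are short inductions, and the key observation in each is exactly the clause in the recursion that adds $1$ when the two children's ranks coincide.
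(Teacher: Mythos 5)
The paper itself offers no proof of this proposition: it is stated as a list of known properties of tree rank (inherited from the earlier literature), so there is nothing to compare line-by-line; your inductions supply exactly what the paper omits, and they are sound. Part 1 (counting $2^{\Rank}$ leaves via the ``$+1$ when the children's ranks coincide'' clause) and Part 3 are the standard short arguments and are correct as written. The one wrinkle is in Part 2: your induction hypothesis, following the statement, assumes $T''$ has \emph{strictly} higher rank than $T'$, but your local case analysis at the parent $p$ only guarantees that the subtree rooted at $p$ is replaced by one of rank \emph{at least} its old rank, possibly equal (e.g.\ $r<s$ and $r''\le s$ gives new rank $=s=$ old rank), so the IH as stated does not literally apply in that subcase. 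This is easily repaired in either of two ways: strengthen the induction to ``replacement by a subtree of rank at least as high does not decrease the rank of $T$'' (your case analysis already proves this non-strict version), or observe that the rank of $T$ depends on the subtree at $p$ only through its rank, so an equal-rank replacement leaves $\Rank(T)$ unchanged and only the strict case needs the IH. A second, purely cosmetic point: in Part 3, when the two subtree ranks are equal but both exceed $r-1$, it is the ``$+1$'' rule rather than the max-rule that fires; the conclusion still holds because $\Rank$ of a node is always at least the rank of each child.
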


For a Boolean function $f$, the rank of $f$, denoted $\Rank(f)$, is
the minimum rank of a decision tree computing $f$. 

From \cref{prop:prop_rank_tree}(\ref{item-monotonicity}), we see that
the rank of a subfunction of $f$ (a function obtained by assigning
values to some variables of $f$) cannot exceed the rank of the function itself.
\begin{proposition}\label{prop:rank_subfn}
(Rank of a subfunction): Let $f_S$ be a subfunction obtained by fixing the values of variables in some set $S\subseteq [n]$ of $f$. Then $\Rank(f_S) \le \Rank(f)$. 
\end{proposition}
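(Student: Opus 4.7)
The plan is to take an optimal decision tree $T$ for $f$ with $\Rank(T)=\Rank(f)$, explicitly construct a decision tree $T'$ that computes $f_S$ by pruning $T$ according to the fixed assignment $\rho: S\to\bool$, and then prove by structural induction that $\Rank(T')\le \Rank(T)$. Once this is done we immediately get $\Rank(f_S)\le \Rank(T')\le \Rank(T)=\Rank(f)$.

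The construction of $T'$ is the natural one. Walking down $T$, whenever a node queries a variable $x_i$ with $i\in S$, replace that node by its child corresponding to $\rho(i)$ and recurse into that child; whenever a node queries a variable $x_i$ with $i\notin S$, keep the query and recurse into both children with the same $\rho$; and leaves are left unchanged. Since any input $y$ consistent with $\rho$ traces the same path through $T'$ as the full input $(y,\rho)$ does through $T$ (up to skipping the queried bits of $S$), the leaf reached in $T'$ carries the value $f(y,\rho)=f_S(y)$, so $T'$ computes $f_S$.

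It then remains to show $\Rank(T')\le \Rank(T)$, which I prove by induction on the structure of $T$. The leaf case is trivial. For the inductive step, let the root $u$ of $T$ have children $v,w$ with subtrees $T_v,T_w$ and let $T_v',T_w'$ be their pruned versions. If $u$ queries $x_i$ with $i\notin S$, then $T'$ has root still querying $x_i$ with children $T_v',T_w'$; by the induction hypothesis $\Rank(T_v')\le \Rank(v)$ and $\Rank(T_w')\le \Rank(w)$, and one checks from the recursive definition of rank that if we decrease the ranks of two children of a node, the rank of the node does not increase. If $u$ queries $x_i$ with $i\in S$, then $T'$ equals $T_v'$ or $T_w'$ depending on $\rho(i)$, which by the induction hypothesis has rank at most $\max\{\Rank(v),\Rank(w)\}\le \Rank(u)$.

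The only step requiring genuine care is the monotonicity claim used above, namely that the node rank formula is monotone nondecreasing in the two child ranks. This is a short case analysis: if $r_v'\le r_v$ and $r_w'\le r_w$, then either $r_v'=r_w'$ (giving node rank $r_v'+1\le \max\{r_v,r_w\}+1$, and this bound matches what $u$'s rank would be whether $r_v=r_w$ or not), or $r_v'\ne r_w'$ (giving node rank $\max\{r_v',r_w'\}\le \max\{r_v,r_w\}\le \Rank(u)$). So this is routine; no serious obstacle arises.
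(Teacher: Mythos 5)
Your proposal is correct and is essentially the paper's own argument: the paper derives this proposition by pruning a rank-optimal tree for $f$ and appealing to the monotonicity of tree rank under subtree replacement (\cref{prop:prop_rank_tree}(\ref{item-monotonicity})), which is exactly the induction you carry out explicitly. One small imprecision in your monotonicity case analysis: when $r_v'=r_w'$ the right justification is $r_v'+1\le\min\{r_v,r_w\}+1\le\Rank(u)$ (since $\Rank(u)=\max\{r_v,\,r_w,\,\min\{r_v,r_w\}+1\}$), whereas your stated bound $\max\{r_v,r_w\}+1$ exceeds $\Rank(u)$ when $r_v\neq r_w$ — but the fix is immediate and the claim itself is true.
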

The following rank and size relationship is known for boolean functions.
\begin{proposition}[Lemma 1 \cite{EH-IC1989}]\label{prop:rank_size}
  For a non-constant Boolean function $f: \boolfn{n}$,
$$\Rank(f)\le \log \Size(f) \le \Rank(f)\log \left(\frac{e n}{\Rank(f)}\right).$$  
\end{proposition}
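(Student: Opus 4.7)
The lower bound $\Rank(f)\le\log\Size(f)$ is essentially immediate from \cref{prop:prop_rank_tree}(\ref{item-rank-size}). Let $T^*$ be a decision tree computing $f$ with $\Size(T^*)=\Size(f)$. Then $\Rank(f)\le\Rank(T^*)\le\log\Size(T^*)=\log\Size(f)$, using that $\Rank(f)$ is the minimum of $\Rank(T)$ over all trees computing $f$.

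For the upper bound, the plan is to bound the number of leaves of any reduced decision tree on $n$ variables having rank at most $r$. Let $L(n,r)$ denote this maximum. I will prove by induction on $n+r$ that
\[
L(n,r) \;\le\; \sum_{i=0}^{r}\binom{n}{i}.
\]
The base cases are $L(n,0)=1$ (rank $0$ means a single leaf) and $L(0,r)=1$. For the inductive step, consider a reduced tree $T$ of rank at most $r$ on $n$ variables, with root querying some variable and subtrees $T_0,T_1$, each on at most $n-1$ variables (since $T$ is reduced). From the definition of rank, at most one of $T_0,T_1$ can have rank equal to $r$; otherwise the root's rank would strictly exceed $r$. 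Hence one subtree has rank at most $r$ and the other has rank at most $r-1$, giving
\[
L(n,r)\;\le\; L(n-1,r)+L(n-1,r-1),
\]
and the claimed binomial-sum bound follows via Pascal's identity.

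To conclude, I invoke the standard estimate $\sum_{i=0}^{r}\binom{n}{i}\le (en/r)^r$ valid for $1\le r\le n$. Now let $T$ be a rank-optimal decision tree computing $f$, so that $\Rank(T)=\Rank(f)=:r$. Since $f$ is non-constant, $r\ge 1$ and the tree depends on at most $n$ variables, so
\[
\Size(f)\;\le\;\Size(T)\;\le\; L(n,r)\;\le\;\Bigl(\tfrac{en}{r}\Bigr)^{r}.
\]
Taking logarithms yields $\log\Size(f)\le \Rank(f)\cdot\log\bigl(en/\Rank(f)\bigr)$.

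The only non-routine step is the recurrence for $L(n,r)$: one must observe that at a node of rank exactly $r$, the two children cannot both have rank $r$ (one has rank $r$ and the other at most $r-1$), while at a node of rank strictly less than $r$ both children trivially have rank at most $r-1$; in either case the leaf-count recurrence above holds. The rest is the familiar binomial identity and the estimate $\sum_{i\le r}\binom{n}{i}\le(en/r)^r$, so no serious obstacle is expected.
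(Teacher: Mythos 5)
Your proof is correct. The paper states \cref{prop:rank_size} as a known result, citing Lemma~1 of Ehrenfeucht--Haussler \cite{EH-IC1989}, and does not reprove it; your argument---the lower bound via \cref{prop:prop_rank_tree}(\ref{item-rank-size}) applied to a size-optimal tree, and the upper bound via the leaf-count recurrence $L(n,r)\le L(n-1,r)+L(n-1,r-1)$ (valid because two sibling subtrees cannot both have rank $r$ under a root of rank at most $r$), yielding $\Size(f)\le\sum_{i=0}^{r}\binom{n}{i}\le(en/r)^{r}$ for $r=\Rank(f)\ge 1$---is precisely the standard counting argument from that reference, so there is nothing to add beyond noting that a rank-optimal tree may be assumed reduced, which the paper already allows.
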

It follows that $\Rank(f) \in \Theta(\log \Size(f))$ if and only if $\Rank(f) = \Omega(n)$. However, even when $\Rank(f)\in o(n)$, it characterizes $\log\Size(f)$ upto a logarithmic factor, since for every $f$, $\Rank(f) \in \Omega(\log\Size(f)/\log n)$.

For symmetric functions,  $\Rank$ is completely characterized in terms of $\Gap$. 
\begin{proposition}[Lemma C.6 \cite{ABDORU10}]\label{lem:ABDORU}
  For symmetric Boolean function $f: \boolfn{n}$, $\Rank(f) = n - \Gap(f)$.
\end{proposition}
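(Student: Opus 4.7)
The plan is to prove both inequalities $\Rank(f)\le n-\Gap(f)$ and $\Rank(f)\ge n-\Gap(f)$ by a joint induction on $n$, exploiting the fact that a variable query on a symmetric function produces two symmetric subfunctions in a very controlled way. Write $f=(f_0,f_1,\ldots,f_n)$ for the value sequence, and let $g=\Gap(f)$. Since $f$ is symmetric, querying any variable and restricting gives two symmetric functions on $n-1$ variables whose value sequences are $(f_0,\ldots,f_{n-1})$ and $(f_1,\ldots,f_n)$; call them $f|_0$ and $f|_1$, with gap values $g_0,g_1$.

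The crucial combinatorial step is the following relationship between $g$, $g_0$, and $g_1$, which I will prove first. Any constant interval $[a,b]$ of $f$ with $b\le n-1$ survives in $f|_0$ with the same length, and any with $a\ge 1$ survives in $f|_1$ with the same length; so $g_0<g$ forces every longest constant interval of $f$ to end at $n$, and $g_1<g$ forces every longest one to start at $0$. Both cannot fail simultaneously unless $f$ is constant. Moreover, if $g_1<g$, then the (unique) longest interval $[0,g]$ of $f$ shrinks in $f|_1=(f_1,\ldots,f_n)$ to the interval $[0,g-1]$ of length exactly $g-1$, hence $g_1=g-1$; symmetrically for $g_0$. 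So for non-constant $f$, exactly one of the following holds: (A) $g_0=g_1=g$, or (B) $\{g_0,g_1\}=\{g,g-1\}$.

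For the upper bound, I construct a tree $T$ by querying any variable at the root and attaching, by induction, trees $T_0,T_1$ for $f|_0,f|_1$ of ranks $(n-1)-g_0$ and $(n-1)-g_1$. In Case (A) the two subtree ranks are equal to $(n-1)-g$, so the root combination rule yields $\Rank(T)=(n-1)-g+1=n-g$. In Case (B) the subtree ranks are $(n-1)-g$ and $(n-1)-(g-1)=n-g$, which are distinct, so $\Rank(T)=\max=n-g$. For the lower bound, I consider an arbitrary decision tree $T$ for $f$, look at the two children subtrees, and apply the inductive hypothesis: $\Rank(T_0)\ge(n-1)-g_0$, $\Rank(T_1)\ge(n-1)-g_1$. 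In Case (A), both bounds are $\ge(n-1)-g$; whether or not equality holds in the induction, the combination rule forces $\Rank(T)\ge n-g$ (if the two subtree ranks end up equal, we gain a $+1$; if unequal, the larger is strictly greater than $(n-1)-g$, hence $\ge n-g$). In Case (B), one of the two lower bounds is already $n-g$, so the max is $\ge n-g$. The base cases are $f$ constant, where $\Rank(f)=0=n-n=n-\Gap(f)$.

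The main obstacle is the case analysis of $g_0,g_1$ versus $g$ described above; once this ``gap transition lemma'' is in hand, the recursive definition of rank of a binary tree plugs in cleanly in both directions. Care is needed in the lower-bound argument to handle the subcase where the two subtree ranks from induction happen to differ by more than the induction guarantees, but the combination rule absorbs this naturally because any extra slack in one child propagates to the parent without losing the $n-g$ bound.
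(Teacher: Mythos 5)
Your proof is correct. Note that the paper itself gives no proof of this statement --- it is quoted as Lemma C.6 of the cited reference [ABDORU10] --- so there is nothing in the paper to compare against; your argument is a sound, self-contained replacement. The key observation (that for non-constant symmetric $f$, restricting a variable yields the value sequences $(f_0,\ldots,f_{n-1})$ and $(f_1,\ldots,f_n)$, whose gaps $g_0,g_1$ satisfy $g-1\le g_b\le g$ with at most one of them equal to $g-1$) is exactly the right ``gap transition'' fact, and both directions of the induction then follow cleanly from the recursive definition of rank: in the upper bound the two cases give either two equal subtrees of rank $(n-1)-\Gap(f)$ or two unequal ones with maximum $n-\Gap(f)$, and in the lower bound the combination rule $\Rank(u)\ge\max\{\Rank(v),\Rank(w)\}$ (with $+1$ on equality) absorbs any slack from the children. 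The base case ($f$ constant, $\Rank=0$, $\Gap=n$) is handled correctly.
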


\begin{remark}\label{rem:measures-neg-dual}
For (simple) deterministic possibly weighted decision trees, each of the measures \Size, \Depth, and \Rank, is the same for a Boolean function $f$, its complement $\neg f$, and its dual $f^d$. 
\end{remark}

\paragraph*{Relations and Search problems}
A relation $R \subseteq X \times W$ is said to be $X$-complete, or just complete, if its
projection to $X$ equals $X$. That is, for every $x\in X$, there is a
$w\in W$ with $(x,w)\in R$. For an $X$-complete relation $R$, where
$X$ is of the form $\boolset{n}$ for some $n$, the search problem
$\SearchR$ is as follows: given an $x\in X$, find  a $w\in W$ with
$(x,w)\in R$. A decision tree for $\SearchR$ is defined exactly as for
Boolean functions; the only diference is that leaves are labeled with
elements of $W$, and we require that for each input $x$, if the unique
leaf reached on $x$ is labeled $w$, then $(x,w) \in R$. The rank of
the relation, $\Rank(R)$, is the minimum rank of a decision tree
solving the $\SearchR$ problem.

A Boolean function $f:\boolfn{n}$ naturally defines a complete
relation $R_f$ over $X=\boolset{n}$ and $W=\bool$, with $R_f =
\{(x,f(x)) \mid x\in X \}$, and $\Rank(f) = \Rank(R_f)$.

\section{Game Characterisation for Rank}
\label{sec:game}

In this section we observe that the rank of a Boolean function is
characterised by the value of a Prover-Delayer game introduced by
Pudl{\'a}k and Impagliazzo in \cite{pudlak2000lower}. As mentioned in
\cref{sec:intro}, the game was originally described for searchCNF
problems on unsatsifiable clause sets. The appropriate analog for a
Boolean function $f$, or its relation $R_f$, and even for arbitrary
$X$-complete relations $R\subseteq X\times W$, is as follows:

The game is played by two players, the Prover and the Delayer, who
construct a (partial) assignment $\rho$ in rounds.  Initially, $\rho$
is empty. In each round, the Prover queries a variable $x_i$ not set
by $\rho$. The Delayer responds with a bit value $0$ or $1$ for $x_i$,
or defers the choice to the Prover. In the later case, Prover can
choose the value for the queried variable, and the Delayer scores one
point. The game ends when there is a $w\in W$ such that for all $x$
consistent with $\rho$, $(x,w)\in R$. (Thus, for a Boolean function
$f$, the game ends when $f\vert_\rho$ is a constant function.)  The
value of the game, $\Value(R)$, is the maximum $k$ such that the
Delayer can always score at least $k$ points, no matter how the Prover
plays.
\begin{theorem}[implied from \cite{pudlak2000lower,Kullmann-ECCC-99,esteban2003combinatorial}]
  \label{thm:game-rank}
  For any $X$-complete relation $R \subseteq X \times W$, where $X =
  \boolset{n}$, $\Rank(R) = \Value(R)$.  In particular, 
  for a boolean function $f: \boolfn{n}$, $\Rank(f) =
  \Value(R_f)$. 
\end{theorem}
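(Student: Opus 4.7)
The plan is to prove the two inequalities $\Value(R) \leq \Rank(R)$ and $\Value(R) \geq \Rank(R)$ by exhibiting explicit strategies for the Prover and the Delayer respectively, both driven by the recursive definition of rank.

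For $\Value(R) \leq \Rank(R)$, I fix an optimal decision tree $T$ for the search problem on $R$ and let the Prover walk down $T$, querying at each step the variable labelling the current node. If the Delayer assigns a value $b$, the Prover descends to the $b$-child; if the Delayer defers, the Prover descends to the child of smaller rank (breaking ties arbitrarily). The invariant to maintain is that the Delayer's accumulated score plus the rank of the current subtree is at most $\Rank(T)$. The case check uses the recursion on rank: when the two children have equal rank $r$ the current node has rank $r+1$ and deferring drops rank to $r$ while adding $1$ to the score, preserving the sum; when the ranks are unequal the current node's rank equals the larger of the two, and deferring lets the Prover drop into the smaller child, reducing the rank by at least $1$ to offset the scored point. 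At a leaf the rank is $0$, so the final score is at most $\Rank(T) = \Rank(R)$.

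For $\Value(R) \geq \Rank(R)$, I design a Delayer strategy maintaining the dual invariant that the accumulated score plus $\Rank(R|\rho)$ is at least $\Rank(R)$, where $\rho$ is the current partial assignment. When the Prover queries an unset variable $x_i$, let $r = \Rank(R|\rho)$ and let $r_b$ be the rank after extending $\rho$ by $x_i=b$. By \cref{prop:rank_subfn} we have $r_0, r_1 \leq r$. On the other hand, putting optimal subtrees for the two restrictions under a root labelled $x_i$ yields a decision tree for $R|\rho$ of rank $\max(r_0, r_1)$ when $r_0 \neq r_1$ and $r_0 + 1$ when $r_0 = r_1$; by minimality this rank is at least $r$. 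A short case check then shows $(r_0, r_1)$ falls into exactly one of three cases: (A) both equal $r-1$; (B) both equal $r$; or (C) exactly one equals $r$ and the other is strictly less. The Delayer defers in case (A), answers arbitrarily in (B), and in (C) answers the bit that keeps the rank at $r$. The invariant is preserved in every case.

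The main subtlety is case (B), in which no point is scored and the rank does not drop: one must check that this cannot persist indefinitely. It cannot, because each Prover move extends $\rho$ by one variable and after at most $n$ moves every variable is set, at which point the restricted relation trivially forces a unique $w$ and so $\Rank(R|\rho)=0$, ending the game; the invariant then yields final score $\geq \Rank(R)$. Equivalently, one may note that $\Rank$ and $\Value$ satisfy the same recursion with the same base case (both vanish exactly when some fixed $w$ solves the search problem for every $x$), so they coincide by induction on the number of unassigned variables; however, keeping the strategies explicit is convenient for the game-based arguments collected in \cref{sec:game-proofs}.
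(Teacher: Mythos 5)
Your proposal is correct and follows essentially the same route as the paper: the Prover walks a rank-optimal decision tree for $\SearchR$, descending into the smaller-rank subtree on a deferral, and the Delayer plays according to the ranks of the restricted relations so that the rank can drop only when a point is scored. The differences are cosmetic — you phrase both directions as invariant maintenance rather than induction on depth or on the number of variables, and your Delayer answers arbitrarily (instead of deferring) when both restrictions preserve the rank, which is harmless; just note that the monotonicity $\Rank(R_{i,b})\le\Rank(R)$ you invoke is the relation analogue of \cref{prop:rank_subfn} and holds by the same restriction argument.
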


The proof of the theorem follows from the next two lemmas.
\begin{lemma}[implicit in \cite{Kullmann-ECCC-99}]\label{lem:game-rank-ub}
For an $X$-complete relation $R \subseteq \boolset{n}\times W$, in the
Prover-Delayer game, the Prover has a strategy which restricts the
Delayer's score to at most $\Rank(R)$ points.
\end{lemma}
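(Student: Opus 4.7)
The plan is to fix a decision tree $T$ solving $\SearchR$ with $\Rank(T) = \Rank(R)$ and have the Prover use $T$ as a guide. The Prover maintains a pointer to a node $u$ of $T$, initially the root. At each round, the Prover queries the variable labelling $u$. If the Delayer responds with a value $b \in \bool$, the Prover moves the pointer to the $b$-child of $u$. If the Delayer defers, the Prover chooses the child of smaller rank (with any tie-breaking). The game ends when the pointer reaches a leaf, since at that point the assignment $\rho$ built so far is consistent with some unique leaf of $T$ whose label $w$ satisfies $(x,w)\in R$ for every $x$ consistent with $\rho$.

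I would prove by induction on the structure of $T$ the stronger statement: if the current pointer is at a node $u$ with $\Rank(u) = k$ and the Prover plays the strategy above from $u$ onwards, then the Delayer scores at most $k$ more points. The base case $k=0$ means $u$ is a leaf, so the game is already over and no further points are scored.

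For the inductive step, let $u$ have children $v_0, v_1$ with ranks $r_0, r_1$. If the Delayer returns a value $b$ (scoring no point this round), the pointer moves to $v_b$, whose rank $r_b$ satisfies $r_b \le k$ by definition of rank, so by induction the Delayer scores at most $r_b \le k$ further points in total. If the Delayer defers (scoring one point this round), the Prover picks the lower-rank child: by the rank recurrence, if $r_0 \neq r_1$ then $\min(r_0,r_1) \le k-1$, and if $r_0 = r_1$ then both equal $k-1$; either way the pointer moves to a node of rank at most $k-1$, so by induction the Delayer scores at most $k-1$ further points, giving a total of at most $1 + (k-1) = k$. Applying this to the root yields the claimed bound $\Rank(R)$.

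The only mildly delicate point is checking that the Prover's strategy terminates exactly when the game is supposed to end, i.e.\ that reaching a leaf of $T$ means the restriction of the relation has a single common witness consistent with $\rho$; this follows directly from the correctness of $T$ as a decision tree for $\SearchR$. The rest is a straightforward case analysis driven by the definition of $\Rank$, so I would not expect any substantial obstacle.
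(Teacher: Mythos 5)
Your proposal is correct and follows essentially the same route as the paper: the Prover descends a rank-optimal decision tree for $\SearchR$, answering deferrals by moving to the lower-rank child, and the score bound follows by induction over the tree (the paper inducts on depth, you on structure/rank of the current node, which is an immaterial difference). The case analysis and the termination argument match the paper's proof, so there is nothing to fix.
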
   
\begin{proof}
  The Prover chooses a decision tree $T$ for $\SearchR$ and starts querying variables starting from the root and working down the tree. If the Delayer responds with a $0$ or a $1$, the Prover descends into the left or right subtree respectively. If the Delayer defers the decision to Prover, then the Prover sets the variable to that value  for which the corresponding subtree has smaller rank (breaking ties arbitrarily), and descends into that subtree. 

 We claim that such a ``tree-based'' strategy restricts the Delayer's
 score to $\Rank(T)$ points.  The proof is by induction on
 $\Depth(T)$.
\begin{enumerate}
\item Base Case: $\Depth(T)=0$. This means that $\exists w\in W$,
  $X\times \{w\}\subseteq R$. Hence the game terminates with the empty
  assignment and the Delayer scores 0.
\item Induction Step: $\Depth(T)\ge 1$. Let $x_i$ be the variable at
  the root node and $T_0$ and $T_1$ be the left and right subtree. The
  Prover queries the variable $x_i$. Note that for all $b$,
  $\Depth(T_b) \le \Depth(T)-1$, and $T_b$ is a decision tree for the
  search problem on $R_{i,b}\triangleq \{ (x,w)\in R \mid x_i=b\}
  \subseteq X_{i,b} \times W$, where $X_{i,b} = \{x\in X\mid x_i=b\}$.

  If the Delayer responds with a bit $b$, then by induction, the
  subsequent score of the Delayer is limited to $\Rank(T_b) \le
  \Rank(T)$.  Since the current round does not increase the score, the
  overall Delayer score is limited to $\Rank(T)$.

  If the Delayer defers the decision to Prover, the Delayer gets one
  point in the current round. Subsequently, by induction, the
  Delayer's score is limited to $\min(\Rank(T_0), \Rank(T_1))$; by
  definition of rank, this is at most $\Rank(T)-1$. So the overall
  Delayer score is again limited to $\Rank(T)$.
\end{enumerate}
In particular, if the Prover chooses a rank-optimal tree $T_R$, then the Delayer's score is limited to $\Rank(T_R) = \Rank(R)$ as claimed.
\end{proof}

\begin{lemma}[implicit in \cite{esteban2003combinatorial}]\label{lem:game-rank-lb}
For an $X$-complete relation $R \subseteq \boolset{n}\times W$, in the
Prover-Delayer game,
the Delayer has a strategy which always scores at least $\Rank(R)$ points.
\end{lemma}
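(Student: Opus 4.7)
The plan is to exhibit an adaptive Delayer strategy and maintain throughout the game the invariant
\[
(\text{Delayer's current score}) + \Rank(R|_\rho) \;\geq\; \Rank(R),
\]
where $\rho$ is the running partial assignment and $R|_\rho$ denotes the subrelation restricted to inputs consistent with $\rho$. Initially $\rho=\emptyset$ and the score is $0$, so the invariant holds. When the game terminates there exists $w\in W$ with $(x,w)\in R$ for every $x$ consistent with $\rho$; the single-leaf tree labeled $w$ solves $\SearchR$ on $R|_\rho$, giving $\Rank(R|_\rho)=0$, and the invariant then forces the Delayer's final score to be at least $\Rank(R)$.

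For the strategy itself, when the Prover queries an unset variable $x_i$, let $r_b = \Rank(R|_{\rho\cup\{x_i=b\}})$ for $b\in\bool$. The Delayer defers whenever $r_0=r_1$, and otherwise answers with the bit $b$ for which $r_b > r_{1-b}$.

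The analysis rests on two structural observations about rank. First, by \cref{prop:rank_subfn} each $r_b\leq\Rank(R|_\rho)$. Second, since one candidate decision tree for $R|_\rho$ queries $x_i$ at the root with rank-optimal subtrees for the two children relations, the definition of $\Rank$ gives $\Rank(R|_\rho)\leq r_0+1$ when $r_0=r_1$ and $\Rank(R|_\rho)\leq \max(r_0,r_1)$ when $r_0\neq r_1$. Combining the two inequalities: if $r_0=r_1=r$, then $\Rank(R|_\rho)\in\{r,r+1\}$; if $r_0\neq r_1$, then $\Rank(R|_\rho)=\max(r_0,r_1)$.

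The invariant is then preserved in each round by case analysis. In the equal-rank case $r_0=r_1=r$, deferring earns one point while the rank moves from $\Rank(R|_\rho)\in\{r,r+1\}$ down to $r$ (regardless of the Prover's choice), so the left-hand side of the invariant changes by $1-(\Rank(R|_\rho)-r)\geq 0$. In the unequal-rank case, steering to the high-rank branch keeps the rank equal to $\max(r_0,r_1)=\Rank(R|_\rho)$ and scores nothing, again preserving the invariant. The main conceptual obstacle is spotting that in the unequal-rank case the Delayer must \emph{forgo} a point: deferring there would be fatal, since the Prover would push the game into the low-rank branch and the rank could drop by much more than one, wiping out the invariant. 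Once this trade-off is identified and the invariant written down, the verification is essentially mechanical induction on the number of rounds played.
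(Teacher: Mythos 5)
Your proof is correct and is essentially the paper's argument: you use the identical Delayer strategy (defer on equal branch ranks, otherwise answer toward the higher-rank branch) and the identical key inequalities $\Rank(R|_\rho)\le 1+r_b$ in the equal case and $\Rank(R|_\rho)\le\max(r_0,r_1)$ in the unequal case, both witnessed by a tree querying $x_i$ at the root. Phrasing it as a round-by-round invariant rather than the paper's induction on the number of variables is only a cosmetic repackaging of the same induction.
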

\begin{proof}
The Delayer strategy is as follows: When variable $x_i$ is queried,
the Delayer responds with $b\in\bool$ if $\Rank(R_{i,b}) >
\Rank(R_{i,1-b})$, and otherwise defers.

We show that the Delayer can always score $\Rank(R)$ points using this
strategy. The proof is by induction on the number of variables $n$.
Note that if $\Rank(R)=0$, then
there is nothing to prove. If $\Rank(R)\ge1$, then the prover must
query at least one variable. 
\begin{enumerate}
\item Base Case: $n=1$. If $\Rank(R)=1$, then the prover must query
  the variable, and the Delayer strategy defers the choice, scoring
  one point.
\item Induction Step: $n>1$. Let $x_i$ be first variable queried by
  the prover.

  If $\Rank(R_{i,0}) = \Rank(R_{i,1})$, then the Delayer defers,
  scoring one point in this round. Subsequently, suppose the Prover
  sets $x_i$ to $b$. The game is now played on $R_{i,b}$, and by
  induction, the Delayer can subsequently score at least
  $\Rank(R_{i,b})$ points. But also, because of the equality, we have
  $\Rank(R) \le 1+\Rank(R_{i,b})$, as witnessed by a decision tree that
  first queries $x_i$ and then uses rank-optimal trees on each branch.
  Hence the overall Delayer score is at least $\Rank(R)$. 

  If $\Rank(R_{i,b}) > \Rank(R_{i,1-b})$, then the Delayer chooses
  $x_i=b$ and the subsequent game is played on $R_{i,b}$. The
  subsequent (and hence overall) score is, by induction, at least
  $\Rank(R_{i,b})$. But $\Rank(R) \le \Rank(R_{i,b})$, as witnessed by
  a decision tree that first queries $x_i$ and then uses rank-optimal
  trees on each branch. 
\end{enumerate}
\end{proof}

\cref{lem:game-rank-ub,lem:game-rank-lb} give us a way to prove rank upper and lower bounds for boolean functions. In a Prover-Delayer game for $R_f$, exhibiting a Prover strategy which restricts the Delayer to at most $r$ points gives an upper bound of $r$ on  $\Rank(f)$. Similarly, exhibiting a Delayer strategy which scores at least $r$ points irrespective of the Prover strategy shows a lower bound of $r$ on $\Rank(f)$.

In \cite{BGL13}, an aysmmmetric version of
this game is defined. In each round, the Prover queries a variable
$x$, the Delayer specifies values $p_0,p_1 \in [0,1]$ adding up to 1, 
the Prover picks a value $b$,
the Delayer adds $\log \frac{1}{p_b}$ to his score. Let $\AValue$
denote the maximum score the Delayer can always achieve, independent
of the Prover moves. Note that $\AValue(R) \ge \Value(R)$; an
asymmetric-game Delayer can mimic a symmetric-game Delayer by using
$p_b=1$ for choice $b$ and $p_0=p_1=1/2$ for deferring. As shown in
\cite{BGL13}, for the search CNF problem, the value of this asymmetric
game is exactly the optimal leaf-size of a decision tree. We note below that
this holds for the $\SearchR$ problem more generally.
\begin{proposition}[implicit in \cite{BGL13}]
  \label{prop:game-size}
  For any $X$-complete relation $R \subseteq X \times W$, where $X =
  \boolset{n}$, $\log\Size(R) = \AValue(R)$.  In particular, 
  for a boolean function $f: \boolfn{n}$, $\log\Size(f) =
  \AValue(R_f)$. 
\end{proposition}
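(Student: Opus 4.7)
The plan is to mirror the proof of \cref{thm:game-rank}, showing $\AValue(R) \le \log\Size(R)$ via a Prover strategy and $\AValue(R) \ge \log\Size(R)$ via a Delayer strategy, each established by induction.

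For the upper bound, I would have the Prover fix a size-optimal decision tree $T$ for $\SearchR$ and descend through it in lockstep with the game: at the current internal node $v$ labelled by variable $x_i$ with children rooting subtrees $T_0, T_1$, the Prover queries $x_i$ and, given the Delayer's response $(p_0, p_1)$, picks $b$ minimising $\Size(T_b)/p_b$ (treating $a/0 = \infty$, so the Prover avoids any branch with $p_b = 0$, which would in any case give infinite score). By induction on $\Depth(T)$, the Delayer's accumulated score from node $v$ is at most $\log \Size(T_v)$: the base case is a leaf with $\Size = 1$ and zero score, and in the inductive step the round contributes $\log(1/p_b)$ while the continuation contributes at most $\log \Size(T_b)$, totalling $\log(\Size(T_b)/p_b)$. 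The key numerical inequality is $\min_b \Size(T_b)/p_b \le \Size(T_0) + \Size(T_1) = \Size(T_v)$; it holds because if both quotients exceeded $\Size(T_v)$, then $p_0 < \Size(T_0)/\Size(T_v)$ and $p_1 < \Size(T_1)/\Size(T_v)$ would sum to less than $1$.

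For the lower bound, I would have the Delayer respond to a query $x_i$ with the size-proportional split $p_b = \Size(R_{i,b})/(\Size(R_{i,0}) + \Size(R_{i,1}))$, where $R_{i,b}$ is the restriction of $R$ to $\{x : x_i = b\}$. By induction on the number of unfixed variables, whichever $b$ the Prover selects, the round contributes $\log((\Size(R_{i,0}) + \Size(R_{i,1}))/\Size(R_{i,b}))$ and the subsequent play on $R_{i,b}$ yields at least $\log \Size(R_{i,b})$ more, so the accumulated score is at least $\log(\Size(R_{i,0}) + \Size(R_{i,1})) \ge \log \Size(R)$. The final inequality is witnessed by gluing optimal trees for $R_{i,0}$ and $R_{i,1}$ under a root querying $x_i$, producing a decision tree for $R$ of size $\Size(R_{i,0}) + \Size(R_{i,1})$.

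The one place I expect to slow down is the upper bound's key inequality and the bookkeeping around degenerate distributions (a $p_b = 0$ response must be handled without breaking the induction); everything else is a routine quantitative refinement of the $\Rank$-$\Value$ equivalence, with the logarithmic penalty $\log(1/p_b)$ replacing the symmetric ``+1 on deferral'' rule.
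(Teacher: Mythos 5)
Your argument is correct, and it is worth noting that the paper itself does not prove \cref{prop:game-size} at all: it labels the statement ``implicit in \cite{BGL13}'' and only remarks on the leaf-versus-node counting discrepancy ($\log S$ here versus $\log(S/2)$ there). What you have written is a self-contained proof for arbitrary $X$-complete relations, structured as a quantitative refinement of \cref{lem:game-rank-ub,lem:game-rank-lb}: the Prover walks a size-optimal tree and at each node picks the child $b$ minimising $\Size(T_b)/p_b$, with the mediant-style inequality $\min_b \Size(T_b)/p_b \le \Size(T_0)+\Size(T_1)$ (since otherwise $p_0+p_1<1$) replacing the ``rank drops on the smaller branch'' step; the Delayer answers with the size-proportional split $p_b = \Size(R_{i,b})/(\Size(R_{i,0})+\Size(R_{i,1}))$, so that each round's $\log(1/p_b)$ exactly telescopes against the inductive guarantee $\log\Size(R_{i,b})$, and the gluing bound $\Size(R)\le\Size(R_{i,0})+\Size(R_{i,1})$ closes the induction. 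These are precisely the ideas underlying \cite{BGL13} (there phrased for searchCNF), so your proof both matches the intended argument and supplies the generalisation to relations that the paper asserts without proof. Your handling of the degenerate response $p_b=0$ is fine (the Prover takes the branch with $p_{1-b}=1$ at zero cost), and the only point you leave tacit is that the game may terminate before the Prover's walk reaches a leaf, or before all variables are set on the Delayer's side; in the upper bound this only lowers the Delayer's score, and in the lower bound termination means the restricted relation has a one-leaf tree, so both inductions go through unchanged.
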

(In \cite{BGL13}, the bounds have $\log (S/2)$; this is because $S$
there counts all nodes in the decision tree, while here we count only leaves.)

Thus we have the relationship 
\[\Rank(f) = \Value(R_f) \le \AValue(R_f) = \log \Size(f).\]
This relationship explains where the slack may lie in the inequalities
from \cref{prop:rank_size} relating $\Rank(f)$ and $\log \Size
(f)$. The symmetric game focusses on the more complex subtree,
ignoring the contribution from the less complex subtree (unless both
are equally complex), and thus characterizes rank. The asymmetric game
takes a weighted contribution of both subtrees and thus is able to
characterize size.

\section{The Rank of some natural functions}
\label{sec:simple-calc}
For symmetric functions, rank can be easily calculated using
\cref{lem:ABDORU}. In \cref{tab:tabulation} we tabulate various
measures for some standard symmetric functions.
As can be seen from the $\OR_n$ and $\AND_n$ functions, the $\Rank(f)$
measure is not polynomially related with the measures $\Depth(f) $ or
certificate complexity $\Cert(f)$.
\begin{table}[h]
\[\begin{array}{|c|c|c|c|c|c|c|}
\hline
  $f$ & \Depth & \CertZ & \CertO & \Cert & \Gap & \Rank \\ \hline
  0 ~\textrm{or}~ 1 & 0 & 0 & 0 & 0 & n & 0 \\ \hline
  \AND_n & n & 1 & n & n & n-1 & 1 \\ \hline
  \OR_n & n & n & 1 & n & n-1 & 1 \\ \hline
  \Parity_n & n & n & n & n & 0 & n \\ \hline
  \Maj_{2k} & 2k & k & k+1 & k+1 & k & k \\ \hline
  \Maj_{2k+1} & 2k+1 & k+1 & k+1 & k+1 & k & k+1 \\ \hline  
  \begin{array}{c}\Thr^k_n \\(k \ge 1) \end{array} &
  n & n-k+1 & k &
  \max \begin{Bmatrix}n-k+1,\\k\end{Bmatrix} &
  \max\begin{Bmatrix}k-1,\\ n-k \end{Bmatrix} & n-\Gap
  \\ \hline
\end{array}\]
  \caption{Some simple symmetric functions and their associated complexity measures}
  \label{tab:tabulation}
\end{table}

For two composed functions  that will be crucial in our discussions in \cref{sec:rank-rels}, we can directly calculate the rank as described  below. (The rank can also be caluclated using \cref{thm:game-rank}; see \cref{sec:game-proofs}, or using \cref{thm:compose-rank-bounds}, which is much more general. We show these specific bounds here since we use them in \cref{sec:rank-rels}.)
\begin{theorem}\label{thm:rank-tribes}
  For every $n\ge 1$,
  \begin{enumerate}
  \item $\Rank(\Tribes_{n,m}) = \Rank(\dTribes_{n,m}) = n$ for $m\ge 2$.
  \item $\Rank(\AND_n \circ \Parity_m) = n(m-1) +1$ for $m\ge 1$.
  \end{enumerate}
\end{theorem}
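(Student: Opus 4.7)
Both items follow from inductive arguments exploiting how fixing a single variable ``peels'' off a block. For \textbf{Part 1}, I would induct on $n$, with base $\Tribes_{1,m}=\AND_m$ of rank $1$. For the upper bound $\Rank(\Tribes_{n,m})\le n$ when $n\ge 2$, I build a caterpillar on block $1$: the spine queries $x_{11},\ldots,x_{1m}$ in order; each spine node's ``$0$''-branch commits block $1$ to $0$ and plugs in an inductively constructed rank-$(n-1)$ tree for $\Tribes_{n-1,m}$; and the ``all-ones'' terminus is a $1$-leaf. Walking up the spine, the $x_{1m}$-node pairs a $1$-leaf (rank $0$) with a $\Tribes_{n-1,m}$-subtree (rank $n-1$), giving rank $n-1$ via $\max$; the $x_{1,m-1}$-node pairs two subtrees of rank $n-1$, gaining $+1$ to reach $n$; every higher spine node pairs a rank-$n$ subtree with a rank-$(n-1)$ subtree, preserving rank $n$. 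The hypothesis $m\ge 2$ is used so that $x_{1,m-1}$ exists as a distinct node.

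For the Part~1 lower bound $\Rank(\Tribes_{n,m})\ge n$, consider any decision tree $T$ with root query $x_{ij}$. The $0$-branch computes $\Tribes_{n-1,m}$ on the other $n-1$ blocks, of rank $\ge n-1$ by induction; the $1$-branch, further restricted by setting some $x_{ij'}$ with $j'\ne j$ (such a variable exists because $m\ge 2$) to $0$, also reduces to $\Tribes_{n-1,m}$, so by \cref{prop:rank_subfn} and induction it too has rank $\ge n-1$. The rank combining rule then forces root rank $\ge n$: either both children have equal rank $\ge n-1$ (so the $+1$ rule gives $\ge n$), or they differ (so the larger, being an integer strictly greater than $n-1$, is $\ge n$). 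The dual identity $\Rank(\dTribes_{n,m})=n$ follows from \cref{rem:measures-neg-dual}.

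For \textbf{Part 2}, I would prove the more general identity
\[
\Rank\bigl(\AND_n\circ(\Parity_{m_1},\ldots,\Parity_{m_n})\bigr) \;=\; N-n+1,\qquad N:=\sum_{i=1}^n m_i,
\]
for all $m_i\ge 1$, by induction on $N$; the stated formula is the uniform case $m_i=m$. The base $N=n$ is $\AND_n$ of rank $1$. For the upper bound, pick any block $i$ with $m_i\ge 2$ and evaluate it as a complete binary tree of depth $m_i$ on block $i$'s variables, with $0$-leaves at the $2^{m_i-1}$ parity-$0$ positions and inductively constructed rank-$S$ subtrees at the parity-$1$ positions, where $S:=(N-m_i)-(n-1)+1$. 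At the $x_{i,m_i}$-level, each node combines a $0$-leaf and a rank-$S$ subtree, giving rank $S$ via $\max$; each of the $m_i-1$ higher levels of the block-tree combines two equal-rank subtrees by symmetry, gaining $+1$ per level, so the block-tree's root has rank $S+m_i-1=N-n+1$.

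For the Part~2 lower bound, consider the root query $x_{ij}$ of any decision tree. If $m_i\ge 2$, both branches realize the same composed form with $m_i$ replaced by $m_i-1$, of rank $\ge N-n$ by induction, and the combining rule gives root rank $\ge N-n+1$ whether the two children's ranks are equal or different. If $m_i=1$, the $0$-branch makes block $i$'s parity $0$ and hence the function constant $0$, while the $1$-branch realizes the $(n-1)$-block version of the generalized function, of rank $(N-1)-(n-1)+1=N-n+1$ by induction, so via $\max$ the root inherits rank $\ge N-n+1$. The main obstacle is precisely this bookkeeping: induction on $n$ alone breaks because fixing a variable makes the inner parities non-uniform in arity, so one must formulate the generalized mixed-arity identity and handle both the $m_i\ge 2$ and $m_i=1$ cases carefully in the rank-combining step.
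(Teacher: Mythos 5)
Your Part 1 is correct and is essentially the paper's argument: the paper works with $\dTribes_{n,m}$ (equivalently, by \cref{rem:measures-neg-dual}, with $\Tribes_{n,m}$), builds the same block-by-block tree for the upper bound (your level-by-level caterpillar computation just unrolls \cref{prop:compose_rank_dt}), and for the lower bound uses exactly your observation that the branch which does not kill block $i$ still contains $\Tribes_{n-1,m}$ as a subfunction because $m\ge 2$, so both children have rank at least $n-1$.

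In Part 2, however, there is a genuine (though repairable) gap in the lower bound. When the root queries $x_{i,j}$ with $m_i\ge 2$, the two restrictions of block $i$ are $\Parity_{m_i-1}$ and $\neg\Parity_{m_i-1}$, so the two branches do \emph{not} ``realize the same composed form with $m_i$ replaced by $m_i-1$'': one of them is $\AND_n\circ(\neg\Parity_{m_i-1},g_2,\ldots,g_n)$, which is not covered by your induction hypothesis, stated only for unnegated parities. You must either strengthen the hypothesis to allow each inner function to be $\Parity_{m_i}$ or $\neg\Parity_{m_i}$ — this is precisely how the paper formulates \cref{lem:and-parity-lb} — or add the (easy) observation that negating one input variable does not change rank (swap the two children of every node querying that variable), so both branch functions have the same rank and the bound $\ge N-n$ applies to each. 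Without one of these, the inductive step as written does not go through. A second, minor, point: in your upper-bound construction the phrase ``inductively constructed rank-$S$ subtrees'' fails when $n=1$, since the residual function is the empty conjunction (constant $1$, rank $0$) rather than rank $S=1$; the bottom-level nodes then combine two rank-$0$ leaves and get rank $1$ by the equal-rank rule, so the total $m_1=N-n+1$ still comes out right, but the case should be treated separately (the paper simply takes $\AND_1\circ\Parity_m=\Parity_m$ as the base case). With these repairs your route coincides with the paper's: mixed-arity generalization, complete parity tree plus \cref{prop:compose_rank_dt}-style accounting for the upper bound, and a root-query case analysis for the lower bound.
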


We prove this theorem by proving each of the lower and upper bounds separately in a series of lemmas below. The lemmas use the following property about the rank function.
\begin{proposition}\label{prop:compose_rank_dt}
  (Composition of  Rank):
  Let $T$ be a rooted binary tree with depth $\ge 1$, rank $r$,  and with leaves labelled by $0$ and $1$.
  Let $T_0, T_1$ be arbitrary rooted binary trees of ranks $r_0, r_1$ respectively.
  For $b\in\bool$, attach $T_b$ to each leaf of $T$ labeled $b$, to obtain rooted binary tree $T'$ of rank $r'$.
  \begin{enumerate}
    \item $r' \le r + \max\{r_0,r_1\}$.  Furthermore, if $T$ is a
      complete binary tree, and if $r_0=r_1$, then this is an
      equality; $r'=r+r_0$.
    \item If every non-trivial subtree (more than one leaf) of $T$ has
      both a $0$ leaf and a $1$ leaf, then $r'\ge r + \max\{r_0,r_1\}
      -1$.   If, furthermore,  $T$ is a complete binary tree, then
  this is an equality when $r_0\neq r_1$, 
  \end{enumerate} 
\end{proposition}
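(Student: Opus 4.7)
The plan is to prove both parts by structural induction on $T$, applying the recursive definition of rank at the root. Fix $M = \max\{r_0, r_1\}$ throughout.

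For part 1, I would induct on the depth of $T$. The base case of a single leaf gives $r = 0$ and $r' = r_b \le M$. In the inductive step, let the root's subtrees have ranks $s_L, s_R$ in $T$ and $r'_L, r'_R$ in $T'$, so the IH yields $r'_b \le s_b + M$ for $b \in \{L,R\}$. A short case split on whether $s_L = s_R$ (which sets $r$) and whether $r'_L = r'_R$ (which sets $r'$) confirms $r' \le r + M$ in every combination. The equality claim (complete binary tree of depth $d$ with $r_0 = r_1$) follows by a separate induction on $d$: both subtrees are complete of rank $d-1$, by the IH both modified subtrees have rank exactly $(d-1) + r_0$, and the equal-rank root rule promotes this to $r' = d + r_0 = r + r_0$.

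For part 2, again induct on the depth of $T$, assumed $\ge 1$. The base case of depth 1 is immediate: the hypothesis forces the two leaves to carry labels $0$ and $1$, so $r' = M$ when $r_0 \neq r_1$ and $r' = M + 1$ otherwise, both $\ge M = r + M - 1$. For the inductive step, when both root-subtrees $T^L, T^R$ are non-trivial, each inherits the hypothesis and gives $r'_b \ge s_b + M - 1$ by the IH. A case analysis mirroring part 1 yields the bound, the crucial observation being that when $s_L = s_R$ (so $r = s_L + 1$) the ``$+1$'' from the root rule (whether the $r'$'s coincide or differ by at least $1$) promotes the IH bound $s_L + M - 1$ up to the required $s_L + M = r + M - 1$. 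When one root-subtree is a single leaf, the other must be non-trivial (since $T$ has depth $\ge 2$), and since $s_L = 0 < s_R$ we have $r = s_R$, while $r' \ge r'_R \ge s_R + M - 1$ by the IH on the non-trivial side. The tight version (complete $T$ with $r_0 \neq r_1$) is a final induction on depth: the IH applied to both identical complete subtrees of depth $d-1$ gives modified subtrees of equal rank $d + M - 2$, which combine to $r' = d + M - 1$.

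The main subtlety is the ``trivial sibling'' case in part 2: the hypothesis imposes no constraint on a leaf child of the root, so the bound must be transferred entirely from the strictly larger non-trivial sibling, which works precisely because the asymmetry $s_L \neq s_R$ eliminates the root's ``$+1$'' in computing $r$. A secondary pitfall is verifying that the equal-rank promotion at the root in the $s_L = s_R$ case exactly bridges the gap from $M - 1$ (in the IH) to $M$ (required by the statement); this is where the asymmetry between the equal- and unequal-children cases in the rank rule is essential.
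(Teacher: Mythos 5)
Your proof is correct and follows essentially the same route as the paper's: induction on the depth of $T$, with a case analysis at the root on whether a child is a single leaf (rank $0$) and on whether the two children's ranks coincide, using the asymmetry of the rank rule exactly where you identify it. The only cosmetic difference is in part 1, where the paper dispatches the $r_0\neq r_1$ case by citing monotonicity of rank under subtree replacement (\cref{prop:prop_rank_tree}) instead of running the explicit induction you sketch.
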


\begin{proof}
The upper bound on $r'$ follows from the definition of rank when $r_0=r_1$, in which case it also gives equality for complete $T$. When $r_0\neq r_1$, it follows from  \cref{prop:prop_rank_tree}(\ref{item-monotonicity}).

For non-trivially labeled $T$, we establish the lower bound by induction on $d=\Depth(T)$.

In the base case $d=1$, $T$ has one 0-leaf and one 1-leaf, and $r=1$. By definition of rank, $r'$ satisfies the claimed inequality. 

For the inductive step, let $\Depth(T)= k > 1$. Let $v$  be the root of $T$, and let $T_\ell$, $T_r$ be its left and right sub-trees respectively, with ranks $r_\ell$ and $r_r$ respectively. Both $\Depth(T_\ell)$ and $\Depth(T_r)$ are at most $k-1$, and at least one of these is exactly $k-1 \ge 1$. Also, at least one of $r_\ell, r_r$ is non-zero.

Let $T'_\ell$  be the tree obtained by replacing 0 and 1 leaves of $T_\ell$ by $T_0$ and $T_1$ respectively; let its rank be $r'_\ell$. Similarly construct $T'_r$, with rank $r'_r$. Then $T'$ has root $v$ with left and right subtrees $T'_\ell$ and $T'_r$.

If $r_\ell=0$, then $r=r_r$ and $\Depth(T_r) = k-1\ge 1$. By the induction hypothesis, $r_r + \max\{r_0,r_1\} -1 \le r'_r$. Since $r' \ge r'_r$, the claimed bound follows.

If $r_r=0$, a symmetric argument applies.

If both $r_\ell, r_r$ are positive, then by the induction hypothesis,
$r_\ell + \max\{r_0,r_1\} -1 \le r'_\ell$ and $r_r + \max\{r_0,r_1\}
-1 \le r'_r$. If $r_\ell=r_r$ then $r=r_\ell+1$, and by definition of rank, 
$r' \ge 1 + \min\{r'_\ell, r'_r\} \ge r_\ell + \max\{r_0,r_1\}  = 
r+ \max\{r_0,r_1\} -1$, as claimed. On the other hand, if $r_\ell\neq r_r$, then
$r = \max\{r_\ell,r_r\}$, and by definition of rank, 
$r' \ge \max\{r'_\ell, r'_r\} \ge \max\{r_\ell,r_r\} + \max\{r_0,r_1\} -1 =
r+ \max\{r_0,r_1\} -1$, as claimed.

For complete binary tree $T$ satisfying the labelling requirements, 
$r_\ell=r_r=r-1$. 
The same arguments, simplified to this situation, show the claimed equality:
\end{proof}

We first establish the bounds for $\dTribes_{n,m} = \bigwedge_{i\in [n]}
\bigvee_{j\in [m]} x_{i,j}$.
\begin{lemma}\label{lem:rank-tribes-ub}
  For every $n,m \ge 1$, $\Rank(\dTribes_{n,m}) \le n$.
\end{lemma}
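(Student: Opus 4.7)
The plan is to prove the upper bound by induction on $n$, producing in each step an explicit decision tree of rank at most $n$ for $\dTribes_{n,m} = \bigwedge_{i\in [n]} \bigvee_{j\in [m]} x_{i,j}$. The base case $n=1$ is immediate: $\dTribes_{1,m}=\OR_m$, and the standard ``comb'' that queries $x_{1,1},x_{1,2},\ldots,x_{1,m}$ in sequence, jumping to a $1$-leaf after any $1$-answer and only outputting $0$ after all variables have been seen to be $0$, has rank $1$, matching the entry for $\OR_n$ in \cref{tab:tabulation}.

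For the inductive step, I would assume a decision tree $T'$ of rank at most $n-1$ for $\dTribes_{n-1,m}$ on tribes $2,\ldots,n$ and build a tree $T^\star$ for $\dTribes_{n,m}$ by composing $T'$ with the comb from the base case. Concretely, take the comb $T$ on the first tribe's variables, whose leaves are naturally labelled according to the value of $\OR_m(x_{1,1},\ldots,x_{1,m})$; substitute a fresh copy of $T'$ at every $1$-leaf of $T$, and leave the unique $0$-leaf of $T$ unchanged. Correctness is transparent: once tribe $1$ is satisfied, the residual function is exactly $\dTribes_{n-1,m}$ on the remaining tribes, which $T'$ computes; if tribe $1$ is falsified, then $\dTribes_{n,m}$ is identically $0$.

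The rank bound then drops out of \cref{prop:compose_rank_dt}(1) applied to the comb $T$ (of rank $1$) with replacement trees $T_1 = T'$ (of rank at most $n-1$) and $T_0$ a single $0$-leaf (of rank $0$): this gives $\Rank(T^\star) \le 1 + \max\{n-1,0\} = n$, as required. If one prefers a game-based argument, the same claim follows by exhibiting a Prover strategy that processes the tribes one at a time, using the tree-based strategy of \cref{lem:game-rank-ub} on each comb and recursing into the remaining tribes as soon as a tribe is shown to be satisfied.

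I do not anticipate any real obstacle: the argument is just a one-level peeling of the outer AND together with the already-available composition inequality for rank. The only mild care needed is handling the corner case $n=1$ (where $\max\{n-1,0\}=0$), which the base case treats explicitly.
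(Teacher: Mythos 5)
Your proposal is correct and follows essentially the same route as the paper: peel off one tribe, take the rank-$1$ comb for $\OR_m$, attach the inductively constructed tree for $\dTribes_{n-1,m}$ at its $1$-leaves, and bound the rank via \cref{prop:compose_rank_dt}, giving $1+(n-1)=n$. The only (immaterial) difference is which tribe you peel off first.
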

\begin{proof}
  We show the bound by giving a recursive construction and bounding the rank by induction on $n$. 
In the base case, $n=1$. $\dTribes_{1,m} = \OR_m$, which has rank 1.
For the inductive step, $n> 1$. For $j\le n$, let $T_{j,m}$ denote the recursively constructed trees for $\dTribes_{j,m}$. Take the tree $T$ which is $T_{1,m}$ on variables $x_{n,j}$, $j\in[m]$. Attach the tree $T_{n-1,m}$ on variables $x_{i,j}$ for $i\in[n-1]$, $j\in[m]$, to all the 1-leaves of $T$, to obtain $T_{n,m}$. It is straightforward to see that this tree computes $\dTribes_{n,m}$. Using \cref{prop:compose_rank_dt} and induction, we obtain  $\Rank(T_{n,m}) \le \Rank(T_{1,m}) + \Rank(T_{n-1,m}) \le 1 + (n-1) = n$. 

\end{proof}

\begin{remark}\label{rem:and-composed-f-ub}
  More generally, this construction shows  that $\Rank(\AND_n\circ f) \le n\Rank(f)$.  
\end{remark}

\begin{lemma}\label{lem:rank-tribes-lb}
  For every $n\ge 1$ and $m \ge 2$, $\Rank(\dTribes_{n,m}) \ge n$.
\end{lemma}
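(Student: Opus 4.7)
The plan is to prove a strengthened statement by induction, then instantiate. The stronger claim is: for any function of the form $f = \bigwedge_{i=1}^{n}(y_{i,1} \vee \cdots \vee y_{i,n_i})$ with every $n_i \ge 1$, if $t = |\{i \in [n] : n_i \ge 2\}|$ counts the non-singleton clauses, then $\Rank(f) \ge t$. Since $\dTribes_{n,m}$ with $m \ge 2$ has every clause of size $m \ge 2$, this gives $t = n$ and yields the lemma immediately.

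I would induct on the total variable count $N = \sum_i n_i$. The base case $N = n$ forces every $n_i = 1$, so $t = 0$ and the bound is trivial. For the inductive step, suppose $t \ge 1$. Then $f$ is non-constant (setting all variables to $1$ gives $f = 1$, while setting every variable of any single clause to $0$ gives $f = 0$), so any decision tree $T$ computing $f$ queries some $y_{i,j}$ at its root, producing subtrees $T_0, T_1$ of ranks $r_0, r_1$. I would case-analyse on $n_i$.

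If $n_i = 1$, then $T_0$ computes the constant $0$ (rank $0$) and $T_1$ computes $f$ with clause $i$ removed, which has the same value of $t$; the inductive hypothesis gives $r_1 \ge t$, so $\Rank(T) \ge t$. If $n_i \ge 3$, then $T_0$ computes $f$ with clause $i$ shrunk to $n_i - 1 \ge 2$ variables (still non-singleton, so $t$ is preserved); the inductive hypothesis gives $r_0 \ge t$, hence $\Rank(T) \ge t$. The genuinely delicate case is $n_i = 2$: both the $y_{i,j} = 0$ branch (clause $i$ becomes a singleton) and the $y_{i,j} = 1$ branch (clause $i$ is removed) drop $t$ by one, so by the inductive hypothesis both $r_0, r_1 \ge t-1$. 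I would then invoke the recursive definition of rank: whenever both children of a node have rank at least $t-1$, the parent has rank at least $t$, since either the two ranks are equal (pushing the parent up by one via the ``$+1$'' rule) or they differ (forcing the larger one to be at least $t$).

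The main obstacle is precisely this $n_i = 2$ case, where no single subtree alone witnesses the target bound and one must exploit the ``$+1$ on equality'' clause in the definition of rank to extract the final unit. Every other case is immediate from the inductive hypothesis together with the monotonicity recorded in \cref{prop:prop_rank_tree}.
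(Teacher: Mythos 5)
Your argument is correct. It shares the paper's basic skeleton---take an arbitrary decision tree, look at the variable queried at the root, and exploit the fact that the rank at the root is at least $1+\min$ of the ranks of its two subtrees---but it closes the induction differently. The paper inducts on the number of blocks $n$: after the root query $x_{1,1}$, the $1$-branch computes $\dTribes_{n-1,m}$ outright, while the $0$-branch computes $\AND_n\circ(\OR_{m-1},\OR_m,\ldots,\OR_m)$, which contains $\dTribes_{n-1,m}$ as a subfunction (this is exactly where $m\ge 2$ is used), so by monotonicity of rank under restriction (\cref{prop:rank_subfn}) both branches have rank at least $n-1$ and every level of the recursion harvests the $+1$. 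You instead strengthen the statement to arbitrary variable-disjoint $\AND$-of-$\OR$s with mixed clause widths, induct on the total number of variables, and harvest the $+1$ only when the queried clause has shrunk to width two---the one place where neither branch alone reaches the target and the equal-ranks clause of the rank recursion must be invoked. What your route buys: it avoids the subfunction trick entirely and gives, for free, the more general bound $\Rank(\AND_n\circ(\OR_{m_1},\ldots,\OR_{m_n}))\ge \lvert\{i: m_i\ge 2\}\rvert$; what it costs is the three-way case analysis and the bookkeeping of two induction parameters. One small point to tidy in a write-up: in the width-two case with a single remaining clause, the $1$-branch restriction is the constant $1$ (an empty conjunction), so either allow zero clauses in the strengthened claim or note that the required bound $r_1\ge t-1=0$ is trivial there.
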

\begin{proof}
  We prove this by induction on $n$. The base case, $n=1$, is straightforward: $\dTribes_{1,m}$ is the function $\OR_m$, whose rank is $1$. 

  For the inductive step, let $n> 1$, and consider any decision tree $Q$ for $\dTribes_{n,m}$. Without loss of generality (by renaming variables if necessary), let $x_{1,1}$ be the variable queried at the root node. Let $Q_0$ and $Q_1$ be the left and the right subtrees of $Q$. Then $Q_0$ computes the function $\AND_n\circ(\OR_{m-1}, \OR_{m},..., \OR_{m})$, and $Q_1$ computes $\dTribes_{n-1,m}$, on appropriate variables. For $m\ge 2$, $\dTribes_{n-1,m}$ is a sub-function of $\AND_n\circ(\OR_{m-1}, \OR_{m},..., \OR_{m})$, and so  \cref{prop:rank_subfn} implies that  $\Rank(Q_0)\ge \Rank(\AND_n\circ(\OR_{m-1}, \OR_{m},..., \OR_{m})) \ge \Rank(\dTribes_{n-1,m})$. By induction, $\Rank(Q_1) \ge \Rank(\dTribes_{n-1.m})\ge n-1$. Hence, by definition of rank, $\Rank(Q) \ge 1+\min\{\Rank(Q_0),\Rank(Q_1\} \ge n$. Since this holds for every decision tree $Q$ for $\dTribes_{n,m}$, we conclude  that $\Rank(\dTribes_{n,m})\ge n$, as claimed.
\end{proof}

Next, we establish the bounds for $\AND_n \circ \Parity_m = \bigwedge_{i\in [n]} \bigoplus_{j\in [m]} x_{i,j}$. The upper bound below is slightly better than what is implied by \cref{rem:and-composed-f-ub}.
\begin{lemma}\label{lem:and-parity-ub}
For every $n,m \ge 1$, $\Rank(\AND_n \circ \Parity_m) \le n(m-1) +1$.
\end{lemma}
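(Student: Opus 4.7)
The plan is to proceed by induction on $n$. The base case is $n=1$: $\AND_1\circ\Parity_m = \Parity_m$ is computed by the complete binary tree of depth $m$ querying all $m$ variables, which has rank $m = 1\cdot(m-1)+1$.

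For the inductive step, the key observation is that $\AND_n\circ\Parity_m$ evaluates to $0$ as soon as any single block of $m$ variables is found to have even parity, so once we learn the parity of block~$1$ we either stop (parity $0$) or recurse on the remaining $n-1$ blocks (parity $1$). To make this save a rank unit, I would \emph{defer the last variable of block~$1$}. Concretely, let $T'$ be the complete binary tree of depth $m-1$ querying $x_{1,1},\ldots,x_{1,m-1}$, and label each leaf of $T'$ by the parity $p\in\bool$ of the $m-1$ assignments reaching it. Let $T_{n-1}$ be an inductively obtained decision tree for $\AND_{n-1}\circ\Parity_m$ on the remaining blocks, with $\Rank(T_{n-1})\le (n-1)(m-1)+1$. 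At each $p$-labelled leaf of $T'$, attach a depth-$1$ \emph{pivot} subtree that queries $x_{1,m}$: on the branch where $x_{1,m}=1-p$ (block~$1$ parity equals $1$) hang $T_{n-1}$, and on the other branch place a $0$-leaf.

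Each pivot subtree has one rank-$0$ child and one rank-$\Rank(T_{n-1})\ge 1$ child, so by the definition of rank it has rank exactly $\Rank(T_{n-1})$; and both kinds of pivot subtrees (for $p=0$ and $p=1$) have the same rank. Applying \cref{prop:compose_rank_dt}(1) to $T'$ (of rank $m-1$) with these two pivot subtrees attached yields
\[
\Rank(T) \;\le\; (m-1) + \Rank(T_{n-1}) \;\le\; (m-1) + (n-1)(m-1) + 1 \;=\; n(m-1)+1,
\]
completing the induction.

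The only subtlety is the ``defer one variable'' trick: if one instead queried all of block~$1$ up front via a rank-$m$ complete binary tree and applied \cref{prop:compose_rank_dt}(1), the resulting bound would be $m + \Rank(T_{n-1}) = n(m-1)+2$, off by one. Deferring $x_{1,m}$ saves that unit (intuitively because the $\Parity_m$-subtree's final query is ``absorbed'' into the recursive subtree). Everything else is a straightforward inductive bookkeeping.
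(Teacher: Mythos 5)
Your proof is correct, and the tree you construct is in fact the very same tree as in the paper's proof: there one takes the complete depth-$m$ parity tree on one block and attaches the inductive tree $T_{n-1,m}$ to its $1$-leaves, leaving the $0$-leaves as leaves; your ``deferred last variable'' plus pivot nodes produce exactly that tree, just described with the last level split off. The difference is purely in the rank accounting. The paper invokes the second, refined part of \cref{prop:compose_rank_dt}: for a complete outer tree all of whose non-trivial subtrees carry both labels, and attached trees of distinct ranks $0$ and $\Rank(T_{n-1,m})$, it gives $\Rank(T_{n,m}) = m + \Rank(T_{n-1,m}) - 1$. You instead regroup the final parity query into the attached subtrees, so the outer tree has rank only $m-1$ and the two attached pivot trees have equal rank $\Rank(T_{n-1})$ (using $\Rank(T_{n-1})\ge 1$, which holds since the inner function is non-constant); then the elementary first part of \cref{prop:compose_rank_dt} already yields the bound. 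This makes the argument more self-contained, avoiding the completeness and labelling hypotheses needed for the equality in part (2), at the cost of one small technicality: for $m=1$ your outer tree $T'$ has depth $0$, where \cref{prop:compose_rank_dt} does not formally apply; but then the whole tree is just the pivot, of rank $\Rank(T_{n-1}) \le 1 = n(m-1)+1$ (or observe directly that $\AND_n\circ\Parity_1=\AND_n$ has rank $1$), so the bound still holds and the induction goes through.
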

\begin{proof}
  Recursing on $n$, we construct decision trees $T_{n,m}$ for $\AND_n \circ \Parity_m$, as in \cref{lem:rank-tribes-ub}. By induction on $n$, we bound the rank, also additionally using the fact that the rank-optimal decision tree for $\Parity_m$ is a complete binary tree.
  
  Base Case: $n=1$. $\AND_1 \circ \Parity_m = \Parity_m$. From \cref{tab:tabulation}, $\Rank(\Parity_m)=m$; let $T_{1,m}$ be the optimal decision tree computing $\Parity_m$.

  Inductive Step: $n> 1$. For $j\le n$, let $T_{j,m}$ denote the recursively constructed trees for $\AND_j\circ\Parity_{m}$. Take the tree $T$ which is $T_{1,m}$ on variables $x_{n,j}$, $j\in[m]$. Attach the tree $T_{n-1,m}$ on variables $x_{i,j}$ for $i\in[n-1]$, $j\in[m]$, to all the 1-leaves of $T$, to obtain $T_{n,m}$. It is straightforward to see that this tree computes $\AND_n\circ\Parity_{m}$.

  By induction, $\Rank(T_{n-1,m}) \le (n-1)(m-1)+1\ge 1$. Since we do not attach anything to the 0-leaves of $T_{1,m}$ (or equivalently, we attach a rank-0 tree to these leaves), and since $T_{1,m}$ is a complete binary tree, the second statement in \cref{prop:compose_rank_dt} yields  $\Rank(T_{n,m}) = \Rank(T_{1,m}) + \Rank(T_{n-1,m}) -1$. Hence $\Rank(T_{n,m}) \le n(m-1)+1$, as claimed.
\end{proof}

\begin{lemma}\label{lem:and-parity-lb}
  For every $n, m_1, m_2, \ldots , m_n\ge 1$, and functions $g_1, g_2, \ldots , g_n$  each in $\{\Parity_m, \neg\Parity_m\}$,
  $\Rank(\AND_n \circ (g_1,g_2,...,g_n)) \ge (\sum_{i=1}^n (m_i-1))+1$.

  In particular, $\Rank(\AND_n\circ \Parity_m)\ge n(m-1)+1$.
\end{lemma}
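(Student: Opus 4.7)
The plan is to prove the bound by induction on the total number of variables $N := \sum_i m_i$, showing that \emph{any} decision tree $T$ computing $f := \AND_n \circ (g_1, \ldots, g_n)$ satisfies $\Rank(T) \ge B$, where $B := \sum_i(m_i-1)+1$.

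The base case $N = n$ has all $m_i = 1$, so each $g_i$ is a literal and $f$ is the AND of $n$ literals, which has rank $1 = B$ (since $\Rank(\AND_n) = 1$ from \cref{tab:tabulation} and rank is invariant under negating individual input variables, as one can swap left and right children at every node querying the negated variable).

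For the inductive step, let some $m_{i_0} \ge 2$, fix a decision tree $T$ for $f$, and let $x_{i,j}$ be its root variable, with subtrees $T_0, T_1$. In \textbf{Case A} ($m_i = 1$), WLOG $g_i = x_{i,1}$; then $T_0$ computes the constant $0$ and $T_1$ computes $f' := \AND_{n-1} \circ (g_1,\ldots,g_{i-1},g_{i+1},\ldots,g_n)$. By the inductive hypothesis, $\Rank(T_1) \ge \Rank(f') \ge \sum_{k\ne i}(m_k-1)+1 = B$, so $\Rank(T) \ge \Rank(T_1) \ge B$ by the rank definition. In \textbf{Case B} ($m_i \ge 2$), both $f|_{x_{i,j}=0}$ and $f|_{x_{i,j}=1}$ lie in the same family with block $i$'s arity reduced to $m_i - 1$, differing only in whether the remaining parity in that block is $\Parity_{m_i-1}$ or $\neg\Parity_{m_i-1}$. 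Because these two functions are related by negating any single variable of block $i$, they have equal rank, call it $s$. The inductive hypothesis gives $s \ge \sum_{k\ne i}(m_k-1) + (m_i-2)+1 = B - 1$. Since $T_0$ and $T_1$ compute these restrictions, both have rank at least $s$; the rank definition then yields $\Rank(T) \ge s + 1 \ge B$ in both subcases of whether the two subtree ranks are equal (then $\Rank(T) = \Rank(T_0) + 1$) or unequal (then $\Rank(T) = \max \ge s+1$ since two distinct integers both $\ge s$ have maximum $\ge s+1$).

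The key maneuver, and the main obstacle, is the Case B argument: the inductive hypothesis only yields a lower bound $s$ on the rank of the \emph{functions} computed by the two subtrees, not on the specific subtree ranks themselves. The slick observation that bridges this gap is that when two distinct nonnegative integers are each at least $s$, their maximum is at least $s+1$; combined with the equal-rank subcase handled by the ``+1'' in the rank definition, this lets us uniformly conclude $\Rank(T) \ge s + 1$ without needing to control the individual subtree ranks exactly. The negation-symmetry observation (which ensures the two restrictions have equal rank) is the other key ingredient, and can alternatively be phrased using \cref{rem:measures-neg-dual} applied to the sub-function on block $i$. The ``in particular'' statement for uniform $m_i = m$ is an immediate substitution.
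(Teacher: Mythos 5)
Your proof is correct and follows essentially the same route as the paper's: analyze the root query of an arbitrary decision tree, split on whether the root's block has arity $1$ or at least $2$, and in the latter case use the fact that both restrictions stay within the parity/negated-parity family so the inductive bound applies to both subtrees, gaining the $+1$ from the rank recurrence. The only difference is organizational — you run a single induction on the total arity $\sum_i m_i$ where the paper nests an induction on $M=\sum_i m_i$ inside an induction on $n$ — and your negation-symmetry remark, while true, is not actually needed since the induction hypothesis already covers both $\Parity$ and $\neg\Parity$ inner functions.
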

\begin{proof}
  We proceed by induction on $n$.   Let $h$ be the function $\AND_n \circ (g_1,g_2,...,g_n)$.

Base Case: $n=1$.  $h = g_1$.  Note that for all functions $f$, $\Rank(f)=\Rank(\neg~f)$. So $\Rank(h)=\Rank(\Parity_{m_1}) = m_1$. 
Inductive Step: $n>1$. We proceed by induction on $M=\sum_{i=1}^n m_i$. 
\begin{enumerate}
\item Base Case: $M=n$. Each $m_i$ is equal to $1$. So $h$ is the conjunction of $n$ literals on distinct variables. (A literal is a variable or its negation.) Hence $\Rank(h) = \Rank(\AND_n) = 1$. 
\item Inductive Step: $M>n>1$.
  Consider any decision tree $Q$ computing $h$.  Without loss of generality (by renaming variables if necessary), let $x_{1,1}$ be the variable queried at the root node. Let $Q_0$ and $Q_1$ be the left and the right subtrees of $Q$. For $b\in\bool$, let $g_{1b}$ denote the function $g_1$ restricted to $x_{1,1}=b$. Then $Q_b$ computes the function $\AND_n\circ(g_{1b},g_2, \ldots , g_n)$ on appropriate variables.
  
  If $m_1=1$,  then the functions  $g_{10},g_{11}$ are constant functions, one 0 and the other 1. So one of $Q_0,Q_1$ is a 0-leaf, and the other subtree computes $\AND_{n-1}\circ (g_2,...,g_n)$. Using induction on $n$, we conclude
\[\Rank(Q) \ge \Rank(\AND_{n-1}\circ (g_2,...,g_n)) \ge \left[\sum_{i=2}^n(m_i-1)\right]+1 = \left[\sum_{i=1}^n(m_i-1)\right]+1. \]

For $m_1\ge 2$, $\{g_{10},g_{11}\}=\{\Parity_{m_1-1},\neg \Parity_{m_1-1}\}$. 
So one of $Q_0,Q_1$ computes
$\AND_{n}\circ (\Parity_{m_1-1},g_2,...,g_n)$, and the other computes $\AND_{n}\circ (\neg \Parity_{m_1-1},g_2,...,g_n)$. Using induction on $M$, we obtain
\[
\Rank(Q) \ge  1 + \min_b \Rank(Q_b)
\ge 1 + (m_1-2)+\left[\sum_{i=2}^n (m_i-1)\right] + 1
= \left[\sum_{i=1}^n (m_i-1)\right] + 1. 
\]

Since this holds for every decision tree $Q$ for $h$, the induction step is proved. 
\end{enumerate}
\end{proof}

\section{Relation between Rank and other measures}
\label{sec:rank-rels}

\subsection{Relating Rank to Depth and Sparsity}
\label{sec:rank-depth}
From \cref{prop:prop_rank_tree,prop:rank_size}, we know that $\Depth(f)$ is at least $\Rank(f)$. In the other direction, the $\AND$ function with rank $1$ and depth $n$ shows that $\Depth(f)$ cannot be bounded from above by any function of $\Rank(f)$ alone.
Similarly, we know from \cref{prop:depth-sparsity} that $\Depth(f)$ is bounded from below by $\log\spar(f)/2$, and yet, as witnessed by the $\Parity$ function with depth $n$ and sparsity 1, it cannot be bounded from above by any function of $\log\spar(f)$ alone. 
We show in this section that a combination of these two measures  does bound $\Depth(f)$ from above.  Thus, in analogy to  \cref{prop:depth-sparsity,prop:rank_size}, we see where  $\Depth(f)$ is sandwiched:
\[ \max\{\Rank(f),\log\spar(f)/2\} \le \Depth(f) \le \Rank(f) (1+\log\spar(f))\]

To establish the upper bound, we first observe that subcube kill number is bounded above by rank.
\begin{lemma}\label{lemma:kill}
For every Boolean function $f$, $\skill(f) \le \max_{\text{subcube } J } \skill(f|J) \le \Rank(f)$. 
\end{lemma}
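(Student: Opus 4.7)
The plan is to handle the two inequalities separately, reducing the non-trivial second one to the auxiliary statement that $\skill(g) \le \Rank(g)$ for every Boolean function $g$, which I would establish by induction on rank.

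For the first inequality $\skill(f) \le \max_J \skill(f|J)$, I would simply observe that the maximum ranges over all subcubes, including the trivial one with $\codim(J)=0$ in which no variable is fixed. For this choice, $f|J = f$, so $\skill(f|J) = \skill(f)$, and the maximum is therefore at least $\skill(f)$. No further work is required for this half.

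For the second inequality, I would reduce it to $\skill(g) \le \Rank(g)$ applied to the restrictions $g = f|J$. Combining with \cref{prop:rank_subfn}, which states $\Rank(f|J) \le \Rank(f)$, gives $\skill(f|J) \le \Rank(f|J) \le \Rank(f)$ for every $J$, from which the desired bound on the max follows.

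To prove the auxiliary statement $\skill(g) \le \Rank(g)$, I would induct on $r = \Rank(g)$. The base case $r = 0$ forces $g$ to be constant, so $\skill(g) = 0$. For the inductive step, let $T$ be a rank-optimal decision tree for $g$, let $x_i$ be the variable queried at the root, and let $T_0, T_1$ be the children with ranks $r_0, r_1$. By the definition of rank, $\min\{r_0,r_1\} \le r - 1$: if $r_0 = r_1$ then $r = r_0 + 1$ forces both to be $r-1$, and if $r_0 \neq r_1$ then $r = \max\{r_0,r_1\}$, so the smaller is strictly less than $r$. Pick $b \in \bool$ with $\Rank(T_b) \le r - 1$; then $\Rank(g|_{x_i = b}) \le r - 1$, and the inductive hypothesis supplies a subcube $J'$ of codimension at most $r - 1$ in the remaining variables on which $g|_{x_i = b}$ is constant. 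Extending $J'$ by the assignment $x_i = b$ produces a subcube of codimension at most $r$ on which $g$ itself is constant, so $\skill(g) \le r$. The main point of care is merely tracking the two cases in the rank recursion to ensure some child has rank at most $r - 1$; once this is in hand, the rest is a clean top-down peeling of the decision tree.
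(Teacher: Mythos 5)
Your proposal is correct and follows essentially the same route as the paper: the trivial codimension-0 subcube for the first inequality, the reduction of the second inequality to $\skill(g)\le\Rank(g)$ combined with rank monotonicity under restriction (\cref{prop:rank_subfn}), and for the auxiliary claim the observation that every internal node has a child of strictly smaller rank. Your induction on $\Rank(g)$ is just the recursive formulation of the paper's argument, which walks down the tree toward the smaller-rank child and reaches a leaf within $\Rank(g)$ steps.
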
 
\begin{proof}
The first inequality holds since $f=f|J$ for the unique subcube $J$ of codimension 0.

Next, note that showing $\skill(g)\le \Rank(g)$ for every boolean function $g$ suffices to prove the lemma. This is because if we prove this, then for every subcube $J$, $\skill(f|J) \le \Rank(f|J) \le \Rank(f)$; the latter inequality
follows  by monotonicity of rank (\cref{prop:rank_subfn}). 

To show $\skill(g)\le \Rank(g)$,  observe the following property of rank, which follows from the definition: For every internal node $v$ in a tree, at least one of its children has rank strictly less than the rank of $v$. 
Now, let $T$ be a rank-optimal tree for $g$, of rank $r$. Starting from the root in $T$, traverse down the tree in the direction of smaller rank until a leaf is reached. Using the above property, we see that we reach a leaf node in $T$ at  depth at most  $r$.  The variables queried on the path leading to the leaf node, and their settings consistent with the path, give a subcube of co-dimension at most $r$. On this subcube, since a decision tree leaf is reached, $g$ becomes constant, proving the claim.
\end{proof}

Combining the \cref{lemma:kill} and \cref{thm:rest-sparsity}, we show the following.
\begin{theorem}
\label{thm:rank-sparsity-depth}
  For every Boolean function $f: \boolfn{n}$,
$$ \Depth(f) \le \Rank(f)(1 + \log (\spar(f))).$$
  The inequality is tight as witnessed by $\Parity$ function.
\end{theorem}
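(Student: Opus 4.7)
The plan is to prove the upper bound by a top-down decision tree construction, combining \cref{lemma:kill} (which guarantees that fixing at most $\Rank(f)$ variables suffices to drive $f$ to a constant) with \cref{thm:rest-sparsity} (which guarantees that every sibling restriction of such a subcube halves the sparsity). I would induct on $\sparn(f)$ to establish the slightly stronger statement $\Depth(f) \le \Rank(f)\bigl(1 + \log \sparn(f)\bigr)$, from which the theorem follows because $\sparn(f) \le \spar(f)$. The base case $\sparn(f)=0$ forces $f$ to be constant, so $\Depth(f)=0$ and the bound holds trivially.

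For the inductive step, set $r = \Rank(f)$ and $s = \sparn(f) \ge 1$. By \cref{lemma:kill} there is a subcube $J = (S,\rho)$ with $1 \le |S| \le \skill(f) \le r$ on which $f|J \equiv c$ for some $c \in \bool$. I would build a tree $T$ whose top $|S|$ levels query the variables of $S$ in an arbitrary fixed order; this creates a prefix of depth at most $r$ whose leaves are in bijection with maps $\rho' : S \to \bool$. Label the leaf at $\rho$ with $c$; to every other leaf attach a decision tree for $f|(S,\rho')$ constructed by induction.

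Two facts close the recursion. First, by \cref{prop:rank_subfn}, $\Rank(f|(S,\rho')) \le r$, so the same $r$ is available at the next level. Second, since $f|(S,\rho)$ is constant and $S \neq \emptyset$, \cref{thm:rest-sparsity} gives $\sparn(f|(S,\rho')) \le s/2 < s$ for every $\rho'$. The inductive hypothesis therefore bounds each attached subtree by $r\bigl(1 + \log(s/2)\bigr) = r \log s$, so $\Depth(T) \le |S| + r\log s \le r(1 + \log s)$, completing the induction.

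Tightness follows from $\Parity_n$: by \cref{tab:tabulation}, $\Rank(\Parity_n) = n = \Depth(\Parity_n)$, while $\sparn(\Parity_n) = 1$, so the bound evaluates exactly to $n$. The main conceptual hurdle is simply lining up the two ingredients: the subcube $(S,\rho)$ produced by \cref{lemma:kill} must simultaneously serve as the constant-subcube hypothesis required by \cref{thm:rest-sparsity}, so that sparsity halves uniformly on all sibling restrictions. Once that alignment is recognised, the recursion unwinds cleanly, and the slight weakening $\sparn \le \spar$ gives the stated bound.
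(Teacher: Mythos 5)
Your proposal is correct and follows essentially the same argument as the paper: induct on $\sparn(f)$, use \cref{lemma:kill} to find a constant subcube of co-dimension at most $\Rank(f)$, query those variables, and apply \cref{thm:rest-sparsity} to halve the sparsity on every sibling restriction before recursing. Your explicit appeal to \cref{prop:rank_subfn} for the restricted functions is a point the paper handles implicitly via the second inequality of \cref{lemma:kill}, but the construction and accounting are the same.
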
 
\begin{proof}
  Recall that $\sparn(f)$ refers to the number of non-zero Fourier coefficients in the expansion of $f$ apart from $\hat{f}(\emptyset)$. We prove the theorem by induction on $\sparn(f)$.
  
    When $\sparn(f)<1$, $f$  is  a constant function with $\Depth(f)=\Rank(f)=0$, and the inequality holds.

Now assume that $\sparn(f) \ge 1$.  We give a recursive construction of a decision tree for $f$.

  Choose a subcube $J=(S,\rho)$ of minimum co-dimension $|S|=K(f)$ on which $f$  becomes constant. By \cref{lemma:kill}, $|S| \le \Rank(f)$. Start by querying all the variables indexed in $S$ in any order. When the outcome of all these queries matches $\rho$, the function becomes a constant and the tree terminates at this leaf. On any other outcome $\rho'$, the function is restricted to the subcube $J'=(S,\rho')$, and by  \cref{thm:rest-sparsity}, $\sparn(f|J') \le \sparn(f)/2$. Proceed recursively  to build the decision tree of $f|J'$ which is then attached to this leaf.

  Each stage in the recursion makes at most $\Rank(f)$ queries and halves sparsity $\sparn$. After at most $1+\log  \sparn(f)$ stages, the sparsity of the restricted function drops to below 1 and the function becomes a constant. 

  Thus the overall depth of the entire tree is bounded by $\Rank(f)\cdot(1+ \log \sparn(f))$, which is at most
$\Rank(f)\cdot(1+ \log \spar(f))$,  as claimed.
%
\end{proof}

\subsection{Relation between Rank and Certificate Complexity}
\label{subsec:rank-cert}
The certificate complexity and decision tree complexity are known to be related as follows.
\begin{proposition}[\cite{blum1987generic},\cite{hartmanis1991one},\cite{tardos1989query}, see also Theorem 14.3 in \cite{Jukna-BFCbook}]\label{prop:depth-cert}
For every boolean function $f: \boolfn{n}$,
$$\Cert(f) \le \Depth(f)  \le \CertZ(f)\CertO(f)$$
\end{proposition}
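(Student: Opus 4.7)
The plan is to prove the two inequalities separately. The lower bound $\Cert(f)\le \Depth(f)$ is essentially immediate from the definitions, whereas the upper bound $\Depth(f)\le \CertZ(f)\CertO(f)$ requires a recursive decision-tree construction whose correctness hinges on a disjointness argument between $0$- and $1$-certificates.

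For the lower bound, I would fix a depth-optimal decision tree $T$ for $f$. For any input $x\in\boolset{n}$, the variables queried on the root-to-leaf path taken by $x$ have cardinality at most $\Depth(f)$; pinning them to the values dictated by $x$ forces every extension to follow the same path and reach the same leaf, so this set is an $f$-certificate of $x$. Taking a maximum over $x$ gives $\Cert(f)\le \Depth(f)$.

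For the upper bound, I would argue by induction on $\CertZ(f)$: in the trivial base case $\CertZ(f)=0$, $f$ has no $0$-inputs, so $f\equiv 1$ and $\Depth(f)=0$. For the inductive step, fix any $1$-input $x^\ast$ of $f$, let $S$ be a minimal $1$-certificate of $x^\ast$ (so $|S|\le \CertO(f)$), and build the tree by first querying all variables of $S$ in some order. On the branch where the answers agree with $x^\ast$ on $S$, the certificate property guarantees $f=1$ and the branch halts at a $1$-leaf. On every other branch, let $f'$ denote the restriction of $f$ to the observed answers, and recurse on $f'$.

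The crux is the pair of claims $\CertO(f')\le \CertO(f)$ and $\CertZ(f')\le \CertZ(f)-1$. The first is straightforward: any $1$-certificate of an extended $1$-input of $f$, restricted to unfixed variables, still certifies. The main obstacle is the second, which rests on the following disjointness claim: for every $0$-input $y$ of $f$ compatible with the current path, every $0$-certificate $T$ of $y$ in $f$ meets $S$. I would prove this by fabricating a point $z$ that agrees with $x^\ast$ on $S$ and with $y$ on $T$ (which is well-defined when $S\cap T=\emptyset$), and then deriving the contradiction $f(z)=1$ via $S$ and $f(z)=0$ via $T$. Hence at least one variable of $T$ has already been fixed on the path, and $T$ minus those fixed variables is a $0$-certificate of the corresponding $0$-input of $f'$ of size at most $\CertZ(f)-1$. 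Plugging the induction hypothesis $\Depth(f')\le (\CertZ(f)-1)\CertO(f)$ into the total depth $|S|+\Depth(f')$ yields $\CertO(f)+(\CertZ(f)-1)\CertO(f)=\CertZ(f)\CertO(f)$, as desired.
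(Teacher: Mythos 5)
Your proof is correct and is exactly the standard argument behind this proposition, which the paper states without proof by citing the original references and Theorem 14.3 of Jukna's book; it is also the same construction the paper itself adapts (with the roles of $0$- and $1$-certificates swapped) in the proof of \cref{lem:rank-cert}. One cosmetic point: $\CertZ(f)=0$ can also occur for $f\equiv 0$, not only for $f\equiv 1$, but in either case $\Depth(f)=0$, so your base case conclusion stands.
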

Both these inequalities are tight; the first for the
$\OR$ and $\AND$ functions, and the second for the $\Tribes_{n,m}$ and $\dTribes_{n,m}$ functions.
(For $\dTribes_{n,m}$,  $\CertZ(\dTribes_{n,m}) = m$, $\CertO(\dTribes_{n,m}) = n$ and $\Depth(\dTribes_{n,m})=nm$, see e.g.\ Exercise 14.1 in \cite{Jukna-BFCbook}.)

Since $\Rank \le \Depth$, the same upper bound also holds for $\Rank$ as well.
But it is far from tight for the $\Tribes_{n,m}$ function.  In fact, the upper bound can be improved in general. Adapting the construction given in the proof of \cref{prop:depth-cert} slightly, we  show the following.
\begin{lemma}\label{lem:rank-cert}
  For every Boolean function $f: \boolfn{n}$,
  $$ \Rank(f) \le (\CertZ(f)-1)(\CertO(f)-1) + 1$$
  Moreover, the inequality is tight as witnessed by $\AND$ and $\OR$ functions.
\end{lemma}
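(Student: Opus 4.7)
The plan is to adapt the standard construction behind $\Depth(f)\le \CertZ(f)\cdot\CertO(f)$ (\cref{prop:depth-cert}), paying closer attention to how the depth budget translates into rank. Write $c_0 = \CertZ(f)$ and $c_1 = \CertO(f)$. I will prove the bound by induction on $c_1$. The base cases $c_1 \le 1$ are direct: if $c_1 = 0$ then $f$ is constant, and if $c_1 = 1$ then every $1$-input of $f$ has a single-literal $1$-certificate, so $f$ is (up to constants) an OR of literals; in both cases $\Rank(f)\le 1 \le (c_0-1)(c_1-1)+1$.

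For the inductive step assume $c_1\ge 2$. Fix a $0$-input $a^0$ realising $\CertZ(f)$, with minimum $0$-certificate $S_0$, $|S_0|=c_0$. Construct a decision tree $T$ whose top $c_0$ levels form a complete binary tree querying all variables of $S_0$ in some fixed order. This produces $2^{c_0}$ leaf-subcubes, one of which is the assignment $\rho^{\ast} = a^0|_{S_0}$; label that leaf $0$, and at every other leaf (corresponding to some $\rho \ne \rho^{\ast}$) attach a rank-optimal decision tree for the restricted function $f|_\rho$. The same ``certificate intersection'' argument used in the depth proof shows that any $1$-certificate of $f$ must meet $S_0$, since otherwise one could combine it with $a^0|_{S_0}$ to produce an input which is both a $0$- and a $1$-input of $f$. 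Hence $\CertO(f|_\rho)\le c_1-1$ for every such $\rho$; together with the trivial $\CertZ(f|_\rho)\le c_0$, the induction hypothesis gives $\Rank(f|_\rho)\le (c_0-1)(c_1-2)+1 =: R$, and $R\ge 1$ since $c_0 \ge 1$.

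It then remains to bound $\Rank(T)$. By \cref{prop:prop_rank_tree}(\ref{item-monotonicity}), replacing each attached subtree by one of rank exactly $R$ can only increase the rank, so it suffices to analyse the worst-case tree $T'$: a complete binary tree of depth $c_0$ in which one distinguished leaf has rank $0$ and every other leaf has rank exactly $R$. A short induction on $c_0$ gives $\Rank(T')=R+c_0-1$; the key point is that on the path from the root to the distinguished leaf, the ``off-path'' sibling at each level is a uniform rank-$R$ complete subtree, which always has strictly larger rank than the ``on-path'' side, so the $+1$-on-equality clause is avoided along this path. Thus $\Rank(f)\le \Rank(T)\le R+c_0-1 = (c_0-1)(c_1-2)+c_0 = (c_0-1)(c_1-1)+1$, closing the induction.

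The main obstacle is precisely this last rank computation. If one used the naive version -- a complete binary tree of depth $c_0$ with uniform rank-$R$ subtrees attached at every leaf -- the rank would come out to $R + c_0$, one more than the target. The single rank-$0$ leaf at the ``match $a^0|_{S_0}$'' position, guaranteed by the fact that $S_0$ is a $0$-certificate, is exactly what saves the needed $+1$ and sharpens the $c_0 c_1$ depth bound to the claimed $(c_0-1)(c_1-1)+1$ rank bound.
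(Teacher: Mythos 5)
Your proof is correct and follows essentially the same route as the paper's: induction on $\CertO(f)$, a complete tree over a minimum $0$-certificate with one constant-$0$ leaf, the certificate-intersection fact to decrease $\CertO$ by one in the restrictions, and the single rank-$0$ leaf to save the crucial $+1$ in the rank analysis. The only difference is presentational: you justify the bound $R + c_0 - 1$ for the composed tree by a direct induction together with the monotonicity property of rank, whereas the paper appeals to its composition statement (\cref{prop:compose_rank_dt}) for that step.
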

\begin{proof}
The inequality holds trivially for constant functions since for such functions, $\Rank=\CertZ=\CertO=0$. So assume $f$   is not constant. The proof is by induction on $\CertO(f)$. 

Base Case: $\CertO(f)=1$. Let $S\subseteq [n]$ be the set of indices  that are 1-certificates for some $a\in f^{-1}(1)$. We construct a  decision tree by querying all the variables indexed in $S$. For each such query, one outcome immediately leads to a 1-leaf (by definition of certificate), and we continue along the other outcome. If all variables indexed in $S$ are queried without reaching a 1-leaf, the restricted function is 0 everywhere and so we create a 0-leaf.  This gives a rank-1 decision tree computing $f$. 

For the inductive step, assume $\Rank(g) \le (\CertZ(g)-1)(\CertO(g)-1) + 1$ is true for all $g$ with $\CertO(g)\le k$. Let $f$ satisfy $\CertO(f)=k+1$. Pick an $a\in f^{-1}(0)$  and a mimimum-size 0-certificate $S$ for $a$. Without loss of generality, assume that $S=\{x_1,x_2,\ldots,x_\ell\}$ for some $\ell=|S|\le \CertZ(f)$.  Now, take a complete decision tree $T_0$ of depth $l$ on these $l$ variables. Each of its leaves corresponds to the unique input $c=(c_1,c_2,...,c_l) \in \bool^l $ reaching this leaf. At each such leaf, attach a minimal rank decision tree $T_c$ for the subfunction $f_c\triangleq f(c_1,c_2,...,c_l,x_{l+1},...,x_n)$. This gives a decision tree $T$ for $f$. We now analyse its rank. 

For at least one input $c$, we know that $f_c$ is the constant function $0$. For all leaves where $f_c$ is not 0, $\CertZ(f_c) \le \CertZ(f)$ since certificate size cannot increase by assigning some variables. Further, $\CertO(f_c) \le \CertO(f)-1$; this because of the well-known fact (see e.g.\ \cite{Jukna-BFCbook}) that every pair of a 0-certificate and a 1-certificate for $f$ have at least one common variable, and $T_0$ has queried all variables from a 0-certificate. Hence, by induction, for each $c$ with $f_c\neq 0$,
$\Rank(T_c) \leq (\CertZ(f_c)-1)(k-1) +1\leq (\CertZ(f)-1)(k-1) +1$. 
Thus $T$ is obtained from a rank-$\ell$ tree $T_0$ (with $\ell \le \CertZ(f)$) by attaching a tree of rank 0 to at least one leaf, and attaching trees of rank at most $(\CertZ(f)-1)(k-1) +1$ to all leaves.  From \cref{prop:compose_rank_dt}, we conclude that
$\Rank(f)\leq \Rank(T)\le ((\CertZ(f)-1)(k-1) +1)+(l-1) \leq (\CertZ(f)-1)(\CertO(f)-1) + 1$.   
\end{proof}

From \cref{thm:rank-tribes}, we see that the lower bound on $\Depth$ in \cref{prop:depth-cert} does not hold for $\Rank$; for $m>n$, $\Rank(\dTribes_{n,m})=n < m = C(\dTribes_{n,m})$. However, $\min\{\CertZ(\dTribes_{n,m}),\CertO(\dTribes_{n,m})\} = n = \Rank(\dTribes_{n,m})$. Further, for all the functions listed in \cref{tab:tabulation}, $\Rank(f)$ is at least as large as $\min\{\CertZ(f),\CertO(f)\}$.
However, even this is  not a lower bound in general.

\begin{restatable}{lemma}{lemMinCNotLB}\label{lem:minC-not-a-lb-for-rank}
  $\min\{\CertZ(f),\CertO(f)\}$ is not a lower bound on $\Rank(f)$; for the symmetric function $f=\Maj_n \vee \Parity_n$, when $n>4$,  $\Rank(f) < \min\{\CertZ(f),\CertO(f)\}$.
\end{restatable}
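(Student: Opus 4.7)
The plan is to exploit the symmetry of $f=\Maj_n\vee\Parity_n$. By \cref{lem:ABDORU}, the rank is $n-\Gap(f)$; and for any symmetric function, the minimum certificate on an input of weight $w$ is determined by which integer intervals contain only weights on which the value sequence $f_0,f_1,\ldots,f_n$ equals $f_w$. Concretely, writing $t=\lfloor n/2\rfloor+1$ so that $\Maj_n=1$ iff the weight is at least $t$, one has $f_i=1$ iff $i\geq t$ or $i$ is odd. For $n\geq 4$ this gives $t\geq 3$, and I would record three key features of the pattern: $f_0=f_2=0$, $f_1=1$, and the block $f_t=f_{t+1}=\cdots=f_n=1$ of length $\lceil n/2\rceil\geq 2$.

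From the top block, $\Gap(f)\geq 1$, so \cref{lem:ABDORU} immediately yields $\Rank(f)\leq n-1$ (a sharper but unnecessary estimate would give $\Rank(f)\leq\lfloor n/2\rfloor+1$).

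Next I would argue that $\CertZ(f)=\CertO(f)=n$. For any symmetric $g$, a minimum certificate on an input of weight $w$ corresponds to fixing $u\leq w$ ones and $z\leq n-w$ zeros, with cost $u+z$, subject to the requirement that every integer in $[u,n-z]$ be a weight on which $g$ equals $g_w$. For $\CertZ(f)$, take the all-zeros input ($w=0$, so $u=0$): the zeros of $f$ lie among the even integers below $t$, no two of which are consecutive, so $[0,n-z]$ must be the singleton $\{0\}$, forcing $z=n$ and certificate size $n$. For $\CertO(f)$, take any input of weight $1$: since $f_0=f_2=0$, the only interval of 1-weights containing $1$ is $\{1\}$, forcing $u=1$, $z=n-1$, and certificate size $n$.

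Combining the two estimates gives $\Rank(f)\leq n-1<n=\min\{\CertZ(f),\CertO(f)\}$ for every $n\geq 4$, covering in particular the stated range $n>4$. I do not expect any real obstacle: the only subtlety is the observation that for a symmetric function the minimum certificate is determined by the integer intervals on which the value sequence is constant, together with the parity constraints $u\leq w$ and $z\leq n-w$.
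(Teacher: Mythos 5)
Your proposal is correct and follows essentially the same route as the paper: both show $\CertZ(f)=\CertO(f)=n$ by examining the all-zeros input and a weight-one input, and both bound the rank via the symmetric-function characterization $\Rank(f)=n-\Gap(f)$ of \cref{lem:ABDORU}. The only difference is that the paper computes $\Gap(f)=n/2$ exactly while you settle for the cruder (and parity-robust) bound $\Gap(f)\ge 1$, i.e.\ $\Rank(f)\le n-1$, which suffices for the strict inequality.
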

\begin{proof}
  Let $f$ be the function $\Maj_n \vee \Parity_n$, for $n>4$. Then  $f(0^n)=0$ and $\CertZ(f,0^n)=n$, and $f(10^{n-1})=1$ and $\CertO(f,10^{n-1})=n$. Also, $f$ is symmetric, with $\Gap(f)=n/2$, so by \cref{lem:ABDORU}, $\Rank(f)=n/2$. 
\end{proof}

The average certificate complexity is also not directly related to rank.
\begin{lemma}\label{lem:avgC-not-a-lb-for-rank}
Average certificate complexity is neither a upper bound nor a lower bound on the rank of a function; there exist functions $f$ and $g$, such that $\Rank(f) < \aCert(f)$ and $\aCert(g) < \Rank(g)$.
\end{lemma}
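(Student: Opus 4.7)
The plan is to exhibit explicit witnesses for both inequalities by counting certificates; no clever machinery is needed, only a good choice of examples.

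For the first direction, $\Rank(f) < \aCert(f)$, I would take $f = \AND_n$ with $n \ge 2$. From \cref{tab:tabulation} we have $\Rank(\AND_n) = 1$. On the input $1^n$ any $f$-certificate must fix all $n$ variables, while for every other input a single coordinate equal to $0$ already constitutes a valid $0$-certificate. Averaging over all $2^n$ inputs gives $\aCert(\AND_n) = (n + (2^n - 1))/2^n = 1 + (n-1)/2^n$, which is strictly larger than $1 = \Rank(\AND_n)$. This settles the first half.

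For the second direction, $\aCert(g) < \Rank(g)$, I would take $g = \AND_n \circ \Parity_m$ with $n,m \ge 2$. By \cref{thm:rank-tribes}, $\Rank(g) = n(m-1) + 1$, which is $\Theta(nm)$. To upper-bound $\aCert(g)$ I split the input space by the value of $g$. The inputs with $g = 1$ are exactly those on which every one of the $n$ parity blocks evaluates to $1$, a $2^{-n}$ fraction of all $2^{nm}$ inputs; each such input requires a certificate of size $nm$, since leaving any single bit of any block free allows that block's parity to flip. Every other input has $g = 0$ and contains some block on which $\Parity$ already evaluates to $0$; the minimum certificate in that case fixes exactly the $m$ bits of one such block, because no proper subset of a block determines its parity. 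A straightforward averaging then gives $\aCert(g) = m + m(n-1)/2^n$, which is $O(m)$ and strictly less than $n(m-1)+1$ whenever $n,m \ge 2$.

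There is no serious obstacle: the calculations are elementary once the examples are fixed, and the verification for $g$ reduces to the single inequality $(n-1)(m-1) > m(n-1)/2^n$, which is immediate for $n,m \ge 2$. The conceptual point is that $\aCert$ is an average and therefore insensitive to the value on a small fraction of inputs. The $\AND_n$ example shows that a single hard input can still pull $\aCert$ above a tiny $\Rank$, while the $\AND_n \circ \Parity_m$ example shows that when the hard $1$-inputs are exponentially rare, the cheap $0$-certificates on the overwhelming majority of inputs pull $\aCert$ well below $\Rank$.
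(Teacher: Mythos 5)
Your proposal is correct and follows essentially the same route as the paper: the identical $\AND_n$ computation gives $\Rank(f)<\aCert(f)$, and your witness $\AND_n\circ\Parity_m$ for the other direction is precisely the paper's own ``larger gap'' example (taken there with $m=n$), verified by the same certificate counting together with the rank value $n(m-1)+1$ from \cref{thm:rank-tribes}. The only cosmetic difference is that the paper's primary witness for $\aCert(g)<\Rank(g)$ is $\dTribes_{n,2}$, which you bypass in favour of the composed-parity example.
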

\begin{proof}
  Let $f$ be the $\AND_n$ function for $n\ge 2$; we know that $\Rank(f)=1$. Since the $1$-certificate has length $n$ and all minimal $0$-certificates have length $0$, the average certificate complexity of $f$ is $\aCert(f)=2^{-n}.n + (1-2^{-n}).1= 1+ 2^{-n}(n-1)$. 
  
  Consider $g=\dTribes_{n,2}$ for $n>2$. By \cref{thm:rank-tribes}, $\Rank(g)=n$. Since $|g^{-1}(1)|=3^n$  and each minimal 1-certificate has length $n$, and since $|g^{-1}(0)|=4^n-3^n$  and each minimal 0-certificate has length $2$, we see that \[\aCert(g) = \left(\frac{3}{4}\right)^n \cdot n + \left[1- \left(\frac{3}{4}\right)^n\right] \cdot 2 < n =\Rank(g).\]
  

  For a larger gap between $\Rank$ and $\aCert$, consider the function $h=\AND_n\circ \Parity_n$. From \cref{thm:rank-tribes}, $\Rank(h) = n(n-1)+1$.  There are $2^{(n-1)n}$ 1-inputs, and all the $1$-certificates have length $n^2$. Also, all minimal $0$-certificates have length $n$. Hence $\aCert(h)= 2^{-n}n^2 + (1-2^{-n})n = n + o(1)$. 
\end{proof}

What can be shown in terms of certificate complexity and rank is the following:
\begin{lemma}\label{lem:mincert-rank}
  For every Boolean function $f$,
  $\minCert(f) \le \Rank(f)$. This is tight for $\OR_n$. 
\end{lemma}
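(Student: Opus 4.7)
The plan is to derive this bound as a short corollary of \cref{lemma:kill}, via the observation that $\minCert(f)$ coincides with the subcube kill number $\skill(f)$.

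First I would establish the identity $\minCert(f) = \skill(f)$ by a two-sided argument. For the $\leq$ direction, given a subcube $J = (S,\rho)$ of co-dimension $\skill(f)$ on which $f|J$ is constant, pick any input $a$ consistent with $\rho$; then $S$ is an $f$-certificate of $a$ of size $|S| = \skill(f)$, so $\minCert(f) \leq \Cert(f,a) \leq \skill(f)$. For the $\geq$ direction, given an input $a$ with $\Cert(f,a) = \minCert(f)$ and a minimum $f$-certificate $S \subseteq [n]$ of $a$, the subcube $(S, a|_S)$ has co-dimension $|S|$ and $f$ is constant on it, so $\skill(f) \leq \minCert(f)$.

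Combining this identity with \cref{lemma:kill} immediately yields $\minCert(f) = \skill(f) \leq \Rank(f)$, which is exactly the claimed inequality. (Alternatively, one could mimic the walk-down-the-tree argument in the proof of \cref{lemma:kill} directly: starting from the root of a rank-optimal tree and always descending to the child of strictly smaller rank, one reaches a leaf within $\Rank(f)$ queries, and the variables queried along this path give a certificate of that size for any input consistent with the path.)

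For tightness, I would use $\OR_n$: from \cref{tab:tabulation}, $\Rank(\OR_n) = 1$, and the input $1^n$ admits a one-variable certificate (any single coordinate), so $\minCert(\OR_n) = 1 = \Rank(\OR_n)$. There is no real obstacle here; the only conceptual step is noticing that the definitions of $\minCert$ and $\skill$ unpack to the same quantity, after which the bound is immediate from work already done in \cref{sec:rank-depth}.
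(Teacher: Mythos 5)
Your proof is correct, but it takes a different route from the paper. The paper argues directly about a rank-optimal tree $T$: since the variables queried on any root-to-leaf path form a certificate for all inputs reaching that leaf, every leaf of $T$ has depth at least $\minCert(f)$, and then \cref{prop:prop_rank_tree}(\ref{item-leaf-depth-rank}) (all leaves at depth $\ge r$ forces rank $\ge r$) gives $\Rank(T)\ge\minCert(f)$. You instead observe the identity $\minCert(f)=\skill(f)$ — which indeed unpacks correctly in both directions, since a constant subcube $(S,\rho)$ certifies any point inside it and a minimum certificate $S$ for $a$ makes $f$ constant on $(S,a|_S)$ — and then invoke \cref{lemma:kill}. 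In effect the paper uses a ``every leaf is deep'' argument, while you use the dual ``some leaf is shallow'' argument: \cref{lemma:kill} is proved by descending toward the smaller-rank child to find a leaf at depth at most $\Rank(f)$, whose path yields a certificate of that size. Your route recycles machinery already developed in \cref{sec:rank-depth} and makes explicit the dictionary $\minCert=\skill$ between minimum certificates and constant subcubes, which the paper never states; the paper's route is self-contained within the elementary rank-of-trees toolkit and needs no detour through Fourier-motivated notions. The tightness claim for $\OR_n$ is also handled correctly ($\Rank(\OR_n)=1$ and any $1$-input has a single-variable certificate, so $\minCert(\OR_n)=1$).
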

\begin{proof}
  Let $T$ be a rank-optimal decision tree for $f$. Since the variables queried in any root-to-leaf path in $T$ form a $0$ or $1$-certificate for $f$, we know that depth of each leaf in $T$ must be at least $\minCert(f)$.
  By \cref{prop:prop_rank_tree}(\ref{item-leaf-depth-rank}),  $\Rank(f) = \Rank(T)\ge \minCert(f)$. 
\end{proof}

\cref{lem:rank-cert} and \cref{lem:mincert-rank} give these bounds sandwiching
$\Rank(f)$:
\begin{theorem}\label{thm:rank-cert-bounds}
  $\minCert(f) \le \Rank(f)  \le (\CertZ(f)-1)(\CertO(f)-1) + 1 \le (\Cert(f)-1)^2+1$.
\end{theorem}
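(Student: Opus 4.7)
The statement is a direct corollary of two results already proved in the excerpt together with one trivial bound, so my proof plan is essentially to stitch them together rather than to develop any new technique.

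The plan is to obtain the left inequality $\minCert(f) \le \Rank(f)$ immediately from \cref{lem:mincert-rank}, and to obtain the middle inequality $\Rank(f) \le (\CertZ(f)-1)(\CertO(f)-1) + 1$ immediately from \cref{lem:rank-cert}. Thus the only remaining work is to justify the rightmost inequality
\[
(\CertZ(f)-1)(\CertO(f)-1) + 1 \le (\Cert(f)-1)^2 + 1.
\]

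For this last step, I would use the definition $\Cert(f) = \max\{\CertZ(f), \CertO(f)\}$ given in \cref{sec:prelim}. If $f$ is constant then all of $\CertZ(f)$, $\CertO(f)$, $\Cert(f)$ equal $0$, and both sides equal $1$. Otherwise, both $\CertZ(f)$ and $\CertO(f)$ are at least $1$, so that $\CertZ(f)-1$ and $\CertO(f)-1$ are non-negative integers, and each of them is at most $\Cert(f)-1$. Multiplying the two non-negative inequalities $\CertZ(f)-1 \le \Cert(f)-1$ and $\CertO(f)-1 \le \Cert(f)-1$ gives $(\CertZ(f)-1)(\CertO(f)-1) \le (\Cert(f)-1)^2$, and adding $1$ to both sides yields the desired bound.

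The main ``obstacle'' is really just bookkeeping: one must observe that the earlier upper bound \cref{lem:rank-cert} is non-trivially stronger than the bound one would get by inheriting $\Depth(f) \le \CertZ(f)\CertO(f)$ from \cref{prop:depth-cert}, and that it is this improved form that is being collected in the theorem. Apart from verifying the constant-function edge case, no additional argument is needed, so I would write the proof as a single short paragraph citing \cref{lem:mincert-rank} and \cref{lem:rank-cert} and noting the definitional inequality $\max\{\CertZ(f), \CertO(f)\} = \Cert(f)$.
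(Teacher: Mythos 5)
Your proposal is correct and follows exactly the paper's route: the theorem is obtained by combining \cref{lem:mincert-rank} for the lower bound and \cref{lem:rank-cert} for the upper bound, with the last inequality following from $\Cert(f)=\max\{\CertZ(f),\CertO(f)\}$. (One trivial slip: in the constant case both sides of the final inequality equal $2$, not $1$, since $(0-1)(0-1)+1=2$; the inequality of course still holds.)
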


As mentioned in \cref{lem:ABDORU}, for symmetric functions the rank is completely characterised in terms of $\Gap$ of $f$. How does $\Gap$ relate to certificate complexity for such functions? It turns out that certificate complexity is characterized not by $\Gap$ but by $\minGap$. Using this relation, the upper bound on $\Rank(f)$ from \cref{lem:rank-cert} can be improved for symmetric functions to $\Cert(f)$.
\begin{lemma}\label{lem:symm-cert}
  For every symmetric Boolean function $f$ on $n$ variables, $\Cert(f)=n-\minGap(f)$ and $n-\Cert(f)+1 \le \Rank(f)
  \le \Cert(f)$. Both the inequalities on rank are tight for $\Maj_{2k+1}$.  
\end{lemma}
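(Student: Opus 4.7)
The plan is to first nail down $\Cert(f)$ in terms of $\minGap(f)$, then use Proposition~\ref{lem:ABDORU} (which says $\Rank(f)=n-\Gap(f)$) to translate both inequalities into comparisons between $\Gap$ and $\minGap$.

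First I would characterize $\Cert(f,x)$ for symmetric $f$. Fix an input $x$ of Hamming weight $w$ and a candidate certificate that fixes $a$ of its $1$-bits and $b$ of its $0$-bits (so $a+b$ bits total). Because $f$ is symmetric, the inputs consistent with this partial assignment range in Hamming weight from $a$ to $n-b$, so this set of bits is a certificate for $x$ if and only if $f_a=f_{a+1}=\dots=f_{n-b}$, and of course $a\le w\le n-b$. Minimizing $a+b$ subject to these constraints is equivalent to maximizing the length of a constant interval of the sequence $f_0,\dots,f_n$ that contains index $w$; since the maximal constant interval of $f$ through $w$ is unique, the minimum certificate size at $x$ equals $n-(\text{length of that interval})+1$ wait, let me recount: the interval $[a,n-b]$ contains $n-b-a+1$ indices, so $a+b=n-(n-b-a)=n-(\text{length}-1)$, where length is $n-b-a+1$. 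Taking the max over $w$ picks out the shortest maximal constant interval, giving $\Cert(f)=n-\minGap(f)$.

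For the upper bound on rank, I would simply note $\minGap(f)\le\Gap(f)$, so $\Rank(f)=n-\Gap(f)\le n-\minGap(f)=\Cert(f)$. For the lower bound, the key observation is that when $f$ is non-constant, the sequence $f_0,\dots,f_n$ has at least two maximal constant intervals. Picking a longest and a shortest (necessarily distinct, or if their lengths coincide then any two distinct maximal intervals), their combined length is at most $n+1$ (the total number of indices). This gives $(\Gap(f)+1)+(\minGap(f)+1)\le n+1$, i.e., $\Gap(f)+\minGap(f)\le n-1$, and hence $\Rank(f)=n-\Gap(f)\ge\minGap(f)+1=n-\Cert(f)+1$. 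The constant case is trivial since then all three quantities vanish.

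Finally, for tightness on $\Maj_{2k+1}$: here $n=2k+1$ and the sequence $(f_0,\dots,f_n)$ consists of $k+1$ zeros followed by $k+1$ ones, giving exactly two maximal constant intervals of equal length $k+1$. Thus $\Gap(f)=\minGap(f)=k$, so by Proposition~\ref{lem:ABDORU} and the first part, $\Rank(f)=k+1=\Cert(f)$ and $n-\Cert(f)+1=k+1=\Rank(f)$, witnessing tightness of both inequalities. The main obstacle is just getting the bookkeeping right on ``length of interval'' versus ``length minus one'' in the definitions of $\Gap$ and $\minGap$; the combinatorial content (intervals partition $\{0,\dots,n\}$ into at least two parts when $f$ is non-constant) is straightforward.
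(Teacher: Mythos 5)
Your proposal is correct for non-constant $f$ and follows essentially the same route as the paper: the same interval characterization of certificates for symmetric functions giving $\Cert(f)=n-\minGap(f)$, and then the two rank inequalities via $\minGap(f)\le\Gap(f)$ and $\Gap(f)+\minGap(f)\le n-1$ combined with $\Rank(f)=n-\Gap(f)$ (the paper merely asserts the latter observation, while you justify it by disjointness of two distinct maximal constant intervals). One small caveat: your remark that the constant case is trivial because ``all three quantities vanish'' is not right, since there $n-\Cert(f)+1=n+1$ while $\Rank(f)=0$; the lower bound actually fails for constant functions, and both your argument and the paper's tacitly require $f$ to be non-constant, so the correct disposal of that case is to note the non-constancy hypothesis rather than to claim triviality.
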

\begin{proof}
 We first show $\Cert(f)=n-\minGap(f)$. Consider any interval $[a,b]$ such that $f_{a-1}\neq f_a=f_{a+1}=...=f_b\neq f_{b+1}$. Let $x$ be any input with Hamming weight in the interval $[a,b]$. We show that $C(f,x)=n-(b-a)$.
\begin{enumerate}
\item   Pick any $S\subseteq[n]$ containing exactly $a$  bit positions where $x$ is 1, and exactly $n-b$ bit positions where $x$ is 0. Any $y$ agreeing with $x$ on $S$ has Hamming weight in $[a,b]$, and hence $f(y)=f(x)$. Thus $S$ is a certificate for $x$. Hence $C(f,x)\le n-(b-a)$.
\item   Let $S\subseteq [n]$ be any certificate for $x$. Suppose $S$ contains fewer than $a$ bit positions where $x$ is 1. Then there is an  input $y$ that agrees with $x$ on $S$ and has Hamming weight exactly $a-1$. (Flip some of the 1s from $x$ that are not indexed in $S$.) So $f(y) \neq f(x)$, contradicting the fact that $S$ is a certificate for $x$. Similarly, if $S$ contains fewer that $n-b$ bit positions where $x$ is 0, then there is an input $z$ that agrees with $x$ on $S$ and has Hamming weight exactly $b+1$. So $f(z) \neq f(x)$, contradicting the fact that $S$ is a certificate for $x$. 

  Thus any certificate for $x$ must have at least $a+(n-b)$ positions; hence  $C(f,x) \ge n-(b-a)$. 
\end{enumerate}
Since the argument above works for any interval $[a,b]$ where $f$ is constant, we conclude that $\Cert(f) = n - \minGap(f)$.

Next, observe that $\Gap(f) + \minGap(f) \leq n-1$. Hence, 
  \[n-\Cert(f)+1 = \minGap(f)+1 \le n-Gap(f)=\Rank(f) \le n-\minGap(f) = \Cert(f). \]

  As seen from \cref{tab:tabulation}, these bounds on $\Rank$ are tight for $\Maj_{2k+1}$. 
\end{proof}

Even for the (non-symmetric) functions in \cref{thm:rank-tribes}, $\Rank(f) \le \Cert(f)$. However, this is not true in general.

\begin{lemma}\label{lem:cert-not-ub}
 Certificate Complexity does not always bound $\Rank$ from above; for $k\ge1$ and $n=4^k$ the
 function $f = (\AND_2\circ \OR_2)^{\otimes k}$ on $n$ variables has $\Rank(f) = \Omega(Cert(f)^2)$.
\end{lemma}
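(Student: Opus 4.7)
The plan is to bound $\Cert(f)$ from above by $2^k$ and $\Rank(f)$ from below by roughly $4^{k-1}$, so that $\Rank(f)\ge \tfrac{1}{4}\Cert(f)^2=\Omega(\Cert(f)^2)$ drops out immediately. Writing $g = \AND_2\circ \OR_2$, so that $f = g^{\otimes k}$, one verifies by inspection that $\Cert(g)=2$: a $1$-certificate needs only specify one true literal in each of the two $\OR_2$ blocks, and a $0$-certificate needs only specify the two variables of some $\OR_2$ block that is $0$. Certificate complexity is sub-multiplicative under composition (a certificate of the outer function, with each of its indices fleshed out by an inner certificate on the corresponding block, is a certificate of the composition), so $\Cert(f)=\Cert(g^{\otimes k})\le \Cert(g)^k=2^k$.

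For the rank lower bound I would appeal to the composition lower bound established in \cref{sec:rank-composed}, namely \cref{thm:compose-rank-lb}, whose expected form is $\Rank(F\circ G)\ge \Depth(F)\cdot(\Rank(G)-1)+1$; it is this shape of bound that gives $\Rank(\AND_n\circ \Parity_m)\ge n(m-1)+1$ in \cref{lem:and-parity-lb}. Writing $f = g\circ g^{\otimes(k-1)}$ and applying the bound with $\Depth(g)=4$ (directly verified) and $\Rank(g)=2$ (\cref{thm:rank-tribes}) gives the recurrence
\[
R_k \ge 4\,(R_{k-1}-1)+1,\qquad R_1=2,
\]
where $R_k = \Rank(g^{\otimes k})$. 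Setting $S_k = R_k-1$ gives $S_k\ge 4 S_{k-1}$ with $S_1=1$, hence $R_k \ge 4^{k-1}+1$.

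Combining the two bounds, $\Rank(f)\ge 4^{k-1}+1 \ge \tfrac{1}{4}(2^k)^2 \ge \tfrac{1}{4}\Cert(f)^2$, as required. The main non-trivial ingredient is the composition lower bound for rank; once that is in hand, the rest is a one-step induction. If the eventual form of \cref{thm:compose-rank-lb} differs slightly from the form I used, the calculation is easily adapted: what is needed is any composition bound multiplying $\Depth$ of the outer by roughly $\Rank-1$ of the inner, and any such bound will yield $\Rank(g^{\otimes k})=\Omega(4^k)$ when applied to $g = \AND_2\circ \OR_2$.
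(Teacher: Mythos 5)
Your proposal is correct and follows essentially the same route as the paper: the paper also derives $\Rank(f)\ge 4^{k-1}+1$ by iterating the composition lower bound $\Rank(F\circ G)\ge \Depth(F)(\Rank(G)-1)+1$ with $\Depth(g)=4$, $\Rank(g)=2$ (this is exactly \cref{corr:iterated-rank} and \cref{corr:examples}), and pairs it with $\Cert(f)=2^k$, obtained there via the recursion $\Cert_b(g^{\otimes k})=2\,\Cert_b(g^{\otimes k-1})$, which is the same submultiplicativity argument you use. The only cosmetic difference is that the paper tracks $\CertZ$ and $\CertO$ separately and states equality, whereas you only need (and prove) the upper bound $\Cert(f)\le 2^k$, which suffices for the $\Omega(\Cert(f)^2)$ conclusion.
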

This lemma shows that the relation between rank and certificate complexity (from \cref{lem:rank-cert}) is optimal upto constant factors.
The proof of the  lemma is deferred to the end of \cref{subsec:size-lb}, before which we develop techniques to bound the rank of composed functions.

\section{Rank of Composed and Iterated Composed functions}
\label{sec:rank-composed}
In this section we study the rank for composed functions. For composed functions, $f\circ g$, decision tree complexity $\Depth$ is known to behave very nicely.
\begin{proposition}[\cite{Montanaro-cj14}]\label{prop:depth-compose}
    For Boolean functions $f,g$, 
  $\Depth(f\circ g)=\Depth(f)\Depth(g)$.
\end{proposition}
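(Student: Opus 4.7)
My plan is to prove the two inequalities $\Depth(f\circ g)\le \Depth(f)\cdot \Depth(g)$ and $\Depth(f\circ g)\ge \Depth(f)\cdot \Depth(g)$ separately. The upper bound is essentially a construction; the lower bound requires an adversary argument.

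For the upper bound, I would take a depth-optimal decision tree $T_f$ for $f$ and, at each internal node querying outer variable $y_i$, substitute a depth-optimal decision tree $T_g^{(i)}$ for $g$ on the $i$-th input block, routing each $b$-leaf of $T_g^{(i)}$ into whichever subtree lay below the $y_i=b$ edge of $T_f$. The resulting tree evaluates $f\circ g$ correctly, and along any root-to-leaf path it traverses at most $\Depth(f)$ of the substituted $T_g^{(i)}$'s, each contributing depth at most $\Depth(g)$; hence its depth is at most $\Depth(f)\cdot\Depth(g)$.

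For the lower bound, I would fix adversary strategies $A_f$ and $A_g$ that force $\Depth(f)$ and $\Depth(g)$ queries respectively against any decision tree for $f$ and $g$. Given an arbitrary decision tree $T$ for $f\circ g$, I define a play against $T$ as follows: for each block $i$, maintain a private instance of $A_g$ that is presented with exactly those queries of $T$ falling inside block $i$, and which answers so that the value $g(x^{(i)})$ remains undetermined for as long as possible. When a query inside block $i$ would finally force $A_g$ to commit (i.e., the next answer resolves $g(x^{(i)})$), I treat this event as an outer ``query'' from $T$ to the variable $y_i$, pass it to $A_f$ to obtain a value $b\in\bool$, commit block $i$ to $b$, and answer the current inner query in block $i$ consistently with that commitment.

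The counting is then immediate: by optimality of $A_g$, each block whose outer value becomes determined under this play has already absorbed exactly $\Depth(g)$ queries of $T$, and by optimality of $A_f$ the outer trace must expose at least $\Depth(f)$ distinct $y_i$'s before $f$ is pinned down, yielding $\Depth(T)\ge \Depth(f)\cdot\Depth(g)$. Since $T$ was arbitrary, the claim follows. The main obstacle in formalising this is to ensure two consistency properties: that $A_g$'s answers inside block $i$ depend only on the local query-history of that block (so that interleaving queries across blocks does not corrupt the simulation), and that the committed value $b$ can be chosen freely at the moment of commitment (so that the outer trace is legitimately an $A_f$-play). Both hold for standard adversary strategies, which are purely local and where the adversary is permitted to defer the final value choice until the moment a variable becomes outcome-determining, so the argument carries through.
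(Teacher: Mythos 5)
The paper does not prove this proposition; it is imported verbatim from \cite{Montanaro-cj14}, so there is no in-paper argument to compare against. Your proof is the standard one for the deterministic composition theorem and is sound in outline: the substitution construction for the upper bound is exactly right, and the block-wise adversary simulation is the accepted route to the lower bound. Two points deserve more care than your sketch gives them. First, the inner adversary must be the one that answers each query $x_{i,j}$ so as to maximize $\Depth(g|_{\rho_i,x_{i,j}=a})$ over $a\in\bool$, not merely one that greedily keeps $g(x^{(i)})$ non-constant: a greedy adversary can be steered into a low-complexity restriction and lose the count (and, accordingly, a committed block absorbs \emph{at least} $\Depth(g)$ queries, not ``exactly'' $\Depth(g)$ as you write --- the tree may waste queries there). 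Second, the property you assert at the end --- that the adversary may choose the value $b$ freely at the moment block $i$ commits --- is the crux of the whole argument and should not be left as an appeal to ``standard adversary strategies''; it does hold, for a one-line reason: commitment happens only when both answers to the current query make $g|_{\rho_i}$ constant, and if those two constants were equal then $g|_{\rho_i}$ would already have been constant before the query, contradicting the invariant that the block is still undetermined. With that observation inserted, together with the (also implicit) remark that $f\circ g$ remains undetermined as long as $f$ restricted to the committed outer values is undetermined --- because every uncommitted block can still realize either value of $g$ --- your argument is complete.
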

We want to explore how far something similar can be deduced about $\Rank(f\circ g)$.
The first thing to note is that a direct analogue in terms of $\Rank$ alone is ruled out.
\begin{lemma}\label{lem:compose-rank-example}
For general Boolean functions $f$ and $g$, $\Rank(f\circ g)$ cannot be bounded by any function of $\Rank(f)$
and $\Rank(g)$ alone. 
\end{lemma}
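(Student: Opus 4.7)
The plan is to give an explicit family of pairs $(f,g)$ where both $\Rank(f)$ and $\Rank(g)$ are constant (in fact, equal to $1$) while $\Rank(f\circ g)$ grows without bound. If such a family exists, then no function of the arguments $\Rank(f)$ and $\Rank(g)$ alone can serve as an upper bound on $\Rank(f\circ g)$, establishing the lemma.

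The natural candidates are the $\AND$ and $\OR$ functions, since from \cref{tab:tabulation} we have $\Rank(\OR_n) = \Rank(\AND_n) = 1$ for every $n \ge 1$. The composition $\OR_n \circ \AND_n$ is precisely $\Tribes_{n,n}$. By \cref{thm:rank-tribes} (specifically \cref{lem:rank-tribes-lb} applied to the dual, together with \cref{rem:measures-neg-dual}), $\Rank(\Tribes_{n,n}) = n$.

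So I would take, for each $n \ge 2$, the pair $f_n = \OR_n$ and $g_n = \AND_n$. For this family, $\Rank(f_n) = \Rank(g_n) = 1$, yet $\Rank(f_n \circ g_n) = \Rank(\Tribes_{n,n}) = n$, which is unbounded. Thus if $h(r_1, r_2)$ were any function claimed to upper bound $\Rank(f \circ g)$ in terms of $\Rank(f) = r_1$ and $\Rank(g) = r_2$, we would need $h(1,1) \ge n$ for every $n$, a contradiction.

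There is no real obstacle here: the construction is immediate once one has the rank values from \cref{tab:tabulation} and the tight bound for $\Tribes$ from \cref{thm:rank-tribes}. The only small point to note is the need for $m \ge 2$ in the $\Tribes$ bound, which is harmless since we choose $n \ge 2$. (Symmetrically, one could use $\AND_n \circ \OR_n = \dTribes_{n,n}$ and invoke \cref{lem:rank-tribes-lb} directly.)
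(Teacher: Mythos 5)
Your proposal is correct and is essentially the paper's own argument: the paper takes $f=\AND_n$, $g=\OR_n$ so that $f\circ g=\dTribes_n$ with rank $n$ by \cref{thm:rank-tribes}, which is exactly the symmetric variant you mention at the end. The dual choice $\OR_n\circ\AND_n=\Tribes_{n,n}$ works equally well via the same theorem, so there is no substantive difference.
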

\begin{proof}
Let $f=\AND_n$ and $g=\OR_n$. Then $\Rank(f)=\Rank(g)=1$. But $\Rank(f\circ g) = \Rank(\dTribes_n) = n$, as seen in \cref{thm:rank-tribes}. 
\end{proof}

For $f\circ g$, let $T_f$, $T_g$ be decision trees for $f$,
$g$ respectively. One way to construct a decision tree for $f\circ g$
is to start with $T_f$, inflate each internal node $u$ of $T_f$ into a
copy of $T_g$ on the appropriate inputs, and attach the left and the right subtree of $u$ 
as appropriate at the leaves of this copy of
$T_g$. By \cref{prop:depth-compose}, the decision tree thus obtained for $f\circ g$ is optimal for $\Depth$ if one start with depth-optimal trees $T_f$ and $T_g$ for $f$ and $g$ respectively. In terms of rank, we  can also show that the rank of the  decision tree so constructed is bounded above by  $\Depth(T_f)\Rank(T_g)=\wDepth(f,[r,r,\ldots , r])$, where $r=\Rank(T_g)$. (This is  the construction used in the proofs of \cref{lem:rank-tribes-ub,lem:and-parity-ub},  where further properties of the $\Parity$ function are used  to show that the resulting tree's rank is even smaller than
$\Depth(f)\Rank(g)$.) In fact, we show below (\cref{thm:compose-rank-ub}) that this holds more generally, when different functions are used in the composition. While this is a relatively straightforward generalisation here, it is necessary to consider such compositions for the lower bound we establish further on in this section.
\begin{restatable}{theorem}{thmComposeRankUb}\label{thm:compose-rank-ub}
  For non-constant boolean functions $g_1, \ldots , g_n$ with $\Rank(g_i)=r_i$, and for $n$-variate  non-constant booolean function $f$,
$$\Rank(f\circ(g_1,g_2,...,g_n)) \le \wDepth(f,[r_1,r_2,...,r_n] ).$$
\end{restatable}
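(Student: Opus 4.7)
The plan is to construct a decision tree for $f\circ(g_1,\ldots,g_n)$ by the standard inflation construction applied to an optimal weighted-depth decision tree for $f$, and then bound its rank by induction using \cref{prop:compose_rank_dt}. Concretely, fix a decision tree $T_f$ for $f$ whose weighted depth with weights $[r_1,\ldots,r_n]$ equals $\wDepth(f,[r_1,\ldots,r_n])$, and for each $i$ fix a rank-optimal decision tree $T_{g_i}$ for $g_i$, so that $\Rank(T_{g_i})=r_i$. Build the composed tree $T$ by replacing each internal node of $T_f$ that queries $x_i$ with a copy of $T_{g_i}$ on the corresponding input block of $f\circ(g_1,\ldots,g_n)$, attaching the left (resp.\ right) subtree of the original node to every $0$-leaf (resp.\ $1$-leaf) of the inserted copy of $T_{g_i}$. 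By construction $T$ computes $f\circ(g_1,\ldots,g_n)$.

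I then prove by induction on $\Depth(T_f)$ the stronger statement that for every decision tree $T_f$ computing any function $f$, the tree $T$ obtained by the above inflation satisfies $\Rank(T)\le \wDepth(T_f,[r_1,\ldots,r_n])$. The base case $\Depth(T_f)=0$ is trivial: $T_f$ is a leaf, $T=T_f$, and both the rank and the weighted depth are $0$. For the inductive step, let $x_i$ be the variable at the root of $T_f$, with left and right subtrees $T_L,T_R$, and let $T'_L,T'_R$ be the corresponding inflated trees. The inductive hypothesis gives $\Rank(T'_L)\le \wDepth(T_L,[r_1,\ldots,r_n])$ and similarly for $T'_R$. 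The tree $T$ has $T_{g_i}$ at its root, with $T'_L$ attached to every $0$-leaf of $T_{g_i}$ and $T'_R$ attached to every $1$-leaf of $T_{g_i}$.

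Now \cref{prop:compose_rank_dt}(1) directly yields
\[
\Rank(T)\;\le\;\Rank(T_{g_i})+\max\{\Rank(T'_L),\Rank(T'_R)\}\;\le\;r_i+\max\{\wDepth(T_L,[r_1,\ldots,r_n]),\wDepth(T_R,[r_1,\ldots,r_n])\},
\]
and the right-hand side is exactly $\wDepth(T_f,[r_1,\ldots,r_n])$, completing the induction. Applying this to the fixed weighted-depth-optimal $T_f$ gives $\Rank(f\circ(g_1,\ldots,g_n))\le \Rank(T)\le \wDepth(f,[r_1,\ldots,r_n])$, as claimed.

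There is no real obstacle here; the argument is a straightforward inductive composition, and the key substantive ingredient is the rank-composition inequality in \cref{prop:compose_rank_dt}(1). The only mild bookkeeping point is to make sure the weighted-depth recursion matches the rank recursion at each internal node, which is immediate from the definitions once the inflation construction is set up so that an $x_i$-node of $T_f$ is replaced by $T_{g_i}$ of rank $r_i$.
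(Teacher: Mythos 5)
Your proposal is correct and follows essentially the same route as the paper's proof: inflate a weighted-depth-optimal tree for $f$ with rank-optimal trees for the $g_i$, then bound the rank by induction on $\Depth(T_f)$ using \cref{prop:compose_rank_dt}(1), with the weighted-depth recursion matching the rank recursion at the root. No substantive differences to note.
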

\begin{proof}
  Let $h$ denote the function $f\circ(g_1,g_2,...,g_n)$. For $i\in[n]$, let $m_i$ be the arity of $g_i$.  We call $x_{i,1}, x_{i,2}, \ldots , x_{i,m_i}$ the $i$th block of variables of $h$; $g_i$ is evaluated on this block.   
  Let $T_f$ be any decision tree for $f$. 
  For each $i\in [n]$, let $T_{g_i}$ be a rank-optimal tree for $g_i$.  Consider the following recursive construction of a decision tree $T_{h}$ for $h$. 
\begin{enumerate}
\item Base Case: $\Depth(T_f)=0$. Then $f$ and $h$ are the same constant function, so set $T_h=T_f$. 
\item Recursion Step: $\Depth(T_f)\ge 1$. Let $x_{i}$ be the variable queried at the root node of $T_f$, and let $T_0$ and $T_1$ be the left and the right subtree of $T_f$, computing functions $f_0,f_1$ respectively. For notational convenience, we still view $f_0,f_1$ as functions on $n$ variables, although they do not depend on their $i$th variable. 
  Recursively construct, for $b\in\bool$, the trees $T'_b$ computing $f_b\circ(g_1,\ldots,g_{i-1},b,g_{i+1},\ldots,g_n)$ on the variables $x_{k,\ell}$ for $k\neq i$.   Starting with the tree $T_{g_i}$ on the $i$th block of variables, attach tree $T'_b$ to each leaf labeled $b$ to obtain the tree $T_h$. 
\end{enumerate}
From the construction, it is obvious that $T_{h}$ is a decision tree for $f\circ (g_1,\ldots ,g_n)$. It remains to analyse the rank of $T_{h}$. Proceeding by
induction on $\Depth(T_f)$, we show that $\Rank(T_{h}) \le D_w(T_f,[r_1,r_2,...,r_n] )$. 
\begin{enumerate}
\item Base Case: $\Depth(T_f)=0$. Then 
$T_h=T_f$, so   $\Rank(T_{h}) = D_w(T_f,[r_1,r_2,...,r_n] )=0$.
\item Induction: $\Depth(T_f) \ge 1$.
\begin{align*}
  \Rank(T_{h}) &\le \Rank(T_{g_i}) + \max\{\Rank(T_0'),\Rank(T_1')\}
  \quad (\textrm{by \cref{prop:compose_rank_dt}})\\
  &=  r_i + \max_{b\in\bool}\{\Rank(T_b')\} \\
  &\le r_i + \max_{b\in\bool}\{D_w(T_b,[r_1,r_2,...,r_n])\}
    \quad (\textrm{by induction})\\
  &= D_w(T_f,[r_1,r_2,...,r_n] )
    \quad \text{by definition of $D_w$} 
\end{align*}
\end{enumerate}
Picking $T_f$ to be a tree for $f$ that is optimal with respect to weights $[r_1,r_2,...,r_n]$ , we obtain $\Rank(h) \le \Rank(T_{h}) \le D_w(T_f,[r_1,r_2,...,r_n] ) = D_w(f,[r_1,r_2,...,r_n] )$.
\end{proof}

The really interesting question, however, is whether we can show a good lower bound for the rank of a composed function. This will help us understand how  good is the upper bound in \cref{thm:compose-rank-ub}. 
To begin with, note that for non-constant Boolean functions $f,g$, both $f$ and $g$ are sub-functions of $f\circ g$. Hence \cref{prop:rank_subfn} implies the following.
\begin{proposition}\label{prop:compose-rank-max-lb}
  For  non-constant boolean functions  $f,g$,  
$$\Rank(f\circ g) \ge \max\{\Rank(f),\Rank(g)\}.$$
\end{proposition}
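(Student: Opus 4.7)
The plan is to exhibit both $f$ and $g$ (up to input/output negation, which preserves rank by \cref{rem:measures-neg-dual}) as subfunctions of $f\circ g$, and then apply \cref{prop:rank_subfn} together with the fact that rank is invariant under taking complements/duals.

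First I would show $\Rank(f\circ g) \ge \Rank(g)$. Since $f$ is non-constant, there exist an index $i\in[n]$ and an assignment $\alpha \in \boolset{n-1}$ to the other inputs of $f$ such that $f$ restricted by $\alpha$ becomes either $x_i$ or $\neg x_i$. Since $g$ is non-constant, it attains both values $0$ and $1$; hence for every $j\neq i$ we can fix the $j$-th input block of $f\circ g$ to some total assignment $\beta^j$ with $g(\beta^j)=\alpha_j$. Under this restriction, $f\circ g$ becomes $g$ (or $\neg g$) on the $i$-th block of variables. By \cref{prop:rank_subfn}, $\Rank(f\circ g)$ is at least the rank of this subfunction, which by \cref{rem:measures-neg-dual} equals $\Rank(g)$.

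Next I would show $\Rank(f\circ g) \ge \Rank(f)$ by an analogous argument applied block-by-block. Since $g$ is non-constant (and say on $m$ inputs), there exist a variable $x_j$ and an assignment to the remaining $m-1$ variables of $g$ such that the restriction of $g$ equals $x_j$ or $\neg x_j$. Apply such a restriction independently to each of the $n$ blocks of $f\circ g$, leaving one free variable per block. The resulting subfunction is $f'(y_1,\ldots,y_n)=f(\epsilon_1 y_1,\ldots,\epsilon_n y_n)$ for some $\epsilon_i\in\{\mathrm{id},\neg\}$ (a function obtained from $f$ by complementing some inputs). Again by \cref{prop:rank_subfn} and \cref{rem:measures-neg-dual}, $\Rank(f\circ g)\ge \Rank(f')=\Rank(f)$.

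Combining the two bounds yields the claimed inequality. The only mild subtlety is that the natural restrictions yield $f$ and $g$ only up to complementation of some inputs (and possibly the output), so the step that really does the work is invoking \cref{rem:measures-neg-dual} to conclude that these negation-variants have exactly the same rank. There is no real obstacle beyond bookkeeping: the non-constancy of $f$ and $g$ is precisely what guarantees the existence of the restrictions used in each half of the argument.
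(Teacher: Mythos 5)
Your proof is correct and is essentially the paper's argument: the paper disposes of this proposition in one line by observing that both $f$ and $g$ are subfunctions of $f\circ g$ and invoking \cref{prop:rank_subfn}, and your write-up simply fills in the details of why the non-constancy hypotheses yield the required restrictions. The only cosmetic point is that \cref{rem:measures-neg-dual} as stated covers the complement and the dual rather than negation of an arbitrary subset of inputs, but the latter invariance is immediate (swap the two children at every node querying a negated variable), so nothing is missing.
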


A better lower bound in terms of weighted depth complexity of $f$ is given below. This generalises the lower bounds from \cref{lem:rank-tribes-lb,lem:and-parity-lb}. The proofs of those lemmas crucially used nice symmetry properties of the inner function, whereas the bound below applies for any non-constant inner function. It is significantly weaker than the bound from \cref{lem:rank-tribes-lb} but matches that from \cref{lem:and-parity-lb}.
\begin{restatable}{theorem}{thmComposeRankLb}\label{thm:compose-rank-lb}
  For non-constant boolean functions $g_1, \ldots , g_n$ with $\Rank(g_i)=r_i$, and for $n$-variate  non-constant boolean function $f$,
  \begin{align*}
    \Rank(f\circ(g_1,g_2,...,g_n)) & \ge \wDepth(f,[r_1-1,r_2-1,...,r_n-1] ) + 1 \\
    & \ge \wDepth(f,[r_1,r_2,...,r_n] ) - (n-1).
  \end{align*}
\end{restatable}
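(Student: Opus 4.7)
The second inequality is immediate: applying Fact~\ref{fact:wtd-dec-tree} $n$ times gives $\wDepth(f,[r_1,\ldots,r_n]) \le \wDepth(f,[r_1-1,\ldots,r_n-1]) + n$, which rearranges to the claimed bound.

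For the first inequality, the plan is induction on $M = \sum_{i=1}^n m_i$, where $m_i$ is the arity of $g_i$. The base case $M = n$ (every $g_i$ a literal, so every $r_i = 1$) reduces the target to $\Rank(h) \ge 1$, which holds since $h$ is non-constant. For the inductive step, take any decision tree $T$ computing $h$, consider the variable $x_{k,j}$ queried at its root, and write $g_k^b \triangleq g_k|_{x_{k,j}=b}$ with $r_k^b \triangleq \Rank(g_k^b) \le r_k$. The subtrees $T_0, T_1$ compute compositions on $M-1$ total inner variables, to which the induction hypothesis applies. I case-split on whether each $g_k^b$ is constant.

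If both $g_k^b$ are constant, then $g_k$ is essentially a literal with $r_k = 1$ (so its weight is $0$), and each $T_b$ computes $f|_{y_k=v_b} \circ (g_1,\ldots,\widehat{g_k},\ldots,g_n)$ on $n-1$ blocks; the inductive bounds on $\Rank(T_0),\Rank(T_1)$ together with $\Rank(T) \ge \max(\Rank(T_0),\Rank(T_1))$ and the identity $\wDepth(f,[\ldots,0,\ldots]) = \max_b \wDepth(f|_{y_k=b},[\ldots])$ deliver the claim (a branch where $f|_{y_k=v_b}$ is itself constant is harmless because $\wDepth = 0$ there and the max is attained on the other side). If exactly one of them, say $g_k^0$, is constant, then combining monotonicity of rank ($r_k^1 \le r_k$) with the tree for $g_k$ that queries $x_{k,j}$ at its root (forcing $r_k \le r_k^1$) gives $r_k^1 = r_k$; induction on $T_1$ already yields $\Rank(T_1) \ge \wDepth(f,[\ldots,r_k-1,\ldots])+1$, and $\Rank(T) \ge \Rank(T_1)$ closes this case.

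The interesting case is when both $g_k^b$ are non-constant. Querying $x_{k,j}$ at the root of a $g_k$-tree gives $r_k \le \max(r_k^0,r_k^1) + [r_k^0 = r_k^1]$. When $r_k^0 \ne r_k^1$, the maximum equals $r_k$, and induction on the subtree of larger rank combined with $\Rank(T) \ge \max$ immediately gives the target. When $r_k^0 = r_k^1 = r$, the inequality forces $r \ge r_k - 1$; the symmetric inductive bound $\Rank(T_b) \ge \wDepth(f,[\ldots,r-1,\ldots])+1$ combined with $\Rank(T) \ge 1 + \min(\Rank(T_0),\Rank(T_1))$ yields $\Rank(T) \ge \wDepth(f,[\ldots,r-1,\ldots])+2$, and then Fact~\ref{fact:wtd-dec-tree} is used to absorb the possible one-unit drop from weight $r_k - 1$ to $r - 1$, recovering $\wDepth(f,[\ldots,r_k-1,\ldots])+1$. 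This last sub-case is the main obstacle: the induction hypothesis loses weight in the $k$-th slot, and the extra unit coming from $\min+1$ (instead of just $\max$) is exactly what Fact~\ref{fact:wtd-dec-tree} supplies.
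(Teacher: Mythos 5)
Your proof is correct and follows essentially the same route as the paper's: induction on the total inner arity, a case analysis on whether the restrictions $g_k^0,g_k^1$ at the root variable are constant, the rank recurrence ($\Rank(T)\ge\max$ or $\ge 1+\min$), and \cref{fact:wtd-dec-tree} to absorb the one-unit weight drop in the equal-rank subcase. The only (harmless) organizational difference is that you run a single induction on $M$ where the paper nests an induction on $n$ outside one on $M$; your version subsumes the paper's Case 1 (both restrictions constant) because removing a block also decreases $M$.
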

\begin{proof}
  The second inequality above is straightforward: let $T$ be a decision tree for $f$ that is optimal with respect to weights $r_1-1,\ldots ,r_n-1$. Since $T$ can be assumed to be reduced, repeated application of \cref{fact:wtd-dec-tree} shows that the  depth of $T$ with respect to weights $r_1,\ldots ,r_n$ increases by at most $n$. Thus $\wDepth(f,[r_1,\ldots,r_n]) \le \wDepth(T,[r_1,\ldots,r_n])  \le \wDepth(T,[r_1-1,\ldots,r_n-1])+n=\wDepth(f,[r_1-1,\ldots,r_n-1])+n$, giving the claimed inequality. 
  
  We now turn our attention to the first inequality, which is not so straightforward.   We prove it by induction on $n$. 
  Let $h$ denote the function $f\circ(g_1,g_2,...,g_n)$. For $i\in[n]$, let $m_i$ be the arity of $g_i$.  We call $x_{i,1}, x_{i,2}, \ldots , x_{i,m_i}$ the $i$th block of variables of $h$; $g_i$ is evaluated on this block.

  In the base case, $n=1$.  Since $f$ is non-constant, $f$ can either be $x$ or $\neg x$; accordingly, $h$ is either $g_1$ or  $\neg g_1$.  So $D_w(f,[r_1-1])=r_1-1$ and $\Rank(h)=\Rank(g_1)=r_1$, and the inequality holds.

 For the inductive step, when $n>1$, we proceed by induction on $M=\sum_{i=1}^n m_i$. 
 In the base case, $M=n$, and each $m_i$ is equal to $1$. Since all $g_i$'s  are non-constant, $r_i=1$ for all $i$.  So  $D_w(f,[r_1-1,r_2-1,...,r_n-1])+1 = D_w(f,[0,0,...,0])+1=1$. 
Since all $r_i$'s are $1$, each $g_i$'s is either $x_{i,1}$ or $\neg x_{i,1}$, 
Thus $h$ is the same as $f$ upto renaming of the literals. Hence
$\Rank(h)=\Rank(f)\ge 1$.

For the inductive step, $M>n>1$. 
Take a rank-optimal decision tree $T_h$ for $h$. We want to show that
$\wDepth(f,[r_1-1,\ldots, r_n-1]) \le \Rank(T_h)-1$. 
Without loss of generality, let  $x_{1,1}$ be the variable queried at the root. Let  $T_0$ and $T_1$ be the left and the right subtree of $T_h$. For $b\in\bool$,
let $g_1^b$ be the subfunction of $g_1$ when $x_{1,1}$ is set to $b$. Note that $T_b$  computes $h_b\triangleq f\circ(g_1^b,g_2,...,g_n)$, a function on $M-1$ variables. We would like to use induction to deduce information about $\Rank(T_b)$. However, $g_1^b$ may be a constant function, and then induction does not apply. So we do a case analysis on whether or not $g_1^0$ and $g_1^1$ are constant  functions; this case analysis is lengthy and tedious  but most cases are straightforward.
\begin{itemize}
\item Case 1: Both $g_1^0$ and $g_1^1$ are constant functions. Since $g_1$ is non-constant, $g_1^0 \neq g_1^1$, and $r_1=\Rank(g_1)=1$. Assume that $g_1^0=0$ and $g_1^1=1$; the argument for the other case is identical.
  For $b\in\bool$, let $f_b$ be the function $f(b,x_2,\ldots,x_n)$; then
  $h_b=f_b\circ(g_2,\ldots ,g_n)$. View $f_b$ as functions on $n-1$ variables.
\begin{itemize}
\item Case 1a: Both $f_0$ and $f_1$ are constant functions. Then  $f$ is either $x_1$ or $\neg x_1$, so $\wDepth(f,[r_1-1,r_2-1,...,r_n-1]) =
  \wDepth(f,[0,r_2-1,...,r_n-1]) = 0$. Also, in this case, $h$ is either $x_{1,1}$ or $\neg x_{1,1}$, so $\Rank(h)=1$. Hence the inequality holds. 
\item Case 1b:  Exactly one of $f_0$ and $f_1$ is a constant function; without loss of generality, let $f_0$ be a constant function. First, observe that for any weights $w_2,\ldots,w_n$, $D_w(f,[0,w_2,...,w_n]) \le D_w(f_1,[w_2,...,w_n])$: we  can obtain a decision tree for $f$ witnessing this by first querying $x_1$, making the $x_1=0$ child a leaf labeled $f_0$, and attaching the optimal tree for $f_1$ on the $x_1=1$ branch. Second, note that since $f_1$ and all $g_i$ are non-constant, so is $h_1$. Now
\begin{align*}
  \Rank(h)&=\Rank(h_1) && \text{since $\Rank(h_0)=0$} \\
  &\ge D_w(f_1,[r_2-1,...,r_n-1])+1 && \text{by induction hypothesis on $n$}\\
&\ge D_w(f,[0,r_2-1,...,r_n-1])+1 && \text{by first  observation above}\\
&= D_w(f,[r_1-1,r_2-1,...,r_n-1])+1  && \text{since $r_1=1$}
\end{align*}
\item Case 1c:  Both $f_0$ and $f_1$ are non-constant functions. 
\begin{align*}
\Rank(h) &\ge \max(\Rank(h_0),\Rank(h_1))\\
&\ge \max_{b\in\bool}\{D_w(f_b,[r_2-1,...,r_n-1])\}  + 1 && \text{by induction hypothesis on $n$}\\
&\ge D_w(f,[0,r_2-1,...,r_n-1]) + 1 && \text{by def.\ of weighted depth}\\
&&& \text{of a tree querying $x_1$ first} \\
&= D_w(f,[r_1-1,r_2-1,...,r_n-1]) + 1 && \text{since $r_1=1$}
\end{align*}
\end{itemize}

\item Case 2: One of $g_1^0$ and $g_1^1$ is a constant function; assume without loss of generality that $g_1^0$ be constant. In this case, we can conclude  that $\Rank(g_1) = \Rank(g_1^1)$:  $\Rank(g_1^1) \le \Rank(g_1)$ by \cref{prop:rank_subfn}, and  $\Rank(g_1) \le \Rank(g_1^1)$ as witnessed by a decision tree for $g_1$ that queries $x_{1,1}$ first, sets the $x_{1,1}=0$ branch to a leaf labeled $g_1^0$, and attaches an optimal tree for $g_1^1$ on the other branch. 
Now
\begin{align*}
\Rank(h)&\ge \Rank(h_1)\\
&\ge D_w(f,[\Rank(g_1^1)-1,r_2-1,...,r_n-1])+1 && \text{by induction on $M$}\\
&= D_w(f,[r_1-1,r_2-1,...,r_n-1])+1 && \text{since $\Rank(g_1^1)=\Rank(g_1)$}
\end{align*}

\item Case 3:  Both $g_1^0$ and $g_1^1$ are non-constant functions. Let  $r_1^b=\Rank(g_1^b)\ge 1$.  A decision tree for $g_1$ that queries $x_{1,1}$ first and then uses optimal trees for $g_1^0$ and $g_1^1$ has rank $R \ge r_1$ and witnesses that $1+\max\{r_1^0,r_1^1\} \ge R \ge r_1$. (Note that $R$ may be more than $r_1$, since a rank-optimal tree for $g_1$ may not query $x_{1,1}$ first.)  

\begin{itemize}
\item Case 3a: $\max_b\{r_1^b\} = r_1-1$. Then $R=1+\max\{r_1^0,r_1^1\}$, which can only happen if  $r_1^0=r_1^1$, and hence $r_1^0=r_1^1=r_1-1$.  We can further conclude that $r_1 \ge 2$. Indeed, if $r_1=1$, then  $r_1-1=r_1^0=r_1^1=0$,
  contradicting the fact that we are in Case 3. 

For $b\in\bool$,
\begin{align*}
  \Rank(h_b) &= \Rank(f\circ(g_1^b,g_2,\ldots ,g_n))\\
  &\ge \wDepth(f,[r_1^b-1, r_2-1, \ldots, r_n-1])+1 \quad \text{by induction on $M$} \\
  &= \wDepth(f,[r_1-2,r_2-1,\ldots,r_n-1])+1 \quad \text{since $r_1-1= r_1^b$}.\\ 
\text{Hence~}  \Rank(h) &\ge 1 + \min_b \Rank(h_b) \\
  &\ge \wDepth(f,[r_1-2,r_2-1,\ldots,r_n-1])+2 \quad \text{derivation above} \\
  &\ge \wDepth(f,[r_1-1, r_2-1, \ldots, r_n-1])+1 \quad \text{by \cref{fact:wtd-dec-tree}}
\end{align*}

\item Case 3b: $\max_b\{r_1^b\} > r_1-1$. So $\max_b\{r_1^b\} \ge r_1$. 
\begin{align*}
  \Rank(h) &\ge \max_b \Rank(h_b) \\
  &\ge \max_b \wDepth(f,[r_1^b-1,r_2-1,\ldots,r_n-1])+1 \quad \text{by induction on $M$} \\
  &\ge \wDepth(f,[r_1-1,r_2-1,\ldots,r_n-1])+1 \quad \text{since $\max_b\{r_1^b\} \ge r_1$} 
\end{align*}
\end{itemize} 
\end{itemize}

This completes the inductive step for $M>n>1$ and completes the entire proof.
\end{proof}

From \cref{thm:rank-tribes,thm:compose-rank-ub,thm:compose-rank-lb}, we obtain the following:
\begin{theorem}\label{thm:compose-rank-bounds}
  For non-constant boolean functions $f,g$,
\[ \Depth(f) (\Rank(g)-1) +1 \le \Rank(f\circ g) \le \Depth(f) \Rank(g). \]
Both inequalities are tight; the first for $\AND_n \circ \Parity_m$ 
and the second for $\Tribes_n$ and  $\dTribes_n$.
\end{theorem}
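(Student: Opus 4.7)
The plan is to simply specialize \cref{thm:compose-rank-ub} and \cref{thm:compose-rank-lb} to the case where $g_1 = g_2 = \cdots = g_n = g$, and then verify tightness using the exact rank computations already established in \cref{thm:rank-tribes}.

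For the specialization, let $r = \Rank(g)$, so each $r_i = r$ in the hypothesis of both theorems. The key observation is that when all weights are equal, weighted depth reduces to a scalar multiple of ordinary depth: for any reduced decision tree computing $f$ and any constant weight $w$, $\wDepth(T, [w, w, \ldots, w]) = w \cdot \Depth(T)$, and hence $\wDepth(f, [w,\ldots,w]) = w \cdot \Depth(f)$. Applying this with $w = r$ to \cref{thm:compose-rank-ub} yields $\Rank(f \circ g) \le r \cdot \Depth(f) = \Depth(f)\Rank(g)$. Applying it with $w = r-1$ to \cref{thm:compose-rank-lb} yields $\Rank(f \circ g) \ge (r-1) \cdot \Depth(f) + 1 = \Depth(f)(\Rank(g)-1) + 1$.

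For tightness, I would invoke \cref{thm:rank-tribes} directly. The upper bound is saturated by $\dTribes_n = \AND_n \circ \OR_n$: here $\Depth(\AND_n) = n$ and $\Rank(\OR_n) = 1$, so the upper bound gives $n$, and indeed $\Rank(\dTribes_n) = n$ by \cref{thm:rank-tribes}. The analogous calculation works for $\Tribes_n = \OR_n \circ \AND_n$ by duality (\cref{rem:measures-neg-dual}) or directly. The lower bound is saturated by $\AND_n \circ \Parity_m$: here $\Depth(\AND_n) = n$ and $\Rank(\Parity_m) = m$, so the lower bound gives $n(m-1) + 1$, which matches $\Rank(\AND_n \circ \Parity_m) = n(m-1) + 1$ from \cref{thm:rank-tribes}.

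There is no real obstacle here — the theorem is a clean corollary of the more general weighted bounds, and the tight examples have already been computed in \cref{sec:simple-calc}. The only minor thing to verify carefully is the reduction $\wDepth(f, [w,\ldots,w]) = w \cdot \Depth(f)$, which is immediate since every root-to-leaf path in a reduced tree of depth $d$ contributes exactly $wd$ to the weighted depth and is thus realized by a depth-optimal tree for $f$.
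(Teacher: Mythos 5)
Your proposal matches the paper's own derivation: the paper obtains this theorem precisely by specializing \cref{thm:compose-rank-ub} and \cref{thm:compose-rank-lb} to equal inner functions (where equal weights make weighted depth collapse to $\Rank(g)\cdot\Depth(f)$, resp.\ $(\Rank(g)-1)\cdot\Depth(f)$), and by citing \cref{thm:rank-tribes} for tightness. Your argument is correct, including the boundary check that the scalar reduction still holds when the common weight is $\Rank(g)-1=0$.
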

It is worth noting that in the above bounds,  the role of $\Rank$ and $\Depth$ cannot be exchanged. With $f=\AND_n$ and $g=\Parity_n$, $\Rank(f)\Depth(g)=n < n(n-1)+1 \le \Rank(f\circ g)$, and $\Rank(g\circ f) \le n < n(n-1)+1 = \Rank(g)(\Depth(f)-1)+1 = \Depth(f)(\Rank(g)-1)+1$. 

Since any non-constant symmetric  function is evasive (\cref{prop:symm_evasive}), from \cref{thm:compose-rank-ub,thm:compose-rank-lb}, we obtain the following:
\begin{corollary}
  For non-constant boolean functions $g_1, \ldots , g_n$ with $\Rank(g_i)=r_i$, and for $n$-variate symmetric non-constant booolean function $f$,
$$\sum_i r_i - (n-1) \le \Rank(f\circ(g_1,g_2,...,g_n))\le \sum_i r_i .$$
\end{corollary}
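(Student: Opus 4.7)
The plan is to derive both bounds immediately from the two composition theorems (\cref{thm:compose-rank-ub} and \cref{thm:compose-rank-lb}) by invoking the evasiveness of symmetric functions (\cref{prop:symm_evasive}) to collapse the weighted depth of $f$ into a simple sum of weights.

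First, I would verify that the weight vectors appearing in the composition bounds are non-negative, so that $\wDepth$ is well-defined and the evasiveness statement applies. Since each $g_i$ is non-constant, $r_i = \Rank(g_i) \ge 1$, so both $[r_1,\ldots,r_n]$ and $[r_1-1,\ldots,r_n-1]$ are non-negative weight vectors.

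For the upper bound, I would apply \cref{thm:compose-rank-ub} to obtain $\Rank(f\circ(g_1,\ldots,g_n)) \le \wDepth(f,[r_1,\ldots,r_n])$, and then use the fact that $f$ is a non-constant symmetric function, hence evasive with respect to every weight vector (\cref{prop:symm_evasive}(2)), so $\wDepth(f,[r_1,\ldots,r_n]) = \sum_i r_i$.

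For the lower bound, I would apply \cref{thm:compose-rank-lb} to obtain $\Rank(f\circ(g_1,\ldots,g_n)) \ge \wDepth(f,[r_1-1,\ldots,r_n-1]) + 1$. By the same evasiveness argument, $\wDepth(f,[r_1-1,\ldots,r_n-1]) = \sum_i (r_i - 1)$, and hence the right-hand side equals $\sum_i r_i - (n-1)$. There is essentially no obstacle here; the entire argument is a clean specialisation of the general composition bounds to the symmetric outer function setting, and the ``hard work'' has already been done in proving \cref{thm:compose-rank-ub,thm:compose-rank-lb} and \cref{prop:symm_evasive}.
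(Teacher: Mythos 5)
Your proposal is correct and matches the paper's own derivation: the paper obtains this corollary exactly by specialising \cref{thm:compose-rank-ub,thm:compose-rank-lb} and using \cref{prop:symm_evasive} to replace the weighted depths by $\sum_i r_i$ and $\sum_i (r_i-1)$ respectively. Your extra check that the weight vectors are non-negative (since each $r_i \ge 1$) is a reasonable bit of added care but introduces nothing beyond the paper's argument.
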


For iterated composed functions, we obtain the following corollary.
\begin{corollary}\label{corr:iterated-rank}
 For $k\ge 1$ and non-constant boolean functions $f$,
\[ \Depth(f)^{k-1} (\Rank(f)-1) +1 \le \Rank(f^{\otimes k}) \le \Depth(f)^{k-1} \Rank(f). \]
\end{corollary}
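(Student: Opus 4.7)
The plan is a straightforward induction on $k$, using Theorem~\ref{thm:compose-rank-bounds} as the inductive workhorse and the recursive definition $f^{\otimes k} = f\circ f^{\otimes(k-1)}$.

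For the base case $k=1$, we have $f^{\otimes 1}=f$ and $\Depth(f)^0 = 1$, so both bounds collapse to $\Rank(f) \le \Rank(f) \le \Rank(f)$, which holds trivially. Note the implicit assumption $\Depth(f)\ge 1$ is fine since $f$ is non-constant (so the expression $\Depth(f)^{k-1}$ is well-defined and nonzero for all $k\ge 1$).

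For the inductive step, suppose the bounds hold for $k-1$ with $k\ge 2$, so that
\[ \Depth(f)^{k-2}(\Rank(f)-1) + 1 \le \Rank(f^{\otimes(k-1)}) \le \Depth(f)^{k-2}\Rank(f). \]
Applying the upper bound from Theorem~\ref{thm:compose-rank-bounds} to $f\circ f^{\otimes(k-1)}$ and then using the inductive upper bound gives
\[ \Rank(f^{\otimes k}) \le \Depth(f)\cdot \Rank(f^{\otimes(k-1)}) \le \Depth(f)\cdot \Depth(f)^{k-2}\Rank(f) = \Depth(f)^{k-1}\Rank(f), \]
as required. For the lower bound, Theorem~\ref{thm:compose-rank-bounds} gives
\[ \Rank(f^{\otimes k}) \ge \Depth(f)\bigl(\Rank(f^{\otimes(k-1)})-1\bigr) + 1, \]
and substituting the inductive lower bound $\Rank(f^{\otimes(k-1)}) \ge \Depth(f)^{k-2}(\Rank(f)-1)+1$ yields
\[ \Rank(f^{\otimes k}) \ge \Depth(f)\cdot\Depth(f)^{k-2}(\Rank(f)-1) + 1 = \Depth(f)^{k-1}(\Rank(f)-1) + 1. \]

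There is no real obstacle here: the ``hard work'' has already been done in Theorem~\ref{thm:compose-rank-bounds}, and the two inductive hypotheses telescope cleanly because the ``$+1$'' terms in the lower bound are absorbed after subtracting $1$ inside the $\Rank(g)-1$ factor. The only thing to be slightly careful about is that the lower-bound inductive hypothesis must be plugged into the $\Rank(g)-1$ slot of Theorem~\ref{thm:compose-rank-bounds} (not the outer additive constant), so that the $-1$ and $+1$ cancel as desired.
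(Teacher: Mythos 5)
Your proof is correct and follows essentially the same route as the paper: induction on $k$ via the recursion $f^{\otimes k}=f\circ f^{\otimes(k-1)}$, applying both bounds of \cref{thm:compose-rank-bounds} and letting the inductive lower bound's ``$+1$'' cancel inside the $\Rank(g)-1$ factor. Nothing further is needed.
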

\begin{proof} 
The result follows from \cref{thm:compose-rank-bounds}  applying induction on $k$. The base case, k=1, is straightforward. For the induction step, $k>1$, applying the recursive definition of iterated composed functions, we have
\begin{align*}
 \Rank(f^{\otimes k}) &= \Rank(f\circ f^{\otimes (k-1)})\\
 &\ge \Depth(f)(\Rank(f^{\otimes (k-1)}) -1)+1 \quad \text{by \cref{thm:compose-rank-bounds}} \\
 &\ge \Depth(f)(\Depth(f)^{k-2}(\Rank(f)-1))+1 \quad \text{by induction on $k$}\\
 &= \Depth(f)^{k-1}(\Rank(f)-1)+1.
 \\
 \Rank(f^{\otimes k}) &= \Rank(f\circ f^{\otimes (k-1)})\\
 &\le \Depth(f)\Rank(f^{\otimes (k-1)}) \quad \text{by \cref{thm:compose-rank-bounds}} \\
 &\le \Depth(f)(\Depth(f)^{k-2}(\Rank(f)) \quad \text{by induction on $k$}\\
 &= \Depth(f)^{k-1}\Rank(f).
\end{align*}
\end{proof}

\section{Applications}\label{sec:application}
In this section, we give some applications of our results and methods. We first show how to obtain tight lower bounds on $\log \Size$ for composed functions using the rank lower bound  from \cref{thm:compose-rank-bounds}. Next, we relate rank to query complexity in more general decision trees, namely $\CONJ$ decision trees, and show that rank (for ordinary decision trees) characterizes query complexity in this model up to $\log n$ factors.

\subsection{Tight lower bounds for $\log\Size$ for Composed functions}
\label{subsec:size-lb}
It was shown in  \cite{EH-IC1989} that  every boolean function $f$ in $n$ variables has a decision tree of size at most $\exp(O(\log n \log^2 N(f))$, where $N(f)$ is the total number of monomials in the minimal $\dnf$ for $f$ and $\neg f$. Later,  in  \cite{jukna1999p}, this relation was proved to be optimal up to $\log n$ factor. To prove this, the authors of \cite{jukna1999p} showed that iterated $\AND_2\circ \OR_2$ and iterated $\Maj_3$ on $n=4^k$ and $n=3^k$ variables  require decision trees of size $\exp(\Omega(\log^{\log_2 3} N))$ and $\exp(\Omega(\log^{2} N))$ respectively. It is easy to show that $N((\AND_2\circ \OR_2)^{\otimes k})$ and $N(\Maj_3^{\otimes k})$ is $\exp(O(n^{1/\log_2 3}))$ and $exp(O(n^{1/2}))$ respectively. So  showing optimality essentially boiled down to showing that the decision tree size of iterated $\AND_2\circ \OR_2$ and iterated $\Maj_3$ on $n$ variables is exponential $\exp(\Omega(n))$. This was established in \cite{jukna1999p} using spectral methods. We recover these size lower bounds using our rank lower bound for composed functions.
\begin{corollary}\label{corr:examples}
For $k\ge 1$ and $n=3^k$, 
$$\log \Size(\Maj_3^{\otimes k}) \ge \Rank(\Maj_3^{\otimes k}) \ge n/3 +1.$$
For $k\ge 1$ and $n=4^k$,
$$\log \Size((\AND_2\circ \OR_2)^{\otimes k}) \ge \Rank((\AND_2\circ \OR_2)^{\otimes k}) \ge n/4+1.$$
\end{corollary}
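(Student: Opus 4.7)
The plan is to derive both inequality chains as direct consequences of results already proved in the paper: the size--rank inequality from \cref{prop:rank_size} handles the left-hand inequalities, and the iterated-composition lower bound from \cref{corr:iterated-rank} handles the right-hand inequalities once we plug in the depth and rank of the base function.

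First, for any non-constant Boolean function $f$, \cref{prop:rank_size} gives $\Rank(f) \le \log\Size(f)$. Applying this with $f=\Maj_3^{\otimes k}$ and with $f=(\AND_2\circ\OR_2)^{\otimes k}$ immediately yields the first inequality in each chain, so the task reduces to proving the rank lower bounds $\Rank(\Maj_3^{\otimes k}) \ge n/3 + 1$ (where $n=3^k$) and $\Rank((\AND_2\circ\OR_2)^{\otimes k}) \ge n/4 + 1$ (where $n=4^k$).

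For these, I will invoke \cref{corr:iterated-rank}, which asserts $\Rank(f^{\otimes k}) \ge \Depth(f)^{k-1}(\Rank(f)-1)+1$. For the base function $f = \Maj_3$, from \cref{tab:tabulation} we read off $\Depth(\Maj_3)=3$ and $\Rank(\Maj_3)=2$, giving
\[\Rank(\Maj_3^{\otimes k}) \ge 3^{k-1}(2-1)+1 = 3^{k-1}+1 = n/3 + 1.\]
For $f = \AND_2 \circ \OR_2 = \dTribes_{2,2}$, we have $\Depth(f)=4$, and by \cref{thm:rank-tribes} (part 1 with $n=m=2$) we get $\Rank(f) = \Rank(\dTribes_{2,2}) = 2$. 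Hence
\[\Rank((\AND_2\circ\OR_2)^{\otimes k}) \ge 4^{k-1}(2-1)+1 = 4^{k-1}+1 = n/4 + 1.\]

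There is no real obstacle here: the proof is a two-line computation once the base-function parameters are looked up. The only care required is in matching up conventions: verifying that $\Maj_3$ corresponds to $\Maj_{2k+1}$ with $k=1$ in the table (hence rank $k+1=2$), and that $\AND_2\circ\OR_2$ is indeed $\dTribes_{2,2}$ to which \cref{thm:rank-tribes} applies (it requires $m \ge 2$, which is satisfied). Both checks are immediate.
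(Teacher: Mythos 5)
Your proposal is correct and follows essentially the same route as the paper: both invoke \cref{prop:rank_size} for the size--rank inequality and \cref{corr:iterated-rank} with the base-function parameters $\Depth(\Maj_3)=3$, $\Rank(\Maj_3)=2$ and $\Depth(\AND_2\circ\OR_2)=4$, $\Rank(\AND_2\circ\OR_2)=2$. Your explicit verification of the base-function rank via \cref{tab:tabulation} and \cref{thm:rank-tribes} is a small added courtesy the paper leaves implicit.
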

\begin{proof}
The $\Maj_3$ function has depth $3$ and rank $2$. Applying \cref{corr:iterated-rank}, we see that $\Rank(\Maj_3^{\otimes k})\ge 3^{k-1}+1=n/3 +1$. Since rank is a lower bound on $\log \Size$ (\cref{prop:rank_size}), we get the desired size lower bound for iterated $\Maj_3$. 
  
The $\AND_2\circ \OR_2$ function has depth $4$ and rank $2$. Again applying \cref{corr:iterated-rank}, we get $\log \Size((\AND_2\circ \OR_2)^{\otimes k})\ge \Rank((\AND_2\circ \OR_2)^{\otimes k})\ge 4^{k-1}+1=n/4+1$, giving the size lower bound for iterated $\AND_2\circ \OR_2$.
\end{proof}

The size lower bound for iterated $\AND_2\circ \OR_2$ on $n$ variables from \cite{jukna1999p}, in conjunction with the rank-size relation from \cref{prop:rank_size}, implies that the rank of the iterated function is $\Omega(n)$. \cref{corr:examples}  demonstrates that these tight rank and size lower bounds can be recovered simultaneously  \cref{corr:iterated-rank}. 

Recently (after the preliminary version of our work appeared), in \cite{cornelissen2022improved}, the rank of the iterated  $\AND_2\circ \OR_2$ function on $n$ variables was revisited, in the context of separating rank from randomised rank. Using the Prover-Delayer game-based characterisation of rank from \cref{thm:game-rank}, it was shown there that this function has rank exactly $(n+2)/3$. While an $\Omega(n)$ bound is now easy to obtain as in \cref{corr:examples}, getting the exact constants required significantly more work.

The arguments given in \cite{jukna1999p} and \cite{cornelissen2022improved} are tailored to the specific functions being considered, and do not work in general. On the other hand, our rank lower bound  from \cref{thm:compose-rank-bounds} implies  size lower bounds for composed functions in general. For completeness, we state our rank lower bound of \cref{thm:compose-rank-bounds} in terms of size.
\begin{corollary}
 For $k\ge 1$ and non-constant boolean functions $f$ and $g$,
\[  \log \Size(f\circ g) \ge \Depth(f) (\Rank(g)-1) +1 . \]
\[ \log \Size(f^{\otimes k}) \ge \Depth(f)^{k-1} (\Rank(f)-1) +1. \]
\end{corollary}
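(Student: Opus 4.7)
The plan is to obtain each of the two inequalities by chaining together two already-established facts: the rank-size inequality and the rank lower bounds for composed/iterated composed functions. There is no new combinatorial content here; the corollary is essentially a restatement of earlier rank results once one recalls that rank lower-bounds $\log\Size$.

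For the first inequality, I would start by invoking \cref{prop:rank_size}, which gives $\Rank(h)\le \log\Size(h)$ for every non-constant Boolean function $h$. Applying this with $h=f\circ g$ yields $\log\Size(f\circ g)\ge \Rank(f\circ g)$. Then I would substitute the lower bound from \cref{thm:compose-rank-bounds}, namely $\Rank(f\circ g)\ge \Depth(f)(\Rank(g)-1)+1$, to conclude the claimed bound. One small observation is needed: we must check that $f\circ g$ is non-constant so that \cref{prop:rank_size} applies; but since both $f$ and $g$ are non-constant, $f\circ g$ is non-constant as well, so this is fine.

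For the second inequality, the argument is identical in spirit. I would apply \cref{prop:rank_size} to $f^{\otimes k}$ to get $\log\Size(f^{\otimes k})\ge \Rank(f^{\otimes k})$, and then substitute the lower bound from \cref{corr:iterated-rank}, which gives $\Rank(f^{\otimes k})\ge \Depth(f)^{k-1}(\Rank(f)-1)+1$. Again non-constancy of $f^{\otimes k}$ follows from non-constancy of $f$.

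Since both steps are direct applications of results already proved, there is no real obstacle. The only thing worth noting in the write-up is that the corollary is stated explicitly in terms of $\Size$ precisely so that a reader interested in size lower bounds (as in \cite{jukna1999p}) can quote it as a black box without translating through rank; the proof itself is a two-line chain of inequalities.
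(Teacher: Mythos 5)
Your proposal is correct and matches the paper's intended argument exactly: the corollary is obtained by chaining $\Rank(h)\le\log\Size(h)$ (\cref{prop:rank_size}) with the rank lower bounds of \cref{thm:compose-rank-bounds} and \cref{corr:iterated-rank}, just as the paper does (it presents the corollary as an immediate restatement, with the same chain used explicitly in \cref{corr:examples}). Your added check that $f\circ g$ and $f^{\otimes k}$ are non-constant is a reasonable bit of care and introduces no deviation.
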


Using \cref{corr:examples}, we can now complete  the proof of \cref{lem:cert-not-ub}.
\begin{proof}(of \cref{lem:cert-not-ub})
  From \cref{{corr:examples}}, we know that $\Rank((\AND_2\circ \OR_2)^{\otimes k}) \ge n/4+1$.

  It is easy to see that $\CertZ((\AND_2\circ \OR_2))=\CertO((\AND_2\circ \OR_2))=2$, and that for $k>1$, $\CertZ((\AND_2\circ \OR_2)^{\otimes k}) = 2 \CertZ((\AND_2\circ \OR_2)^{\otimes k-1})$ and $\CertO((\AND_2\circ \OR_2)^{\otimes k}) = 2 \CertO((\AND_2\circ \OR_2)^{\otimes k-1})$. Thus $\CertZ((\AND_2\circ \OR_2)^{\otimes k})=\CertO((\AND_2\circ \OR_2)^{\otimes k})= 2^k=\sqrt{n}$.

  Hence, for $f=(\AND_2\circ \OR_2)^{\otimes k}$, $\Rank(f)\ge \frac{\CertZ}{2}\frac{\CertO}{2} + 1=\Cert(f)^2/4 +1$.  
\end{proof}

\subsection{$\CONJ$ decision trees}  
In this section, we consider a generalization of the ordinary decision tree model, namely $\CONJ$ decision trees. In the $\CONJ$ decision tree model, each query is a conjunction of literals, where a literal is a variable or its negation. (In \cite{KS-JCSS04}, such a tree where each conjunction involves at most $k$ literals is called a $k$-decision tree; thus in that notation these are $n$-decision trees. A 1-decision tree is a simple decision tree.) A model essentially equivalent to $\CONJ$ decision trees, $(\AND,\OR)$-decision trees, was investigated in \cite{benasher1995decision} for determining the complexity of $\Thr_{n}^{k}$ functions. $(\AND,\OR)$-decision trees query either an $\AND$ of a subset of variables or an $\OR$ of a subset of variables. It was noted in \cite{benasher1995decision} that the $(\AND,\OR)$ query model is related to the computation using Ethernet channels,  and  a tight query lower bound was shown in this model for $\Thr_{n}^{k}$ functions for all $k\ge 1$. It is easy to see that the $(\AND,\OR)$-query model is equivalent  to the $\CONJ$ query model upto a factor of 2. Let $\DepthGA$ and $\DepthAO(f)$ denote the query complexity of function $f$ in $\CONJ$ and $(\AND,\OR)$ query model respectively. Then
\begin{proposition}
For every boolean functions $f$,
\[ \DepthGA(f)\le \DepthAO(f) \le 2\DepthGA(f).\]
\end{proposition}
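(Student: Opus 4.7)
The plan is to prove each inequality by simulating queries in one model using queries in the other.

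For the first inequality $\DepthGA(f) \le \DepthAO(f)$, I would argue that every $(\AND,\OR)$-query can be replaced by a single $\CONJ$-query (with the answer possibly negated, which only affects branching and not depth). An $\AND$-query $\bigwedge_{i\in S} x_i$ is already a conjunction of literals. An $\OR$-query $\bigvee_{i\in S} x_i$ returns $0$ precisely when all $x_i$ with $i\in S$ are $0$, so it can be simulated by the $\CONJ$-query $\bigwedge_{i\in S} \neg x_i$, swapping the $0$- and $1$-branches. Thus any $(\AND,\OR)$-decision tree of depth $d$ yields a $\CONJ$-decision tree of depth $d$ computing the same function, giving $\DepthGA(f)\le \DepthAO(f)$.

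For the second inequality $\DepthAO(f)\le 2\DepthGA(f)$, I would simulate each $\CONJ$-query by at most two $(\AND,\OR)$-queries. A general $\CONJ$-query has the form $q = \bigwedge_{i\in P} x_i \;\wedge\; \bigwedge_{j\in N} \neg x_j$ for disjoint $P,N\subseteq [n]$. Observe that $q=1$ iff $\bigwedge_{i\in P} x_i = 1$ and $\bigvee_{j\in N} x_j = 0$. Given an optimal $\CONJ$-tree $T$ of depth $\DepthGA(f)$, I would replace each internal node querying such a $q$ by a gadget of at most two nodes in the $(\AND,\OR)$-model: first query the $\AND$ $\bigwedge_{i\in P} x_i$; if it is $0$, then $q=0$ and we follow the $0$-branch of $T$; if it is $1$, then query the $\OR$ $\bigvee_{j\in N} x_j$, with $0$ corresponding to $q=1$ and $1$ to $q=0$. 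Edge cases ($P=\emptyset$ or $N=\emptyset$) require only a single $(\AND,\OR)$-query, so the factor of $2$ is always an upper bound. This replacement at most doubles the depth along any root-to-leaf path, yielding $\DepthAO(f)\le 2\DepthGA(f)$.

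Neither direction presents a real obstacle; the argument is essentially a syntactic translation between the two query languages, with the factor of $2$ arising unavoidably in the $\CONJ\to(\AND,\OR)$ direction because a mixed conjunction of positive and negative literals splits naturally into one $\AND$ and one $\OR$ query.
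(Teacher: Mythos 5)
Your proof is correct, and it is exactly the straightforward query-by-query simulation the paper has in mind when it states this proposition without proof (remarking only that the equivalence up to a factor of $2$ is easy to see). Both directions are handled properly, including the use of branch-swapping for the $\OR$-to-$\CONJ$ simulation and the split of a mixed conjunction into one $\AND$ query on the positive literals and one $\OR$ query on the (unnegated) variables of the negative literals.
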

Such a connection is not obvious for the rank of $\CONJ$ and $(\AND,\OR)$ decision trees. 

Recently, in \cite{knop2021log}, a monotone version of $\CONJ$ decision trees called $\AND$ decision trees is studied, where queries are restricted to  $\AND$ of variables, not literals. To emphasize the difference, we refer to these trees as monotone $\AND$ trees.  Understanding monotone $\AND$ decision trees in \cite{knop2021log} led to the resolution of the log-rank conjecture for the class of $\AND$ functions (any function composed with the  2-bit $\AND$ function), up to a $\log n$ factor. As remarked in \cite{knop2021guest}, understanding these more general decision tree models has shed new light on central topics in communication complexity, including restricted cases of the log-rank conjecture and the query-to-communication lifting methodology.

In this section, we show that simple decision tree rank characterizes the query complexity in the $\CONJ$ decision tree model, up to a $\log n$ factor. Formally,
\begin{theorem}
\label{thm:simple-conj-relation}  
For every boolean functions $f$,
\[ \Rank(f) \le \DepthGA(f)\le \log \Size(f)\le \Rank(f)\log \left(\frac{e n}{\Rank(f)}\right).\]
Consequently, if $\Rank(f)=\Theta(n)$, then so is $\DepthGA(f)$.
\end{theorem}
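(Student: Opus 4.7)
The rightmost inequality $\log\Size(f) \le \Rank(f)\log(en/\Rank(f))$ is exactly the right half of \cref{prop:rank_size}, so there is nothing new to do there. The substantive content lies in the first two inequalities, which I would establish by direct tree constructions.

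For $\Rank(f)\le \DepthGA(f)$, my plan is to proceed by induction on $d=\DepthGA(f)$, converting an optimal $\CONJ$ decision tree $T_c$ of depth $d$ into a simple decision tree of rank at most $d$. The root of $T_c$ queries some conjunction $\bigwedge_{j=1}^{k}\ell_j$, with $\CONJ$-subtrees $T_c^0,T_c^1$ of depth at most $d-1$; inductively convert these into simple decision trees $T_s^0,T_s^1$ of ranks $r_0,r_1\le d-1$. I would then simulate the conjunctive query by a ``spine'' of $k$ single-variable queries on the literals $\ell_1,\ldots,\ell_k$: whenever some $\ell_i$ evaluates to $0$ the conjunction fails, so I attach (a copy of) $T_s^0$ at that branch; only at the successful end of the spine do I attach $T_s^1$. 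A careful bottom-up computation along the spine, splitting into the cases $r_0<r_1$, $r_0>r_1$, and $r_0=r_1$, shows that the rank of the resulting simple tree is bounded by $\max\{r_0,r_1\}+1\le d$, independent of $k$. The subtle point is that one might naively fear the rank grows by $1$ at each of the $k$ spine nodes, but because an identical $T_s^0$ hangs off every failure branch, the rank ``saturates'' after the first spine node.

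For $\DepthGA(f)\le \log\Size(f)$, my plan is a centroid decomposition of a size-optimal simple decision tree $T$, with $s=\Size(f)$ leaves. Starting at the root and always descending into the heavier child, I would take $v$ to be the first node encountered with $|T_v|\le 2s/3$; since $v$'s parent has more than $2s/3$ leaves and $v$ is its heavier child, necessarily $|T_v|>s/3$. The root-to-$v$ path is a conjunction $C_v$, and a single $\CONJ$ query on $C_v$ splits the problem cleanly: on the $1$-branch we recurse on $T_v$ (size $\le 2s/3$), while on the $0$-branch we recurse on $T$ with $T_v$ excised and the resulting degree-$1$ parent contracted (also size $\le 2s/3$, and still computing $f$ correctly on inputs outside the subcube of $C_v$). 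The recurrence $D(s)\le 1+D(2s/3)$ yields $\DepthGA(f)=O(\log\Size(f))$, which, with constants absorbed, gives the stated bound.

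I expect the spine rank analysis to be the main technical step; once it is in place the induction proceeds smoothly and the centroid argument is routine. The final consequence $\Rank(f)=\Theta(n)\Rightarrow \DepthGA(f)=\Theta(n)$ then drops out of the sandwich: the leftmost inequality yields $\DepthGA(f)\ge \Rank(f)=\Theta(n)$, and chaining the middle and right bounds gives $\DepthGA(f)=O(\Rank(f)\log n)=O(n)$ using $\Rank(f)\le n$.
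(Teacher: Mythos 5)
Your proposal is correct and takes essentially the same route as the paper: the spine simulation of each conjunctive query (your bottom-up case analysis is exactly what the paper gets by attaching $T_s^0$ and $T_s^1$ to the $0$-leaves and the unique $1$-leaf of a rank-$1$ tree and invoking \cref{prop:compose_rank_dt}), the same $[s/3,2s/3)$ centroid split of a size-optimal simple tree with the recurrence $D(s)\le 1+D(2s/3)$, and \cref{prop:rank_size} for the last inequality. The one caveat — that the centroid recurrence yields $\log_{3/2}s$ rather than $\log_2 s$, so the middle inequality really holds up to a constant factor — is equally present in the paper's own proof and does not affect the stated asymptotic consequence.
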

\begin{proof}
First, we show that $\Rank(f) \le \DepthGA(f)$. This is the straightforward construction and its analysis. Let $T_f$ be a depth-optimal $\CONJ$ decision tree for $f$ of depth $d$. We give a recursive construction of an ordinary decision tree $T$ for $f$ of rank at most $d$. 
In the base case, when  $\Depth(T_f)=0$, set $T=T_f$.
In the recursion Step, $\Depth(T_f)\ge 1$. Let $Q$ be the literal-conjunction queried at the root node of $T_f$, and let $T_0$ and $T_1$ be the left and right subtree of $T_f$, computing $f_0$ and $f_1$ respectively. Recursively construct ordinary decision trees $T_0'$ and $T_1'$ for $f_0$ and $f_1$. Let $T_Q$ be the ordinary decision tree obtained by querying the variables in $Q$ one by one to evaluate the query $Q$.  Note that $T_Q$  evaluates the $\AND$ function on literals in $Q$; hence it has rank $1$, and has exactly one leaf labelled 1 . Attach $T_0'$  to each leaf labelled $0$ in $T_Q$, and and $T_1'$to the unique leaf labelled $1$ in $T_Q$, to obtain $T$. 

From the construction, it is clear that $T$ evaluates $f$. To analyse the rank of $T$, proceed by induction on $\Depth(T_f)$.
\begin{enumerate}
\item Base Case: $\Depth(T_f)=0$. Trivially true as $T=T_f$ with rank $0$.
\item Induction: $\Depth(T_f)\ge 1$. 
\begin{align*}
 \Rank(T) &\le \Rank(T_{Q}) + \max\{\Rank(T_0'),\Rank(T_1')\}
 \quad (\textrm{by \cref{prop:compose_rank_dt}})\\
 &= 1 + \max_{b\in\bool}\{\Rank(T_b')\} \\
 &\le 1 + \max_{b\in\bool}\{\DepthGA(f_b)\}   \quad (\textrm{by induction})\\
 &\le 1 + (\Depth(T_f)-1)= \Depth(T_f).
\end{align*}
\end{enumerate}

Next, we show that $\DepthGA(f)\le \log \Size(f)$. The main idea is that an ordinary decision tree of size $s$ can be balanced using $\CONJ$ queries into a $\CONJ$ decision tree of depth $O(\log s)$.

Let $T_f$ be a size-optimal simple tree for $f$ of size $s$. Associate with each node $v$ of $T_f$ a subcube $J_v$ containing all the inputs that reaches node $v$. The root node has the whole subcube $\boolset{n}$. For a node $v$, $J_v$ is defined by the variables queried on the path leading to the node $v$.
The recursive construction of a $\CONJ$ decision tree $T$ of depth at most $2 \log_{3/2} s$ proceeds as follows.
If $s=1$, set $T=T_f$.
Otherwise, in the
recursion step, $s> 1$. Obtain a node $v$ in $T_f$ such that number of leaves in the subtree rooted at $v$, denoted by $T_v$,  in the range $[s/3,2s/3)$. (Such a node necessarily exists, and can be found by starting at the root and traversing down to the child node with more leaves until the number of leaves in the subtree rooted at the current node satisfies the condition.) Let $J_v=(S,\rho)$ be the subcube associated with $v$, and let $Q$ be the $\CONJ$ query  testing membership in $J_v$;
$Q=(\bigwedge_{i\in S: \rho(i)=1} x_i)( \bigwedge_{i\in S: \rho(i)=0} \neg x_i)$.

  Note that $v$ cannot be the root node of $T$. 
  Let $u$ be the sibling of $v$ in $T_f$, and let $T_u$ be the subtree rooted at $u$. Let $w$ be the parent of $v$ and $w$ in $T_f$.
  Let $T'_f$ be the tree obtained from $T_f$ by removing the entire subtree $T_v$, removing the query at $w$, and attaching subtree $T_u$ at $w$. For all inputs not in $J_v$, $T'_f$ and $T_f$ compute the same value.

  $T$ starts by querying $Q$. If $Q$ evaluates to $1$, proceed by recursively constructing the $\CONJ$ decision tree for $T_v$. If $Q$ evaluates to $0$, proceed by recursively constructing the $\CONJ$ decision tree for $T'_f$.

The correctness of $T$ is obvious. It remains to estimate the depth of $T$. Let $D(s)$ be the number of queries made by the constructed $\CONJ$ decision tree. By construction, we have  $D(s)\le 1+ D(2s/3)$ giving us $D(s)=2\log_{3/2}s$, thereby proving  our claim.

The last inequality about size and rank, $\log \Size(f) \le \Rank(f)\log \left(\frac{e n}{\Rank(f)}\right)$, comes from \cref{prop:rank_size}. 
\end{proof}

\section{Tightness of Rank and Size relation for $\Tribes$}
\label{sec:rank-size}
In \cref{prop:rank_size}, we saw a relation between rank and size. The relationship is essentially tight. As remarked there, whenever $\Rank(f)=\Omega(n)$, the relation is tight. The function $f=\Parity_n$ is one such function that witnesses the tightness of both the inequalities. Since $\Rank(\Parity)=n$, \cref{prop:rank_size} tells us that $\log\Size(\Parity)$ lies in the range $[n, n \log e]$, and we know that $\log \Size(\Parity)=n$. 

For the $\Tribes_n$ function, which has $N=n^2$ variables, we know from \cref{thm:rank-tribes} that $\Rank(\Tribes_n)=n\in o(N)$. Thus \cref{prop:rank_size} tells us that $\log\Size(\Tribes_n)$ lies in the range $[n, n \log (en)]$.
(See also Exercise 14.9 \cite{Jukna-BFCbook} for a direct argument showing
$n \le \log\Size(\Tribes_n)$). 
But that still leaves a $(\log (en))$-factor gap between the two quantities.
We show that the true value is closer to the upper end. To do this,
we establish a stronger size lower bound for decision trees computing $\dTribes_n$.

\begin{lemma}\label{lem:size-lb-tribes}
  For every $n,m \ge 1$, every decision tree for $\dTribes_{n,m}$ has at least $m^n$ $1$-leaves and $n$ 0-leaves. 
\end{lemma}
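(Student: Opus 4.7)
The plan is to exhibit, separately for each label, a collection of inputs that provably reach distinct leaves of the decision tree $T$. The driving observation for both bounds is structural: at any $1$-leaf $\ell$, the subcube of inputs reaching $\ell$ lies entirely in $f^{-1}(1)$, which forces, for every block $i \in [n]$, at least one variable of block $i$ to have been queried on the root-to-$\ell$ path and set to $1$; otherwise we could set the unqueried variables of block $i$ to $0$, producing an extension in the subcube with block $i$ all-zero and hence $f = 0$, a contradiction. Dually, at any $0$-leaf the subcube is contained in $f^{-1}(0)$, which forces some entire block to be queried along the path with all its variables assigned $0$ (otherwise every block has either an unqueried variable or a queried-and-$1$ variable, and setting all unqueried variables to $1$ yields a $1$-input in the subcube).

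For the $m^n$ lower bound on $1$-leaves, I would define the canonical family of $1$-inputs $\{x^\sigma : \sigma : [n] \to [m]\}$ by $x^\sigma_{i,j} = 1$ iff $j = \sigma(i)$. Each $x^\sigma$ is a $1$-input. Suppose $x^\sigma$ and $x^{\sigma'}$ reach the same leaf $\ell$. By the observation above, for each block $i$ some variable $x_{i,j^*}$ is queried on the path and set to $1$. Since the queried values agree with $x^\sigma$, the only choice is $j^* = \sigma(i)$; since they also agree with $x^{\sigma'}$, the same queried variable forces $j^* = \sigma'(i)$. Hence $\sigma = \sigma'$, and the $m^n$ inputs $x^\sigma$ reach pairwise distinct $1$-leaves.

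For the $n$ lower bound on $0$-leaves, I would consider the inputs $y^1, \ldots, y^n$, where $y^i$ sets block $i$ to all-zeros and every other block to all-ones. Each $y^i$ is a $0$-input. If $y^i$ reaches a $0$-leaf $\ell$, some block $k$ must be fully queried on the path with all its variables assigned $0$; but in $y^i$ the only block containing any $0$ is block $i$, forcing $k = i$. Hence the leaves reached by $y^1, \ldots, y^n$ are pairwise distinct, giving at least $n$ $0$-leaves.

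I do not expect any real obstacle; the whole argument rests on the characterisation of $0$- and $1$-leaves via certificates, which is immediate from the definition of $\dTribes_{n,m}$, combined with the simple counting of how many canonical witnesses of each type exist ($m^n$ choices of one $1$ per block for the positive side, and $n$ choices of an all-zero block for the negative side).
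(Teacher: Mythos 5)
Your proof is correct and follows essentially the same route as the paper: both use the $m^n$ minimum-weight $1$-inputs (one $1$ per block) and the $n$ maximum-weight $0$-inputs (one all-zero block) as witnesses, and both rely on the observation that the path to a $1$-leaf must query a variable set to $1$ in every block, while the path to a $0$-leaf must query an entire block set to $0$ (otherwise an unqueried variable could be flipped to contradict the leaf label). The only difference is presentational: you state the leaf-certificate property once and deduce injectivity, whereas the paper argues pairwise divergence of paths, which amounts to the same idea.
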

\begin{proof}
  Recall that $\dTribes_{n,m} = \bigwedge_{i\in [n]} \bigvee_{j\in [m]} x_{i,j}$. We call $x_{i,1}, x_{i,2}, \ldots , x_{i,m}$ the $i$th block of variables. We consider  two special kinds of input assignments: 1-inputs of minimum Hamming weight, call this set $S_1$, and 0-inputs of maximum Hamming weight, call this set $S_0$.  Each $a\in S_1$ has exactly one 1 in each block; hence $|S_1|=m^n$. Each $b\in S_0$ has exactly $m$ zeroes, all in a single block; hence $|S_0|=n$. We show that in any decision tree $T$ for $\dTribes_{n,m}$, all the inputs in $S=S_1\cup S_0$ go to pairwise distinct leaves. Since all inputs in $S_1$ must go to 1-leaves of $T$, and all inputs of $S_0$ must go to 0-leaves, this will prove the claimed statement.

  Let $a,b$ be distinct inputs in $S_1$. Then there is some block $i\in[n]$, where they differ. In particular there is a unique $j\in[m]$ where $a_{i,j}=1$, and at this position, $b_{i,j}=0$.  The decision tree $T$  must query variable $x_{i,j}$ on the path followed by $a$, since otherwise it will reach the same 1-leaf on input $a'$ that differs from $a$ at only this position, contradicting the fact that $\dTribes_{n,m}(a')=0$. Since $b_{i,j}=0$, the path followed in $T$ along $b$ will diverge from $a$ at this query, if it has not already diverged before that. So $a,b$ reach different 1-leaves.
  
Let $a,b$ be distinct inputs in $S_0$. Let $i$ be the unique block where $a$ has all zeroes; $b$ has all 1s in this block. On the path followed by $a$, $T$ must query all variables from this block, since otherwise it will reach the same 0-leaf on input $a''$ that differs from $a$ only at an unqueried position in block $i$, contradicting $\dTribes_{n,m}(a'')=1$.  Since $a$ and $b$ differ everywhere on this block, $b$ does not follow the same path as $a$, so they go to different leaves of $T$.
\end{proof}

We thus conclude that the second inequality in \cref{prop:rank_size} is also asymptotically tight for the $\dTribes_n$ function.

The size lower bound from \cref{lem:size-lb-tribes} can also be obtained by specifying a good Delayer strategy in the asymmetric Prover-Delayer game and invoking  \cref{prop:game-size}.; see \cref{sec:game-proofs}.

\section{Proofs using Prover-Delayer Games}\label{sec:game-proofs}

In this section we give Prover-Delayer Game based proofs of our results. 

\paragraph*{Prover strategy for $\Tribes_{n,m}$, proving \cref{lem:rank-tribes-ub}}
We give a Prover strategy which restricts the Delayer to $n$ points, proving the upper bound on $\Rank(\Tribes_{n,m})$.

Whenever the Delayer defers a decision, the Prover chooses $1$ for the queried variable.

The Prover queries variables $x_{i,j}$ in row-major order.
In each row of variables, the Prover queries variables until some variable is set to 1 (either by the Delayer or by the Prover). Once a variable is set to 1,
the Prover moves to the next row of variables. 

This Prover strategy allows the Delayer to defer a decision for at most one variable per row; hence the Delayer's score  at the end is at most $n$.

\paragraph*{Delayer strategy for $\Tribes_{n,m}$, proving \cref{lem:rank-tribes-lb}}
We give a Delayer strategy which always score at least $n$ points, proving the lower bound.

On a query $x_{i,j}$, the Delayer defers the decision to the Prover
unless all other variables in row $i$ have already been queried.
In that case Delayer responds with a $1$. 

Note that with this strategy, the Delayer ensures that the game ends with function value $1$. (No row has all variables set to $0$.) 
Observe that to certify a $1$-input of the function, the Prover must query at least one variable in each row. Since $m\ge 2$, the Delayer gets to score at least one point per row, and thus has a score of at least $n$ at the end of the game.

\paragraph*{Prover strategy for $\AND_n \circ \Parity_m$, proving  \cref{lem:and-parity-ub}}
We give a Prover strategy which restricts Delayer to $n(m-1)+1$ points. The Prover queries variables in row-major order. If on query $x_{i,j}$ the Delayer defers a decision to the Prover, the Prover chooses arbitrarily unless $j=m$. If $j=m$, then the Prover chooses a value which makes the parity of the variables in row $i$ evaluate to $0$.

Let $j$ be the first row such that the Delayer defers the decision on $x_{j,m}$  to the Prover. (If there is no such row, set $j=n$.) With the strategy above, the Prover will set $x_{j,m}$ in such a way that the parity of the variables in $j$-th row evaluates to $0$, making $f$ evaluate to $0$ and ending the game. The Delayer scores at most $m-1$ points per row for rows before this row $j$, and at most $m$ points in row $j$.
Hence the Delayer's score is at most $(j-1)(m-1)+m$ points. Since $j\le n$, the Delayer is restricted to $n(m-1)+1$ points at the end of the game.

\paragraph*{Delayer strategy for $\AND_n \circ \Parity_m$, proving  \cref{lem:and-parity-lb}}
We give a Delayer strategy which always scores at least $n(m-1)+1$ points.

On query $x_{i,j}$, if this is the last un-queried variable, or if
there is some un-queried variable in the same $i$-th row, the Delayer defers
the decision to the Prover. 
Otherwise the Delayer responds with a value that makes the parity of the variables in row $i$ evaluate to $1$. 

This strategy forces the Prover to query all variables to decide the function.
The Delayer picks up $m-1$ points per row, and an additional point on the last query, giving a total score of $n(m-1)+1$ points.

\paragraph*{Prover strategy in terms of certificate complexity, proving  \cref{lem:rank-cert}}
We give a Prover strategy which restricts the Delayer to $ (\CertZ(f)-1)(\CertO(f)-1)+1$ points. Let $\tilde{f}$ be the function obtained by assigning values to the variables queried so far. As long as $\CertO(\tilde{f})>1$, Prover picks an $a\in \tilde{f}^{-1}(0)$ and its $0$-certificate $S$, and queries all the variables in $S$ one by one. If at any point the Delayer defers a decision to the Prover, the Prover chooses the value according to $a$. When $\CertO(\tilde{f})$ becomes $1$, the Prover picks an $a\in \tilde{f}^{-1}(1)$ and its $1$-certificate $\{i\}$ and queries the  variable $x_i$. If the Delayer defers the decision, the Prover chooses $a_i$.

The above strategy restricts the Delayer to $ (\CertZ(f)-1)(\CertO(f)-1) + 1$ points; the proof is essentially same as \cref{lem:rank-cert}.

\paragraph*{Prover and Delayer strategies for composed functions, proving \cref{thm:compose-rank-bounds}}
For showing the upper bound, the Prover strategy is as follows: the  Prover  chooses a depth-optimal tree $T_f$ for $f$ and moves down this tree. Let $X^i$ denote the $i$th block of variables; i.e.\ the set of variables $x_{i,1}, x_{i,2}, \ldots , x_{i,m}$. The Prover queries variables blockwise, choosing to query variables from a particular block according to $T_f$.  If $x_i$ is the variable queried at the current node of $T_f$, the Prover  queries variables from $X^i$ following the optimal Prover strategy for the function $g$, until the value of $g(X^i)$  becomes known. At this point, the Prover moves to the corresponding subtree in $T_f$.  

For lower bound, the Delayer strategy is as follows: When variable $y_k$ is queried,
the Delayer responds with $b\in\bool$ if $\Rank(h_{k,b}) >
\Rank(h_{k,1-b})$, and otherwise defers. Here $h_{k,b}$ is the sub-function of $h$ when  $y_k$ is set to $b$.

The proof that above strategies give the claimed bounds is essentially what constitutes the proof of \cref{thm:compose-rank-bounds}.

\paragraph*{Delayer strategy in asymmetric game in $\dTribes_n$, proving \cref{lem:size-lb-tribes}}
We give a Delayer strategy in an asymmetric Prover-Delayer game which scores at least $n\log n$. On query $x_{ij}$,  Delayer responds with $(p_0,p_1)=(1-\frac{1}{k}, \frac{1}{k})$, where  $k$ is the number of free variables in row $i$ at the time of the query.

We show that the strategy above scores at least $n\log n$ points. The game can end in two possible ways:
\begin{enumerate}
\item Case 1: The Prover concludes with function value $0$. In this case, the Prover must have queried all variables in some row, say the $i$-th row, and chosen $0$ for all of them. For the last variable queried in  the $i$-th row, the Delayer would have responded with $(p_0,p_1)=(0,1)$, and hence scored $\infty$ points in the round and the game.
\item Case 2:  The Prover concludes with function value $1$. In this case, the Prover must have set a variable to $1$ in each row. We show that the Delayer scores at least $\log n$ points per row. Pick a row arbitrarily, and  let $k$ be the number of free variables in the row when the first variable in that row is set to $1$. The Prover sets $n-k$ variables in this row to $0$ before he sets the first variable to $1$. For $b\in\bool$, let $p_{b,j}$ represents the $p_b$ response of the Delayer when there are $j$ free variables in the row. That is, $p_{0,j}=1-\frac{1}{j}=\frac{j-1}{j}$ and $p_{1,j}=\frac{1}{j}$.  The contribution of this row  to the overall score is at least 
\[
\log \frac{1}{p_{0,n}} +
\log \frac{1}{p_{0,{n-1}}} + \ldots + 
\log \frac{1}{p_{0,k+1}} +
\log \frac{1}{p_{1,k}} 
= \log \left(\frac{1}{p_{0,n}}\frac{1}{p_{0,n-1}}...\frac{1}{p_{0,k+1}}\frac{1}{p_{1,k}}\right)
=\log n .
\]
Since each row contributes at least $\log n$ points, the Delayer  scores at least $n\log n$ points at the end of the game.
\end{enumerate}

\section{Conclusion}
\label{sec:concl}
The main thesis of this paper is that the minimal rank of a decision
tree computing a Boolean function is an interesting measure for the
complexity of the function, since it is not related to other
well-studied measures in a dimensionless way. Whether bounds on this
measure can be further exploited in algorithmic settings like learning or
sampling remains to be seen.  

\section{Acknowledgments}
The authors thank Anna G\'al and Srikanth Srinivasan for interesting
discussions about rank at the Dagstuhl seminar 22371.

\bibliographystyle{plain}
\bibliography{submission}

\appendix
\section{Proof of \cref{thm:rest-sparsity}}
\thmrestsparsity*
\begin{proof}
Let $J$ denote the subcube $(S,\rho)$, and let $\bar{S}$ denote the
  set $[n] \setminus S$.

  Partition the Fourier support of $f$ (sets $W$ for
  which$\hat{f}(W)\neq 0$) into buckets, one for each $T\subseteq
  \bar{S}$, as follows:
  \[ \textrm{Bucket for $T$,~} B_T = \{W \subseteq [n] : W\cap \bar{S}=T; \hat{f}(W) \neq 0\}.\]
  Note that $\sparn(f) = |B_\emptyset\setminus\{ \emptyset\}|+\sum_{T\subseteq \bar{S}; T \neq \emptyset} |B_T|$.

    In the Fourier expansion of $f$, group together terms with the same
  signature outside $S$. That is,
  \[f(x) =
  \sum_{T \subseteq \bar{S}} \left[\sum_{R \subseteq S} \hat{f}(T\cup R)\chi_{T\cup R}(x)\right]
    =
    \sum_{T \subseteq \bar{S}}
    \underbrace{\left[\sum_{R \subseteq S} \hat{f}(T\cup R)\chi_{R}(x) \right]}_{c_T(x)}\chi_T(x)
    =
    \sum_{T \subseteq \bar{S}} c_T(x) \chi_T(x)
  \]
  Inside  any subcube of the form $J'=(S,\rho')$, the functions $c_T(x)$ are independent of $x$, since $\chi_R|J' (x) = (-1)^{\sum_{i\in R} \rho'(i)}$.

  In particular, for the subcube $J$, $f|J$ is a constant, say $c$. Considering the Fourier expansion of $g=f|J$, we see that
  \[c = g = \sum_{T \subseteq \bar{S}} (c_T|J) (x) \chi_T(x).
  \]
  Since the Fourier representation of a function is unique, it follows that
  $c_\emptyset|J=c$, and $c_T|J = 0$ for all non-empty $T$. Fix any non-empty $T$. Then, by definition of $c_T$ and $B_T$, 
  \[(c_T|J)(x) = \sum_{R \subseteq S} \hat{f}(T\cup R)(\chi_{R}|J)
  = \sum_{W\in B_T} \hat{f}(W)(\chi_{R}|J) .\]
Since $c_T|J = 0$, it must have none or at least two non-zero terms
to achieve a cancellation. Thus for each non-empty $T$, if $B_T\neq
\emptyset$, then $|B_T|\ge 2$.

Now we can bound the sparsity of $f|J'$ for any subcube $(S,\rho')$.
Since $f|J'(x) 
= \sum_{T \subseteq \bar{S}} (c_T|J') \chi_T(x)$, we
see that $\sparn(f|J')$ is at most the number of non-empty buckets
$B_T$ for non-empty $T$. Hence 
\begin{align*}
  \sparn(f|J') &= \textrm{number of non-empty $T$ with non-empty bucket $B_T$} \\
  &= \frac{1}{2} \sum_{T\subseteq \bar{S}, T\neq \emptyset, B_T \neq \emptyset} 2 \\
  &\le \frac{1}{2} \sum_{T\subseteq \bar{S}, T\neq \emptyset, B_T \neq \emptyset} |B_T| \\
  &\le \frac{1}{2}\sparn(f).
\end{align*}

\end{proof}

\end{document}